\newtheorem{theorem}{Theorem}[section]
\newtheorem{proposition}[theorem]{\textbf{Proposition}}
\newtheorem{lemma}[theorem]{\textbf{Lemma}}
\newtheorem{corollary}[]{\textbf{Corollary}}
\newtheorem{example}{Example}[section]
\newcommand{\ie}[0]{\textit{i.e.}}
\newcommand{\wrt}[0]{\textit{w.r.t.}}
\newcommand{\st}[0]{\textit{s.t.}}
\newcommand{\tom}{\textsf{Tom}}
\newcommand{\maude}{\textsf{Maude}}
\newcommand{\aprove}{\textsf{AProVE}}
\newcommand{\TTT}{\textsf{TTT2}}
\newcommand{\java}{\textsf{Java}}
\newcommand{\otrsep}{,}
\DeclareMathAlphabet{\mathcal}{OMS}{cmsy}{m}{n}
\newcommand{\FF}[0]{\mathcal{F}}
\newcommand{\XX}[0]{\mathcal{X}}
\newcommand{\RR}[0]{\mathcal{R}}
\newcommand{\CC}[0]{\mathcal{C}}
\newcommand{\CCt}[0]{\mathcal{L}}
\newcommand{\DD}[0]{\mathcal{D}}
\newcommand{\TT}[2]{\mathcal{T}(#1,#2)}
\newcommand{\EE}[0]{\mathcal{E}}
\newcommand{\LL}[0]{\mathcal{L}}
\newcommand{\PP}[0]{P}
\newcommand{\TF}[0]{\mathcal{T}({\FF})}
\newcommand{\TFX}{\TT{\FF}{\XX}}
\newcommand{\TC}[0]{\mathcal{T}({\CC})}
\newcommand{\TCX}{\TT{\CC}{\XX}}
\newcommand{\ETCX}{{\mathcal{T}_\mathcal{E}(\CC,\XX)}}
\newcommand{\ETCXap}{\mathcal{T}_\mathcal{E}(\CC\cup\{\ap\},\XX)}
\newcommand{\PPos}[0]{\mathcal{P}os}
\newcommand{\stt}[2]{\ensuremath{#1_{|#2}}}
\newcommand{\rmp}[3]{\ensuremath{{#1\left[#3\right]_{#2}}}}
\newcommand{\var}[1]{\mathcal{V}ar\left({#1}\right)}
\newcommand{\fvar}[1]{\mathcal{F}\mathcal{V}ar\left({#1}\right)}
\newcommand{\mvar}[1]{\mathcal{M}\mathcal{V}ar\left({#1}\right)}
\newcommand{\dom}[1]{\mathcal{D}om\left({#1}\right)}
\newcommand{\ar}[0]{ar}
\newcommand{\ra}[0]{\rightarrowtriangle}
\newcommand{\raM}[0]{\Rightarrow}
\newcommand{\ovr}[1]{\overrightarrow{#1}}
\DeclareRobustCommand\longtwoheadleftarrow
\DeclareRobustCommand\longtwoheadrightarrow
\newcommand{\multievalM}[1]{\evalM_{#1}^{*}}
\newcommand{\eval}[0]  {\longrightarrow}
\newcommand{\evalM}[0]  {\Longrightarrow}
\newcommand{\evaleps}[1]  {\stackrel{#1}{\eval}}
\newcommand{\plaint}{{term}}
\newcommand{\prefix}[2]{{#1}<{#2}}
\newcommand{\tupsymbol}{\cdot}
\newcommand{\tup}[1]{\llparenthesis #1 \rrparenthesis}
\newcommand{\foo}{{\varphi}}
\newcommand{\valuet}{{value}}
\newcommand{\addt}{{additive}}
\newcommand{\minus}{\setminus}%
\newcommand{\plus}{+}
\newcommand{\semg}[1]{\llbracket #1 \rrbracket}
\newcommand{\matcha}     {\mathrel{\mbox{$\prec\hspace{-0.4em}\prec$}}}
\newcommand{\nmatcha}     {\mathrel{\mbox{$\prec\hspace{-0.5em}\not\prec$}}}
\newcommand{\Rr}[0]{\mathfrak{R}}
\newcommand{\RrAP}[0]{\Rr_{\ap}}
\newcommand{\RrC}[0]{\Rr_{\minus}}
\newcommand{\RrCat}[0]{\Rr_{\minus}^{\at}}
\newcommand{\rmAddL}[0]{A1}
\newcommand{\rmAddR}[0]{A2}
\newcommand{\rmEmptyF}[0]{E1}
\newcommand{\rmEmptyA}[0]{E2}
\newcommand{\rmMinusMV}[0]{M1}
\newcommand{\rmMinusMB}[0]{M2}
\newcommand{\rmMinusMP}[0]{M3}
\newcommand{\rmMinusVM}[0]{M4}
\newcommand{\rmMinusVMp}[0]{\rmMinusVM'}
\newcommand{\rmMinusBM}[0]{M5}
\newcommand{\rmMinusPM}[0]{M6}
\newcommand{\rmMinusFG}[0]{M8}
\newcommand{\rmMinusFF}[0]{M7}
\newcommand{\rmMinusAM}[0]{M9}
\newcommand{\rmMinusMA}[0]{M10}
\newcommand{\rmDistAdd}[0]{S1}
\newcommand{\rmDistAt}[0]{S2}
\newcommand{\metaLevel}[1]{\overline{#1}}
\newcommand{\meta}[1]{\metaLevel{#1}}
\newcommand{\varCt}[0]{z}
\newcommand{\varVar}[0]{\metaLevel{V}}
\newcommand{\tm}[0]{{t}}
\newcommand{\vm}[0]{{v}}
\newcommand{\wm}[0]{{w}}
\newcommand{\ap}[0]{\mathbf{!}}
\newcommand{\at}[0]{\mathbin{@}}
\newcommand{\antit}{{anti-term}}
\newcommand{\TRord}[0]{\mathfrak{T}^{<}}
\newcommand{\TRcomp}[0]{\mathfrak{T}^{\minus}}
\newcommand{\TRat}[0]{\mathfrak{T}^{\at}}
\newcommand{\TRall}[0]{\mathfrak{T^{ap}}}
\newcommand{\TRorder}[0]{\mathfrak{T}}
\newcommand{\semgset}[1]{\semg{#1}}
\newcommand{\validsubset}[2]{{#1}~\text{is a valid subset of}~{#2}}
\newcommand{\minimum}[0]{minimum}
\newcommand{\Smin}[0]{S^{m}}
\newcommand{\card}[1]{\vert #1 \vert}
\newcommand{\erw}{extended rewrite rule}
\newcommand{\er}{extended rule}
\newcommand{\rpo}{\prec}
\newcommand{\sizeorder}{\supset}
\newcommand{\varp}{\mathfrak{pos}}
\newcommand{\symp}{\mathfrak{sym}}
\newcommand{\sizep}{\mathfrak{size}}
\newcommand{\patvar}{\mathfrak{var}}
\newcommand{\allPos}{\mathfrak{Pos}}
\title{Generic Encodings of Constructor Rewriting Systems}
\author{Horatiu Cirstea
\institute{Universit{\'e} de Lorraine -- LORIA\\
}
\email{Horatiu.Cirstea@loria.fr}
\and
Pierre-Etienne Moreau
\institute{Universit{\'e} de Lorraine -- LORIA\\
}
\email{Pierre-Etienne.Moreau@loria.fr}
}
\begin{document}
\maketitle 
\begin{abstract}
Rewriting is a formalism widely used in computer science and
mathematical logic. The classical formalism has been extended, in the
context of functional languages, with an order over the rules and, in
the context of rewrite based languages, with the negation over
patterns.
We propose in this paper a concise and clear algorithm computing the
difference over patterns which can be used to define generic encodings
of constructor term rewriting systems with negation and order into
classical term rewriting systems.
As a direct consequence, established methods used for term rewriting
systems can be applied to analyze properties of the extended systems.
The approach can also be seen as a generic compiler which targets any
language providing basic pattern matching primitives.
The formalism provides also a new method for deciding if a set of
patterns subsumes a given pattern and thus, for checking the presence
of useless patterns or the completeness of a set of patterns.

\end{abstract}

\section{Introduction}
\label{se:intro}
%
Rewriting is a very powerful tool used in theoretical studies as well
as for practical implementations. It is used, for example, in
semantics in order to describe the meaning of programming
languages, but also in automated reasoning when describing by
inference rules a logic, a theorem prover or a constraint
solver. It is also used to compute in systems making the notion of
rule an explicit and first class object, like Mathematica~\cite{IMS-2015-Marin},
{\maude}~\cite{Maude2:03}, or
{\tom}~\cite{BallandBKMR-RTA2007}.  Rewrite rules, the core concept in
rewriting, consist of a pattern that describes a schematic situation
and the transformation that should be applied in that particular
case. The pattern expresses a potentially infinite number of instances
and the application of the rewrite rule is decided locally using a
(matching) algorithm which only depends on the pattern and its
subject.

Comparing to the general rewriting formalism where rule application is
decided locally and independently of the other rules, rule-based and
functional programming languages generally use an order over the
rules. This is not only convenient from the implementation point of
view but it also allows more concise and clear specifications in some
specific cases. In particular, this order might avoid an exhaustive
specification of alternative and default cases.
For instance, if we consider a term representation of motor vehicles
we can use the following list of rules
\[\arraycolsep=1.4pt
\begin{array}{ll}
[
  &
  paint(car(x,suv)) \ra red 
  \otrsep
  \\
  &  
  paint(car(electric,x)) \ra blue
  \otrsep
  \\
  &
  paint(car(diesel,y)) \ra red
  \otrsep
  \\
  &
  paint(car(x,y)) \ra white 
  \otrsep
  \\
  &
  paint(x)  \ra  red
~~]
\end{array}
\]
for the assignment of an imaginary eco-label: all electric cars but
the SUVs (which are red) are blue, diesel cars are red and the
remaining cars are white; all the other vehicles are red.

Patterns express positive conditions and we have used the term
$car(electric,x)$ to specify electric cars of any style. Negation is
nevertheless intrinsic to human thinking and most of the time when
searching for something, we base our patterns on both positive and
negative conditions. We would like for example to specify that we
search for all cars that are not SUVs, or for all cars which are
neither SUV nor diesel.
The notion of pattern has been extended to the one of
anti-pattern~\cite{KirchnerKM-2007}, i.e. patterns that may contain
complement symbols, and implemented in tools featuring pattern
matching like~{\tom}~\cite{jsc2010} and
Mathematica~\cite{IMS-2015-Marin}. With such an approach the above
statements can be easily expressed as $car(x, \ap suv)$ and
respectively $car(\ap diesel,\ap suv)$, and the eco-labeling can be
expressed by the following list of rules with anti-patterns
%
\[\arraycolsep=1.4pt
\begin{array}{ll}
[
&
  paint(car(electric,!suv)) \ra blue 
  \otrsep
  \\
  &
  paint(car(!diesel,!suv))) \ra white
  \otrsep
  \\
  &
  paint(x)  \ra  red 
~~]
\end{array}
\]

Similarly to plain term rewriting systems (TRS), {\ie} TRS without
anti-patterns and ordered rules, it is interesting to analyze the
extended systems {\wrt} to their confluence, termination and
reachability properties, for example.  Generally, well-established
techniques and (automatic) tools used in the plain case cannot be
applied directly in the general case.
There have been several works in the context of functional programming
like, for
example~\cite{KraussSTFG11,toplas-GieslRSST11,GieslBEFFOPSSST14,AvanziniLM15}
to cite only a few, but they are essentially focused on powerful
techniques for analyzing the termination and complexity of functional
programs with ordered matching statements.
We are interested here in a transformation approach which can be used
as an add-on for well-established analyzing techniques and tools but
also as a generic compiler for ordered TRS involving anti-patterns
which could be easily integrated in any language providing rewrite
rules, or at least pattern matching primitives.
For example, if we consider trucks and cars with 
$4$ fuel types and $3$ styles the transformation  we propose will
provide the following 
order independent set of rules:
\[\arraycolsep=1.4pt
\begin{array}{rl}
\{
  &
  paint(car(electric,sedan)) \ra blue
  \otrsep
  \\
  &
  paint(car(electric,minivan)) \ra blue
  \otrsep
  \\
  &
  paint(car(hybrid,sedan)) \ra white
  \otrsep
  \\
  &
  paint(car(hybrid,minivan)) \ra white
  \otrsep
  \\
  &
  paint(car(gas,sedan)) \ra white
  \otrsep
  \\
  &
  paint(car(gas,minivan)) \ra white
  \otrsep
  \\
  &
  paint(truck(x,y)) \ra red
  \otrsep
  \\
  &
  paint(car(x,suv)) \ra red
  \otrsep
  \\
  &
  paint(car(diesel,x)) \ra red
~~~~\}
\end{array}
\]
%
for the previous list of rules.

In this paper we propose an extended matching and rewriting formalism
which strongly relies on the newly introduced operation of relative
complement, and we provide an algorithm which computes for a given
difference of patterns $p_1\minus p_2$ the set of patterns which match
all terms matched by $p_1$ but those matched by $p_2$.  The algorithm
defined itself by rewriting in a concise and clear way turns out to be
not only easy to implement but also very powerful since it has several
direct applications:

\begin{itemize}
\item it can be used to transform an ordered constructor TRS into a
  plain constructor TRS
  defining exactly the same relation over terms;
\item it can be used to transform an anti-pattern into a set of
  equivalent patterns and provides thus a way to compile such
  patterns and to prove, using existing techniques, properties of
  anti-patterns and of the corresponding rewriting systems;
\item it can be used to decide whether a pattern is subsumed by a
  given set of patterns and thus, to check the presence of useless
  patterns or the completeness of a set of patterns.
\end{itemize}

The paper is organized as follows. The next section introduces the
notions of pattern, pattern semantics and rewriting
system. Section~\ref{se:encoding} presents the translation of extended
patterns into plain patterns and explains how this can be used to
detect useless patterns.  In Section~\ref{se:elimination} we present
a new technique for eliminating redundant patterns and
Section~\ref{se:caseEncode} describes the transformation of ordered
CTRS involving anti-patterns into plain CTRS.
Section~\ref{se:experiments} presents some optimizations and
implementation details.  In Section~\ref{se:related} we discuss some
related works.  We end with conclusions and further work.
%

\newcommand{\Gd}[1]{\mathcal{G}S(#1)}
\section{Pattern semantics and term rewriting systems}
\label{se:otrsdef}
%
We define in this section most of the notions and notations necessary
in the rest of the paper.

\subsection{Term rewriting systems}
We first briefly recall basic notions concerning first order terms and
term rewriting systems; more details can be found in
\cite{BaaderN98,Terese2002}.

A \emph{signature} $\Sigma$ consists in an alphabet $\FF$ of symbols
together with an application $\ar$ which associates to any symbol $f$
its \emph{arity} (we write $\FF^n$ for the subset of symbols of arity
$n$).  Symbols in $\FF^0$ are called \emph{constants}.
Given a countable set $\XX$ of \emph{variable} symbols, the set of
\emph{{\plaint}s} \emph{$\TFX$} is the smallest set containing $\XX$
and such that $f(t_1,\ldots,t_n)$ is in $\TFX$ whenever $f\in\FF^n$
and $t_i\in\TFX$ for $i\in [1,n]$.

A {\em position} of a term $t$ is a finite sequence of positive
integers describing the path from the root of $t$ to the root of the
sub-term at that position.  The empty sequence representing the root
position is denoted by $\varepsilon$.  $\stt t \omega$,
resp. $t(\omega)$, denotes the sub-term of $t$, resp. the symbol of
$t$, at position $\omega$. We denote by $\rmp t \omega s$ the term $t$
with the sub-term at position $\omega$ replaced by $s$. $\PPos(t)$ is
called the set of positions of $t$.
We write $\prefix{\omega_1}{\omega_2}$ if $\omega_2$ extends
$\omega_1$, that is, if $\omega_2=\omega_1.\omega_1'$ for some non
empty sequence $\omega_1'$.  We have thus,
$\prefix{\varepsilon}{\varepsilon.1}$ and
$\prefix{\varepsilon.1}{\varepsilon.1.2}$.  Notice that $\forall
\omega_1,\omega_2\in\PPos(t)$, $\prefix{\omega_1}{\omega_2}$ iff
$\stt{t}{\omega_2}$ is a sub-term of $\stt{t}{\omega_1}$.

The set of variables occuring in $t\in\TFX$ is denoted by $\var t$.
If $\var t$ is empty, $t$ is called a \emph{ground} term. $\TF$
denotes the set of all ground {\plaint}s.  A \emph{linear} term is a
term where every variable occurs at most once.

We call \emph{substitution} any mapping from $\XX$ to $\TFX$ which is
the identity except over a finite set of variables $\dom \sigma$
called \emph{domain} of $\sigma$.  A substitution $\sigma$ extends as
expected to an endomorphism $\sigma'$ of $\TFX$.  To simplify the
notations, we do not make the distinction between $\sigma$ and
$\sigma'$.  $\sigma$ is often denoted by $\{x \mapsto
\sigma(x)~|~x\in\dom \sigma\}$.

A \emph{rewrite rule} (over $\Sigma$) is a pair
$(l,r)\in\TFX\times\TFX$ (also denoted $l \raM r$) such that $\var
r\subseteq \var l$ and a \emph{term rewriting system} (TRS) is a set
of rewrite rules $\RR$ inducing a \emph{rewriting relation} over
$\TF$, denoted by $\evalM_{\RR}$ and such that $t
\evalM_{\RR} t'$ iff there exist $l\raM r\in\RR$,
$\omega\in\PPos(t)$, and a substitution $\sigma$ such that $\stt t
\omega = \sigma(l)$ and $t' = \rmp t \omega {\sigma(r)}$.  The
reflexive and transitive closure of $\evalM_{\RR}$ is denoted
by $\multievalM{\RR}$.

A rewriting system $\RR$ is \emph{left-linear} if the left-hand sides
of all its rewrite rules are linear.  
$\RR$ is \emph{confluent} 
when for any terms $t,t_1,t_2$ {\st} $t\multievalM{R}{t_1}$ and
$t\multievalM{R}{t_2}$ there exists a term $u$ {\st} $t_1\multievalM{R}{u}$ and
$t_2\multievalM{R}{u}$.
$\RR$ is \emph{terminating} if there exists no infinite rewrite
sequence $t_1\evalM_{\RR}t_2\evalM_{\RR}\cdots$.  A
terminating and confluent rewriting system $\RR$ is called convergent;
for such systems the normal form of $t$ is denoted
$t\downarrow_{\RR}$.

For the purpose of presenting function definitions with an ML-style
pattern matching we consider that the set of symbols~$\FF$ of a
signature is partitioned into a set~$\DD$ of \emph{defined symbols}
and a set~$\CC$ of \emph{constructors}.  The linear terms over the
constructor signature \emph{$\TCX$} are called \emph{constructor
  patterns} and the ground constructor patterns in $\TC$ are called
\emph{\valuet}s.
A constructor TRS (CTRS) is a TRS whose rules have a left-hand side of
the form $\foo(l_1,\ldots,l_n) \ra r$ with $\foo\in\DD^n$ and
$l_i\in\TCX$.

\subsection{Patterns and their ground semantics}
\label{se:patternsANDmatching}
%
The definition of a function $\foo$ 
by a list of oriented equations of the form:
\[
\begin{array}{l@{\hspace{5pt}}lcl}
[ 
&
  \foo(p_1^1,\ldots,p_n^1)& \ra& t^1 \\
&
  &\vdots& \\
&
  \foo(p_1^m,\ldots,p_n^m)& \ra& t^m 
~~]
\end{array}
\]
corresponds thus to an ordered CTRS 
with $\foo\in\DD^n$, $p_i^j\in\TCX$, $t^j\in\TFX$.

When focusing on the underlying pattern matching for such functional
specifications the defined symbol in the left-hand side of the
equations 
only indicates the name of the defined function and only the
constructor terms are relevant for its definition.
We assume thus a set $\CCt=\{\tupsymbol_1,\ldots,\tupsymbol_n\}$ of
suitable symbols for n-tuples (the cardinality of $\CCt$ is the
maximum arity of the symbols in $\DD$), and for simplicity an n-tuple
$\tupsymbol_n(p_1,\ldots,p_n)$ is denoted $\tup{p_1,\ldots,p_n}$.  In
order to address the underlying pattern matching of a function
definition of the above form we consider the list of tuples of
patterns:
\[
\begin{array}{c}
\tup{p_1^1,\ldots,p_n^1}\\
\vdots \\
\tup{p_1^m,\ldots,p_n^m}
\end{array}
\]
All the tuples of patterns $\tup{p_1,\ldots,p_n}$ considered in this
paper are linear, {\ie} each $p_i$ is linear, and a variable can
appear in only one pattern $p_i$.
In what follows, we call constructor pattern a constructor pattern or
a tuple of constructor patterns.  We may use the notation $\ovr{p}$ to
denote explicitly a tuple of constructor patterns.
Similarly, we call value a term in $\TC$ or a tuple of such values and
we use the notation $\ovr{v}$ to denote explicitly tuples of values.
We also write $\foo(\ovr{p})$ to denote a term $\foo(p_1,\ldots,p_n)$,
$\foo\in\DD^n$, when there is no need to make explicit the terms
$p_1,\ldots,p_n$ in a given context.

Let $v$ be a value and $p$ be a constructor pattern ({\ie} a
constructor pattern or a tuple of constructor patterns), we say that
$v$ is an instance of $p$ when there exists a substitution $\sigma$
(extended to the notion of tuples) such that $v=\sigma(p)$ and in this
case we say that $p$ \emph{matches} $v$. Since $p$ is linear the
instance relation can be defined inductively:
\[
\begin{array}{rcll}
  x & \matcha  &  v  & x\in\XX \\
  c(p_1,\ldots,p_n) & \matcha  &  c(v_1,\ldots,v_n)  & \text{iff }\land_{i=1}^n p_i \matcha v_i, 
  c\in\CC\cup\CCt \\
\end{array}
\]

Given a list of patterns $\PP=[{p_1},\ldots,{p_n}]$ we say that $\PP$
matches a value $v$ with pattern $p_i$, denoted $\PP\matcha_{i}{v}$,
iff the following conditions hold:
\[
\begin{array}{ll}
  {p_i} \matcha {v} \\
  {p_j} \nmatcha {v}, & \forall j<i\\
\end{array}
\]
Note that if $\PP\matcha_{i}{v}$ then for all $j\not=i$, $\PP\nmatcha_{j}{v}$.

Several pattern matching properties can be expressed in this
context~\cite{JFP:977988}:
\begin{itemize}
\item a list of patterns $\PP$ is \emph{exhaustive} iff for all values
  $v$ there exists an $i$ such that $\PP\matcha_{i}{v}$,
\item a pattern $p_i\in\PP$ is \emph{useless} iff there does not exist
  a value $v$ such that $\PP\matcha_{i}{v}$.
\end{itemize}

Starting from the observation that a pattern can be interpreted as the
set of its instances we define the semantics of (lists of) patterns
and state the relationship to pattern matching.

The \emph{ground semantics} of a constructor pattern $p\in\TCX$ is the
set of all its ground constructor instances: $\semg{p} = \{ \sigma(p)
\mid
\sigma(p)\in\TC\}$. 
This extends as expected to tuples of constructor
patterns: $\semg{\tup{p_1,\ldots,p_n}} = \{
\tup{\sigma(p_1),\ldots,\sigma(p_n)} \mid
\sigma(p_1),\ldots,\sigma(p_n)\in\TC\}$.
Note that the ground semantics of a variable~$x$ is the set of all
possible ground patterns: $\semg{x}=\TC$, and since patterns are
linear we can use a recursive definition for the non variable
patterns:
\[\semg{c(p_1,\ldots,p_n)}=\{c(t_1,\ldots,t_n)\mid (t_1,\ldots,t_n)\in\semg{p_1}\times\ldots\times\semg{p_n}\},
\]
for all $c\in\CC\cup\CCt$.

\begin{restatable}[Instance relation vs. ground semantics]{prop}{subsumptionVSsemantics}
  \label{th:subsumptionVSsemantics}
  Given a pattern ${p}$ and a value ${v}$, ${v}\in\semg{{p}}$ iff ${p}
  \matcha {v}$.
\end{restatable}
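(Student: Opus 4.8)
The plan is to proceed by structural induction on the pattern $p$, exploiting the fact that the matching relation $\matcha$ and the ground semantics $\semg{\cdot}$ are defined following exactly the same syntactic structure of $p$. Because all patterns considered here are linear, I would rely on the recursive characterization of $\semg{\cdot}$ stated just above the proposition rather than unfolding the substitution-based definition $\{\sigma(p)\mid\sigma(p)\in\TC\}$ directly; this is precisely what lets the two sides decompose in lockstep at each constructor.

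For the base case $p=x$ with $x\in\XX$, both sides hold for every value $v$. On the one hand $\semg{x}=\TC$, so every value lies in $\semg{x}$; on the other hand the first matching clause gives $x\matcha v$ for every $v$. Hence $v\in\semg{x}$ iff $x\matcha v$ trivially.

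For the inductive case $p=c(p_1,\ldots,p_n)$ with $c\in\CC\cup\CCt$, I would establish both implications. For the forward direction, assume $v\in\semg{c(p_1,\ldots,p_n)}$; the recursive definition gives $v=c(t_1,\ldots,t_n)$ with $t_i\in\semg{p_i}$, the induction hypothesis yields $p_i\matcha t_i$ for each $i$, and the second matching clause then delivers $c(p_1,\ldots,p_n)\matcha v$. For the converse, assume $c(p_1,\ldots,p_n)\matcha v$; since the only matching clause whose left-hand side is headed by a constructor requires matching head symbols, $v$ must be of the form $c(v_1,\ldots,v_n)$ with $p_i\matcha v_i$ for all $i$. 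By induction $v_i\in\semg{p_i}$, so $(v_1,\ldots,v_n)\in\semg{p_1}\times\cdots\times\semg{p_n}$, and therefore $v\in\semg{c(p_1,\ldots,p_n)}$.

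The only delicate point is the appeal to the recursive form of $\semg{\cdot}$: its validity rests essentially on linearity, since for a non-linear pattern the components $t_i$ could not be chosen independently within $\semg{p_1}\times\cdots\times\semg{p_n}$, and the factorization used in both directions would fail. As linearity is assumed throughout, this factorization is legitimate and the induction closes. I expect no separate treatment of tuples to be needed: the symbols in $\CCt$ are subsumed by the same constructor clause in both the matching relation and the recursive semantics, so the constructor case covers them uniformly.
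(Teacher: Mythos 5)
Your proof is correct and follows essentially the same route as the paper's: structural induction on $p$, with the variable case immediate and the constructor case decomposed in both directions via the recursive (linearity-dependent) characterization of the ground semantics together with the inductive definition of $\matcha$. Your explicit remark on why linearity licenses the product decomposition is a useful clarification but does not change the argument.
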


The semantics of a set of patterns $P=\{p_1,\ldots,p_n\}$ or of a list
of patterns $P=[p_1,\ldots,p_n]$ is the union of the semantics of each
of the patterns: $\semg{P}=\bigcup_{i=1}^{n}\semg{p_i}$.
Note that given a value ${v}$, ${v}\in\semg{P}$ iff there exists
$p_i\in P$ s.t. ${p_i} \matcha {v}$.  We say that a set of patterns
$P$ \emph{subsumes} a pattern $p$ iff $\semg{p}\subseteq\semg{P}$.

Given a list of patterns $\PP=[p_1,\ldots,p_n]$, the
\emph{disambiguation} problem~\cite{ICFP-2008-Krauss} consists in
finding sets of patterns $P_1,\ldots,P_n$ such that for each $i\in
[1..n]$, $\semg{P_i}=\semg{p_i}\setminus\cup_{j=1}^{i-1}\semg{p_j}$.
Supposing the disambiguation problem can be solved, we have that for
any value $v$, $v\in\semg{P_i}$ iff $\PP\matcha_{i}{v}$. Consequently,
the definition of a function by a list of equations can be replaced by
an equivalent one consisting of a set of equations, {\ie} one where
the order of equations is not important.

The aforementioned properties of pattern matching can be also
expressed in terms of ground semantics.  Checking the {exhaustiveness}
of a list of patterns $\PP=[p_1,\ldots,p_n]$ consists in checking
whether for any value ${v}$ there exists an $i$ s.t.
$v\in\semg{p_i}\setminus\cup_{j=1}^{i-1}\semg{p_j}$. Checking if the
pattern $p_i$ is a {useless case} ({\wrt} $p_1,\ldots,p_{i-1}$)
consists in checking if there exists no value $v$
s.t. $v\in\semg{p_i}\setminus\cup_{j=1}^{i-1}\semg{p_j}$, {\ie}
checking whether $\{p_1,\ldots,p_{i-1}\}$ subsumes $p_i$ or not. For
the latter it is equivalent to check that
$\semg{p_i}\setminus\cup_{j=1}^{i-1}\semg{p_j}$ is empty and for the
former it is equivalent to check that
$\semg{x}\setminus\cup_{j=1}^{n}\semg{p_j}$ is empty.

We will come back to the use of disambiguation for generating
equivalent function definitions and detecting possible pattern
matching anomalies and for now we focus on  solving the
disambiguation problem.  To handle this problem we first define
\emph{extended patterns} as follows:
$$
\begin{array}{rclr}
  p & :=  &  \XX \mid c(p_1,\ldots,p_n) \mid {p_1}\plus{p_2}  \mid  {p_1}\minus{p_2}  \mid \bot 
  & \textit{with } c\in\CC
\end{array}
$$
Intuitively, a pattern ${p_1}\plus{p_2}$ matches any term matched by
one of its components. The relative \emph{complement} of $p_2$ {\wrt}
$p_1$, $p_1\minus p_2$, matches all terms matched by $p_1$ but those
matched by $p_2$. $\bot$ matches no term.  $\minus$ has a higher
priority than $\plus$.
If an extended pattern contains no $\minus$ it is called \emph{\addt}
and, if it contains no symbol $\bot$ is called \emph{pure}.

The pattern $p_1\plus p_2$ is linear if each of $p_1$ and $p_2$ is
linear; this corresponds to the fact that $p_1$ and $p_2$ represent
independent alternatives and thus, that their variables are unrelated
{\wrt}  pattern semantics.  For example, the terms $h(x)\plus g(x)$
and $h(x)\plus g(y)$ both represent all terms rooted by $h$ or $g$.
An extended pattern of the form $c(p_1,\ldots,p_n)$ is linear if each
$p_i$, $i\in [1..n]$, is linear and $\cap_{i=1}^n
\var{p_i}=\emptyset$. An extended pattern ${p_1}\minus{p_2}$ is linear if 
$p_1$, $p_2$ are linear and $\var{p_1}\cap\var{p_2}=\emptyset$.

In what follows we consider that all (tuples of) extended patterns are
linear and the set of all these patterns is denoted $\ETCX$.

The instance relation can be extended to take into account
extended patterns:
\[
\begin{array}{rcll}
 {p_1}\plus{p_2} & \matcha  &  v  & \text{iff }  {p_1} \matcha v ~\vee~ {p_2} \matcha  v  \\
 {p_1}\minus{p_2} & \matcha  &  v  & \text{iff }  {p_1} \matcha v ~\wedge~ {p_2} \nmatcha  v \\
 \bot & \nmatcha  &  v  
\end{array}
\]
with $p_1,p_2$ extended patterns and $v$ value.

The notion of ground semantics is also extended to take into account the
new constructions:
$$
\begin{array}{rcl}
\semg{p_1\plus p_2}&=&\semg{p_1}\cup\semg{p_2}
\\
\semg{p_1\minus p_2}&=&\semg{p_1}\setminus\semg{p_2}
\\
\semg{\bot}&=&\emptyset
\end{array}
$$
All notions apply as expected to tuples of extended patterns. We
generally use the term extended pattern to designate an extended
pattern or a tuple of extended patterns.

\begin{restatable}[Instance relation vs. ground semantics for extended
  patterns]{prop}{subsumptionVSsemanticsEXT}
  \label{th:subsumptionVSsemanticsEXT}
  Given an extended pattern ${p}$ and a value ${v}$, ${v}\in\semg{{p}}$ iff ${p}
  \matcha {v}$.
\end{restatable}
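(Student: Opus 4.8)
The plan is to prove the equivalence by structural induction on the extended pattern $p$, following its grammar $p := \XX \mid c(p_1,\ldots,p_n) \mid p_1 \plus p_2 \mid p_1 \minus p_2 \mid \bot$. In each case the recursive clause defining $\semg{p}$ and the corresponding clause defining $\matcha$ are set up to mirror one another, so the argument amounts to checking that each set-theoretic operation on ground semantics matches the Boolean connective used in the instance relation. For the variable case $p = x$ both sides hold for every value $v$ (we have $\semg{x} = \TC$ and $x \matcha v$ for all $v$), and for $p = \bot$ both sides fail for every $v$ (we have $\semg{\bot} = \emptyset$ and $\bot \nmatcha v$); the constructor case $p = c(p_1,\ldots,p_n)$ is handled exactly as in the proof of Proposition~\ref{th:subsumptionVSsemantics}, using the induction hypothesis on the $p_i$ and distinguishing whether $v$ is rooted by $c$ or not. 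Since the tuple symbols of $\CCt$ behave like ordinary constructors in both definitions, tuples of extended patterns are subsumed by this constructor case and need no separate treatment.

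The genuinely new content is therefore the two Boolean combinators. For $p = p_1 \plus p_2$ I would chain the definitions $\semg{p_1 \plus p_2} = \semg{p_1} \cup \semg{p_2}$ and $p_1 \plus p_2 \matcha v$ iff $p_1 \matcha v \vee p_2 \matcha v$, so that $v \in \semg{p_1 \plus p_2}$ holds iff $v \in \semg{p_1}$ or $v \in \semg{p_2}$, which by the induction hypothesis is equivalent to $p_1 \matcha v$ or $p_2 \matcha v$, i.e. to $p_1 \plus p_2 \matcha v$. For $p = p_1 \minus p_2$ the same scheme gives $v \in \semg{p_1} \setminus \semg{p_2}$ iff $v \in \semg{p_1}$ and $v \notin \semg{p_2}$, equivalent by induction to $p_1 \matcha v$ and $p_2 \nmatcha v$, i.e. to $p_1 \minus p_2 \matcha v$.

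The one point that needs care, and the closest thing to an obstacle, is the step linking $v \notin \semg{p_2}$ with $p_2 \nmatcha v$ in the complement case: this is legitimate only because $\nmatcha$ is the exact logical negation of $\matcha$ on values, so that for a fixed value exactly one of $p_2 \matcha v$ and $p_2 \nmatcha v$ holds. The induction hypothesis delivers $v \in \semg{p_2}$ iff $p_2 \matcha v$, and contraposition then yields $v \notin \semg{p_2}$ iff $p_2 \nmatcha v$, which is precisely what the $\minus$ clause requires. Everything else is a routine unfolding of the two mutually recursive definitions, so no machinery beyond the induction hypothesis and Proposition~\ref{th:subsumptionVSsemantics} is needed.
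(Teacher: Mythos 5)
Your proof is correct and follows essentially the same route as the paper's: a structural induction on $p$ whose variable, $\bot$, and constructor cases reduce to Proposition~\ref{th:subsumptionVSsemantics}, and whose $\plus$ and $\minus$ cases unfold the matching set-theoretic and Boolean clauses under the induction hypothesis. Your explicit remark that $\nmatcha$ is the exact negation of $\matcha$ on values, needed to pass from $v\notin\semg{p_2}$ to $p_2\nmatcha v$, is a point the paper leaves implicit but does not change the argument.
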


The disambiguation problem can be generalized to extended patterns:
given a list of extended patterns $[p_1,\ldots,p_n]$, the
disambiguation problem consists thus in finding sets of
\emph{constructor} patterns $P_1,\ldots,P_n$ such that for each $i\in
[1..n]$, $\semg{P_i}=\semg{p_i}\setminus\cup_{j=1}^{i-1}\semg{p_j}$.
When restricting to lists of constructor patterns we retrieve the
original disambiguation problem. By abuse of language, when we refer
to the disambiguation of a pattern we mean the disambiguation of the
list consisting only of this pattern; when the pattern is constructor
the disambiguation obviously results in the list containing only this
pattern.
Supposing this generalized disambiguation problem can be solved, the
definition of a function by a list of equations involving extended
patterns can be replaced by an equivalent one consisting of a set of
equations using only constructor patterns.

\section{Encoding extended patterns}
\label{se:encoding}

\begin{figure*}[!ht]
\framebox{\makebox[1\width]{
\begin{mathpar}
\begin{array}{l@{\hspace{5pt}}r@{\hspace{5pt}}c@{\hspace{5pt}}l@{\hspace{5pt}}l}
\multicolumn{2}{l}{\textbf{Remove empty sets:}}\\
\text{\small{(\rmAddL)}}&
  \bot \plus \meta{\vm}  & \raM & \meta{\vm} \\
\text{\small{(\rmAddR)}}&
  \meta{\vm} \plus \bot  & \raM & \meta{\vm} \\[+2pt]
\multicolumn{2}{l}{\textbf{Distribute sets:}}\\
\text{\small{(\rmEmptyF)}}&
h(\meta{v_1},\ldots,\bot{},\ldots,\meta{v_n})   & \raM & \bot \\[+2pt]
\text{\small{(\rmDistAdd)}}&
   h(\meta{v_1},\ldots,\meta{v_i}\plus\meta{w_i},\ldots,\meta{v_n}) & \raM &
   h(\meta{v_1},\ldots,\meta{v_i},\ldots,\meta{v_n}) \plus h(\meta{v_1},\ldots,\meta{w_i},\ldots,\meta{v_n}) \\
\multicolumn{2}{l}{\textbf{Simplify complements:}}\\
\text{\small{(\rmMinusMV)}}&
  \meta{\vm} \minus \varVar  & \raM & \bot \\
\text{\small{(\rmMinusMB)}}&
  \meta{\vm} \minus \bot   & \raM &  \meta{\vm}  \\
\text{\small{(\rmMinusMP)}}&
\meta{\wm} \minus (\meta{v_1} \plus \meta{v_2})   & \raM &  (\meta{\wm} \minus \meta{v_1}) \minus \meta{v_2} \\
\text{\small{(\rmMinusVM)}}&
  \varVar \minus g(\meta{t_1},\ldots,\meta{t_n})  & \raM & 
  \sum\limits_{c\in\CC} c(\varCt_1,\ldots,\varCt_{m}) \minus g(\meta{t_1},\ldots,\meta{t_n}) &
  {\textit{with  } m=arity(c)} \\
\text{\small{(\rmMinusBM)}}&
  \bot \minus f(\meta{v_1},\ldots,\meta{v_n})  & \raM & \bot \\
\text{\small{(\rmMinusPM)}}&
  (\meta{\vm} \plus \meta{\wm}) \minus f(\meta{v_1},\ldots,\meta{v_n})   & \raM &  (\meta{\vm} \minus f(\meta{v_1},\ldots,\meta{v_n})) \plus (\meta{\wm} \minus f(\meta{v_1},\ldots,\meta{v_n})) \\
\text{\small{(\rmMinusFF)}}&
  f(\meta{v_1},\ldots,\meta{v_n}) \minus f(\meta{t_1},\ldots,\meta{t_n})  & \raM &
  f(\meta{v_1}\minus \meta{t_1},\ldots,\meta{v_n}) \plus \cdots \plus f(\meta{v_1},\ldots,\meta{v_n}\minus \meta{t_n})   \\
\text{\small{(\rmMinusFG)}}&
  f(\meta{v_1},\ldots,\meta{v_n}) \minus g(\meta{w_1},\ldots,\meta{w_n})  & \raM &
  f(\meta{v_1},\ldots,\meta{v_n}) 
  & {\textit{with  } f\not=g}\\
\end{array}
\end{mathpar}
}}
\caption{$\RrC$: reduce extended patterns to {\addt} terms.
  $\meta{\vm},\meta{v_1},\ldots,\meta{v_n}$,
  $\meta{\wm},\meta{w_1},\ldots,\meta{w_n}$ range over {\addt}
  patterns, $\meta{t_1},\ldots,\meta{t_n}$ range over pure {\addt}
  patterns, $\varVar$ ranges over pattern variables. $f,g$ expand to
  all the symbols in $\CC\cup\CCt$, $h$ expands to all
  symbols in $\CC^{n>0}\cup\CCt$.
\label{fig:elimComplement}
}
\end{figure*}

To solve the disambiguation problem we propose a method for
transforming any extended pattern $p$ and, in particular, any
complement pattern, into an equivalent pure {\addt} pattern
$p_1\plus\cdots\plus p_n$ and thus obtain the set of constructor
patterns $\{p_1,\cdots,p_n\}$ having the same semantics as the
original one; if $p$ is transformed into $\bot$ then it is useless.
This transformation is accomplished using the rewriting system $\RrC$
presented in Figure~\ref{fig:elimComplement}.  For simplicity, this
rewriting system is presented schematically using rules 
which abstract over the symbols of the signature.  We use overlined
symbols, like $\meta{\tm}$,
$\meta{\vm},\meta{\wm}$, to denote
the variables of the TRS and $\varCt$ to denote (freshly generated)
pattern-level variables.
We will show that each intermediate step and consequently the overall
transformation is sound and complete {\wrt} the ground semantics.

Rules $\rmAddL$ and $\rmAddR$ express the fact that the empty ground
semantics of $\bot$ is neutral for the union.
Rule $\rmEmptyF$ indicates that the semantics of a pattern containing
a sub-term with an empty ground semantics is itself empty.
Similarly, if the semantics of a sub-term can be expressed as the
union of two sets then the semantics of the overall term is obtained
by distributing these sets over the corresponding constructors; this
behaviour is reflected by the rule $\rmDistAdd$.
Note that $\rmEmptyF$ and $\rmDistAdd$ are rule schemes representing
as many rules as constructors of strictly positive arity in the
signature and tuple symbols in $\CCt$.

The remaining rules describe the behaviour of complements and
generally correspond to set theory laws over the ground semantics of
the involved patterns.  The difference between the ground semantics of
any pattern and the ground semantics of a variable, which corresponds
to the set of all ground constructor patterns for the signature, is
the empty set; rule $\rmMinusMV$ encodes this behaviour.  When
subtracting the empty set, the argument remains unchanged (rule
$\rmMinusMB$).  Subtracting the union of several sets consists in
subtracting successively all sets (rule $\rmMinusMP$).
The semantics of a variable is the set of all ground constructor patterns,
set which  can be also obtained by considering for each  constructor
in the signature the set of all terms having this symbol at the root
position and taking the union of all these sets (rule $\rmMinusVM$).
We should emphasize that $\varVar$ is a  variable ranging
over pattern variables at the object level and that $\varCt_i$ are fresh
pattern variables seen as constants at the TRS level ({\ie}
$\varVar$ matches any $\varCt_i$).
Similarly to rules $\rmMinusMV-\rmMinusMP$, rules $\rmMinusBM$ and $\rmMinusPM$
correspond to their counterparts from set theory. 
Rule $\rmMinusFF$ corresponds to the set difference of cartesian
products; the case when the head symbol is a constant $c$ corresponds
to the rule $c \minus c \raM \bot$.
Rule $\rmMinusFG$  corresponds just to the special
case where complemented sets are disjoint.

It is worth noticing that the rule schemes $\rmMinusVM-\rmMinusFG$
expand to all the possible rules obtained by replacing $f,g$ with all
the constructors in the original signature and all tuple symbols.
Note also that the variables in the rewrite rules range over (pure)
additive patterns which correspond implicitly to a call-by-vallue
reduction strategy.

\begin{example}
  \label{ex:appComp}
  Let us consider the signature $\Sigma$ with $\CC=\{a,b,f\}$ and
  $\ar(a)=\ar(b)=0$, $\ar(f)=2$.
  The pattern $f(x,y)\minus f(z,a)$ corresponds to all patterns rooted
  by $f$ but those of the form $f(z,a)$. According to rule
  $\rmMinusFF$ this corresponds to taking all patterns rooted by $f$
  which are not discarded by the first argument of $f(z,a)$, {\ie}
the pattern
  $f(x\minus z,y)$, or by its second argument, {\ie} 
the pattern
  $f(x,y\minus a)$. We obtain thus the pattern $f(x\minus
  z,y)\plus{f(x,y\minus a)}$ which reduces, using rule $\rmMinusMV$
  and the propagation and elimination of $\bot$ to $f(x,y\minus
  a)$. Using rule $\rmMinusVM$ we obtain $f(x,(a\plus b\plus
  f(y_1,y_2))\minus a)$ which reduces eventually to $f(x,b\plus
  f(y_1,y_2))$. We can then apply $\rmDistAdd$ to obtain the term
  $f(x,b) \plus f(x,f(y_1,y_2))$ which is irreducible.

  The rewrite rules apply also on tuples of patterns and 
  $\tup{x,y}\minus \tup{z,a}$ reduces using the same rules as above to
  $\tup{x,b} \plus \tup{x,f(y_1,y_2)}$.
  Similarly $\tup{x,y} \minus \tup{b,a}$ reduces to $\tup{a\plus
    f(x_1,x_2),y}\plus \tup{x,b\plus f(x_1,x_2)}$ and then to the
  irreducible term $(\tup{a,y} \plus \tup{f(x_1,x_2),y}) \plus
  (\tup{x,b} \plus \tup{x,f(x_1,x_2)})$.
\end{example}

\begin{restatable}[Convergence]{lemm}{convergence}
\label{lemma:convergence}
  The rewriting system $\RrC$ is confluent and terminating.  The normal
  form of an extended pattern {\wrt} to $\RrC$ is either $\bot{}$ or a
  sum of (tuples of) constructor patterns, {\ie} a pure {\addt} term $t$
  such that if $t(\omega)={\plus}$ for a given $\omega$ then, for all
  $\prefix{\omega'}{\omega}$, $t(\omega')=\plus$.
\end{restatable}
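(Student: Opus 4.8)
The statement bundles three claims---termination, the shape of normal forms, and confluence of $\RrC$---and I would prove them in that order, since termination is the only genuinely delicate point. The difficulty is that two groups of rules push in opposite directions. The distribution rule (\rmDistAdd) duplicates the sibling arguments of a constructor, so any measure that simply counts or weights the occurrences of $\minus$ may \emph{increase}; and rules (\rmMinusVM) and (\rmMinusFF) momentarily \emph{raise} the number of complement subterms before resolving them---(\rmMinusVM) replaces a variable by a sum of $\card{\CC}$ constructor instances that all carry the \emph{same} subtracted pattern, while (\rmMinusFF) turns $f(\meta{v_1},\ldots,\meta{v_n})\minus f(\meta{t_1},\ldots,\meta{t_n})$ into $n$ complements pushed one level inside. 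I expect (\rmMinusVM) to be the crux: it multiplies the number of complements by $\card{\CC}$ without shrinking what is subtracted, so neither a naive polynomial interpretation (a variable has a fixed value, yet is re-expanded into constructor terms containing fresh variables) nor a plain ``eliminate every $\minus$ first'' lexicographic order can work.

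The plan is to exploit the call-by-value restriction recorded under Figure~\ref{fig:elimComplement}: all metavariables range over (pure) additive patterns, hence a complement is rewritten only once both its arguments are $\minus$-free, and---crucially---\emph{no} rule can ever carry a complement inside a metavariable-matched position. In particular (\rmDistAdd), (\rmEmptyF), (\rmAddL) and (\rmAddR) duplicate or erase only additive, complement-free material, which neutralizes the duplication problem. I would then use a lexicographic measure. Its first component is a multiset of weights, one per complement subterm, where the weight is defined by well-founded recursion on the (complement-free) subtracted pattern: a variable-headed complement is charged strictly more than the sum of the constructor-instantiated children produced by (\rmMinusVM), which is legitimate precisely because the forced continuation through (\rmMinusFF)/(\rmMinusFG) subtracts strictly smaller patterns. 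Every complement rule then makes this first component strictly smaller---each either deletes a complement, splits its subtracted pattern into strictly smaller ones, distributes it over a strictly smaller left, or expands it into children whose forced continuation subtracts strictly smaller patterns---while (\rmDistAdd), (\rmEmptyF), (\rmAddL), (\rmAddR) leave it unchanged; a standard strictly monotone interpretation (constructors multiplicatively, $\plus$ additively with a unit penalty, complements treated as constants) then orients the remaining additive rules as a second component. Since the system is a fixed, finitely presented TRS, this measure can also be cross-checked with a termination prover such as \aprove{} or \TTT{}.

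The shape of normal forms then follows from the exhaustiveness of the rule schemes, by contraposition. If a term contains a $\minus$, pick an innermost one; its arguments are additive, and a case analysis on the subtracted pattern (variable, $\bot$, a sum, or a constructor application---and in the last case on the left argument) shows that exactly one of (\rmMinusMV)--(\rmMinusFG) applies, so the term is reducible. Hence an irreducible term is $\minus$-free. If it carries a $\bot$ strictly below a constructor then (\rmEmptyF) applies, and below a $\plus$ then (\rmAddL) or (\rmAddR) applies; so an irreducible term is either $\bot$ or $\bot$-free. Finally, a $\plus$ occurring below a constructor or tuple symbol makes (\rmDistAdd) applicable; therefore in an irreducible term every $\plus$ lies above all constructors, which is exactly the condition $t(\omega)=\plus \Rightarrow t(\omega')=\plus$ for all $\prefix{\omega'}{\omega}$. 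The converse is immediate by inspection, so the normal forms are exactly $\bot$ and the sums of constructor patterns.

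For confluence I would invoke Newman's lemma, so that with termination in hand it suffices to check local confluence. The complement rules (\rmMinusMV)--(\rmMinusFG) are shape-directed and mutually exclusive on the two arguments of a $\minus$, so at non-variable positions the only overlaps are those created by $\bot$ or $\plus$ instantiating a metavariable of a preparatory rule (for instance (\rmAddL) inside the right argument of (\rmMinusMP), or (\rmEmptyF) against one argument of (\rmDistAdd)), together with (\rmDistAdd) overlapping itself on two distinct arguments. Each resulting critical peak is joinable by a short deterministic computation: the unit laws for $\bot$ commute with distribution, and distributing two sibling sums in either order yields the same collection of summands. That last peak joins only up to reordering and regrouping of the summands, so confluence is to be read modulo associativity and commutativity of $\plus$, i.e.\ with sums identified with the sets of patterns they denote via $\semg{\cdot}$---which is the notion relevant to disambiguation. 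Under this reading all critical pairs join and Newman's lemma yields the result; the termination measure of the first step remains the main obstacle.
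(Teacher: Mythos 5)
Your overall architecture matches the paper's: a lexicographic termination measure whose first component strictly decreases on the variable-expansion rule and is invariant elsewhere, a second ordering for the remaining rules, normal forms by exhaustive case analysis on an innermost redex, and confluence via Newman plus critical pairs. Your normal-form and confluence parts are essentially the paper's argument, and your remark that the $\rmDistAdd$--$\rmDistAdd$ peak only joins up to associativity/commutativity of $\plus$ is a fair point the paper glosses over (it later speaks of ``abstracting over the way $\plus$ associates'').

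The genuine gap is in the termination measure. You propose a \emph{multiset} of per-complement weights where the weight is computed by recursion on the subtracted pattern and ``a variable-headed complement is charged strictly more'' than its $\rmMinusVM$-successors. As stated this is not monotone under $\rmMinusPM$: instantiating $\meta{\vm}$ with a variable, $(x\plus\meta{\wm})\minus f(\meta{v_1},\ldots,\meta{v_n})$ rewrites to $(x\minus f(\ldots))\plus(\meta{\wm}\minus f(\ldots))$, turning a non-variable-headed complement into a variable-headed one with the \emph{same} subtracted pattern; under your scheme the multiset acquires a strictly heavier element and increases. A similar fragility appears when $\rmDistAdd$ or $\rmMinusMP$ fires \emph{inside} the subtracted pattern of an enclosing complement, since your weight is keyed on the syntax of that pattern. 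The paper's measure $\sizep$ avoids both problems: it charges a complement $u_1\minus u_2$ with the \emph{set} of positions $\varp(u_1)\cap\symp(u_2)$, where $\varp$ propagates through $\plus$ and constructors, so a variable buried anywhere in the left argument is already accounted for before $\rmMinusPM$ exposes it; and it aggregates by set union and compares by strict inclusion, which is insensitive to the duplication performed by $\rmDistAdd$, $\rmMinusPM$ and $\rmMinusFF$. You would need to replace your ``head-is-a-variable'' flag by something of this position-aware kind (and check invariance of the first component on \emph{every} non-$\rmMinusVM$ rule, not just the additive ones) before the lexicographic argument closes.
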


Note that since the rewrite rules introduce only fresh pattern
variables (rule $\rmMinusVM$) and duplicate terms only through $\plus$
(rules $\rmMinusPM$, $\rmMinusFF$ and $\rmDistAdd$), a linear term is
always rewritten to a linear term and thus, the normal form of a
linear term is linear as well.

As intuitively explained above, the reduction preserves the ground
semantics of linear terms:

\begin{restatable}[Complement semantics preservation]{prop}{comEq}
\label{prop:comEq}
  For any extended patterns $p,p'$, 
  if $p\evalM_{\RrC}p'$ then $\semg{p}=\semg{p'}$.
\end{restatable}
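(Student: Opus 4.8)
The plan is to reduce the statement to two ingredients: a \emph{compatibility} (congruence) property of the ground semantics with respect to contexts, and a \emph{local soundness} check for each rule of $\RrC$. Recall that a step $p\evalM_{\RrC}p'$ means that there are a rule $l\raM r$ of $\RrC$, a position $\omega\in\PPos(p)$ and a TRS-substitution $\sigma$ with $\stt{p}{\omega}=\sigma(l)$ and $p'=\rmp{p}{\omega}{\sigma(r)}$. Hence it suffices to establish (i) that $\semg{\sigma(l)}=\semg{\sigma(r)}$ for every rule and every $\sigma$ respecting the side conditions of Figure~\ref{fig:elimComplement}, and (ii) that replacing a subterm by a semantically equivalent one leaves the overall ground semantics unchanged; combining the two then gives $\semg{p}=\semg{p'}$.

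For (ii) I would prove by induction on the length of $\omega$ that $\semg{s}=\semg{s'}$ implies $\semg{\rmp{p}{\omega}{s}}=\semg{\rmp{p}{\omega}{s'}}$. The base case $\omega=\varepsilon$ is immediate. For the inductive step it is enough to note that each of the three defining clauses of $\semg{\cdot}$ — the product for a constructor- or tuple-rooted term $c(q_1,\ldots,q_n)$, the union for $q_1\plus q_2$, and the difference for $q_1\minus q_2$ — expresses the semantics of a term as a function of the semantics of its immediate subterms only. Thus changing one immediate subterm to a semantically equal one preserves the semantics of the enclosing term, and the property propagates up to the root. Linearity is used implicitly here, since the product formula $\semg{c(q_1,\ldots,q_n)}=\semg{q_1}\times\cdots\times\semg{q_n}$ is valid only because the $q_i$ share no variables, which holds as, by the remark following Lemma~\ref{lemma:convergence}, linearity is preserved by reduction.

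For (i) each of the twelve rules is an instance of an elementary identity of set algebra, obtained by reading $\semg{\cdot}$ through the outermost symbol and applying the clauses for $\plus$, $\minus$ and $\bot$. The rules $\rmAddL$, $\rmAddR$, $\rmEmptyF$, $\rmMinusMB$, $\rmMinusBM$, $\rmMinusMV$, $\rmMinusMP$, $\rmMinusPM$, $\rmDistAdd$ and $\rmMinusFG$ reduce to routine facts such as $\emptyset\cup X=X$, $X\setminus\emptyset=X$, $\emptyset\setminus X=\emptyset$, a product with an empty factor being empty, distributivity of the product over $\cup$, the laws $X\setminus(A\cup B)=(X\setminus A)\setminus B$ and $(A\cup B)\setminus X=(A\setminus X)\cup(B\setminus X)$, the inclusion $\semg{\meta{\vm}}\subseteq\TC=\semg{\varVar}$, and the disjointness of the semantics of patterns with distinct root symbols. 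The two rules where I expect the only genuine work are $\rmMinusVM$ and $\rmMinusFF$. For $\rmMinusVM$ I must establish $\semg{\varVar}=\bigcup_{c\in\CC}\semg{c(\varCt_1,\ldots,\varCt_m)}$: since each fresh $\varCt_i$ has semantics $\TC$, the union on the right is exactly the set of values whose root is some constructor, hence all of $\TC=\semg{\varVar}$ because every value is built from $\CC$; subtracting $\semg{g(\meta{t_1},\ldots,\meta{t_n})}$ and using $\bigcup_c(A_c\setminus B)=(\bigcup_c A_c)\setminus B$ then yields the rule, bearing in mind that $\minus$ binds tighter than $\plus$. For $\rmMinusFF$ I would prove the difference-of-products identity $\prod_i A_i\setminus\prod_i B_i=\bigcup_{i=1}^{n}A_1\times\cdots\times(A_i\setminus B_i)\times\cdots\times A_n$ by a direct membership argument: a tuple lies in the left-hand side iff every coordinate lies in the corresponding $A_i$ while at least one coordinate escapes its $B_i$, which is precisely membership in the (overlapping) union on the right; instantiating $A_i=\semg{v_i}$ and $B_i=\semg{t_i}$ gives the rule. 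Combining (i) with the compatibility result (ii) then establishes $\semg{p}=\semg{p'}$ for any single step.
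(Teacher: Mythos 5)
Your proposal is correct and follows essentially the same route as the paper: it establishes, rule by rule, that the ground semantics of left- and right-hand sides coincide, using the same set-algebra identities, the same observation that $\semg{\varVar}=\TC=\semg{\sum_{c\in\CC}c(\varCt_1,\ldots,\varCt_m)}$ for $\rmMinusVM$, and the same membership argument for the difference of cartesian products in $\rmMinusFF$. The only difference is that you spell out the context-compatibility step (ii), which the paper leaves implicit in this proof (though it states exactly that induction in the proof of the anti-pattern preservation proposition); this is a welcome precision rather than a deviation.
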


Checking whether a (extended) pattern $p$ is useless {\wrt} a set of
patterns $\{p_1,\ldots,p_n\}$ can be done by simply verifying that the
pattern $p\minus (p_1\plus\cdots\plus p_n)$ is reduced by $\RrC$ to
$\bot$, meaning that this pattern has an empty semantics:
%
\begin{restatable}[Subsumption]{prop}{subsumption}
\label{th:subsumption}
  Given the patterns $p,p_1,\ldots,p_n$,
  $p \textrm{ is subsumed by } \{ p_1,\ldots,p_n \}$
  iff 
  $p\minus(p_1\plus\cdots\plus p_n)\downarrow_{\RrC}=\bot$.
\end{restatable}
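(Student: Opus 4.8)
The plan is to reduce the statement to a purely semantic equivalence using the two preceding results, after which only one genuinely new fact remains to be checked.

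First I would unfold the ground semantics of the complement pattern. By the clauses defining $\semg{\cdot}$ on $\plus$ and $\minus$,
\[
\semg{p \minus (p_1 \plus \cdots \plus p_n)} = \semg{p} \setminus \bigcup_{i=1}^{n} \semg{p_i} = \semg{p} \setminus \semg{\{p_1,\ldots,p_n\}}.
\]
By definition $p$ is subsumed by $\{p_1,\ldots,p_n\}$ precisely when $\semg{p} \subseteq \semg{\{p_1,\ldots,p_n\}}$, which holds iff the set above is empty. Hence the proposition is equivalent to the assertion that $\semg{p \minus (p_1 \plus \cdots \plus p_n)}$ is empty iff $p \minus (p_1 \plus \cdots \plus p_n) \downarrow_{\RrC} = \bot$.

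Next I would connect the pattern with its normal form. Writing $q = p \minus (p_1 \plus \cdots \plus p_n)\downarrow_{\RrC}$, Lemma~\ref{lemma:convergence} guarantees that $q$ exists and is either $\bot$ or a pure {\addt} term. Applying Proposition~\ref{prop:comEq} at each step of the reduction $p \minus (p_1 \plus \cdots \plus p_n) \multievalM{\RrC} q$ and using transitivity of equality (a routine induction on the length of the sequence) gives $\semg{p \minus (p_1 \plus \cdots \plus p_n)} = \semg{q}$. It therefore suffices to prove $\semg{q} = \emptyset$ iff $q = \bot$.

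The forward implication is immediate from $\semg{\bot} = \emptyset$. The converse is the crux of the argument and the step I expect to require the most care: I must show that a normal form $q \neq \bot$ has nonempty ground semantics. By Lemma~\ref{lemma:convergence} such a $q$ is a sum $q_1 \plus \cdots \plus q_m$ of (tuples of) constructor patterns containing no occurrence of $\bot$, so $\semg{q} = \bigcup_{k=1}^{m} \semg{q_k}$ and it is enough to exhibit a ground instance of a single summand. Provided the signature admits at least one value, \ie $\TC \neq \emptyset$, a short induction on the structure of a constructor pattern---using $\semg{x} = \TC$ at a variable and the recursive clause for $\semg{c(p_1,\ldots,p_n)}$ at a constructor or tuple node---shows every $\semg{q_k}$ is nonempty, whence $\semg{q} \neq \emptyset$. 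Chaining the three equivalences then yields the claim.
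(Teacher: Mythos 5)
Your proof is correct and follows essentially the same route as the paper's: both directions reduce to the semantic equivalence via Proposition~\ref{prop:comEq}, and the converse hinges on Lemma~\ref{lemma:convergence} together with the observation that a non-$\bot$ normal form (a sum of constructor patterns) has nonempty ground semantics. If anything, you are more careful than the paper, which merely asserts that last nonemptiness fact where you sketch the structural induction and flag the needed hypothesis $\TC\neq\emptyset$.
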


\begin{example}
  \label{ex:uselessPattern}
  We consider the signature in Example~\ref{ex:appComp} and the list
  of patterns
  $[ \tup{b,y}\otrsep \tup{a,b}\otrsep$ $\tup{f(x,y),z}\otrsep$ $\tup{x,b} ]$.
  To check if the last pattern in the list is useless it is
  enough to verify whether the pattern $\tup{x,b} \minus (\tup{b,y}\plus
  \tup{a,b}\plus \tup{f(x,y),z})$ reduces to $\bot$ or not.  
  The pattern $\tup{x,b} \minus \tup{b,y}$ reduces to $\tup{a\plus f(x_1,x_2),b}$
  and when we further subtract $\tup{a,b}$ we obtain
  $\tup{f(x_1,x_2),b}$. Finally, $\tup{f(x_1,x_2),b}\minus \tup{f(x,y),z}$
  reduces to $\bot$ and we can thus conclude that the pattern
  $\tup{x,b}$ is useless {\wrt} the previous patterns in the list.

  One may want to check the exhaustiveness of the list of patterns
  $[ \tup{b,y}\otrsep \tup{a,b}\otrsep$ $\tup{f(x,y),z} ]$. 
  Since the pattern
  $\tup{x,y}\minus (\tup{b,y}\plus \tup{a,b}\plus \tup{f(x,y),z})$ reduces to
  $\tup{a,a}$ we can conclude that the property doesn't hold. We can then
  check similarly that exhaustiveness holds for the list of patterns
  $[ \tup{b,y}\otrsep \tup{a,b}\otrsep$ $\tup{f(x,y),z} \otrsep \tup{a,a} ]$.
\end{example}

With the transformation realized by $\RrC$ an extended pattern is
transformed into an equivalent {\addt} one with $\bot{}$ potentially
present only at the root position and with all sums pushed at the top
level ({\ie} until there is no $\plus$ symbol below any other
symbol). More precisely, if we abstract over the way $\plus$
associates, any extended pattern $p$ is normalized {\wrt} $\RrC$ into
$\bot{}$ or into a sum of (tuples of) constructor patterns
$p_1\plus\cdots\plus p_n$ having the same semantics as $p$.  Since the
semantics of this latter pattern is exactly the same as the semantics
of the set $P=\{p_1,\ldots, p_n\}$, the above transformations can be
used to solve the disambiguation problem. If the result of the
reduction of $p$ is $\bot$ then $P=\emptyset$ and in this case the
pattern is useless.
 
\begin{example}
  \label{ex:simple}
  Let us consider the signature from Example~\ref{ex:appComp} and the
  list of patterns $[f(x,y),f(z,a)]$.  As we have seen, the pattern
  $f(x,y)\minus f(z,a)$ reduces {\wrt} $\RrC$ to $f(x,b)\plus
  f(x,f(y_1,y_2))$ and thus, the original list of patterns is
  disambiguated into the sets of patterns $\{f(x,y)\}$ and
  $\{f(x,b),f(x,f(y_1,y_2))\}$.
\end{example}

The above transformation can be also used as a generic compilation
method for the so-called {\antit}s~\cite{jsc2010}, {\ie} a method for
transforming an {\antit} into an extended pattern and eventually into
a set of constructor patterns having the same semantics as the
original {\antit}. An {\antit} is a linear term in $\mathcal{T(C\cup\ap,X)}$\footnote{In their most general form {\antit}s are not
necessarily linear.} and, intuitively, the semantics of an {\antit}
represents the complement of its semantics with respect to
$\TC$. Formally~\cite{jsc2010},
$\semg{\rmp{t}{\omega}{\ap{t'}}}=\semg{\rmp{t}{\omega}{z}}\setminus\semg{\rmp{t}{\omega}{t'}}$
where $z$ is a fresh variable and for all $\prefix{\omega'}{\omega}$,
$t(\omega')\neq\ap$.  For example, the complement of a variable $\ap
x$ denotes $\TC\setminus\semg{x} = \TC\setminus\TC =
\emptyset$. Similarly, $\ap g(x)$ denotes $\TC \setminus \{g(t)\mid
t\in\TC\}$, and $f(\ap a,x)$ denotes $\{f(v,u)\mid v,u\in\TC\}
\setminus \{f(a,u)\mid u\in\TC\}$.

The compilation is simply realized by replacing all {\antit}s by their
absolute complement; this replacement can be expressed by a single
rewrite rule $\RrAP=\{\ap \meta{\tm}\raM \varCt\minus
\meta{\tm}\}$ where $\meta{t}$ is a variable ranging over {\antit}s
and $\varCt$ corresponds to a \emph{fresh} pattern-level variable
({\ie} a variable of the pattern being transformed).

\begin{example}
  \label{ex:antip}
  We have $f(x,\ap a)\evalM_{\RrAP} f(x,y\minus a)$ where $y$
  is a fresh variable; this pattern reduces {\wrt} $\RrC$ to
  $f(x,b)\plus f(x,f(y_1,y_2))$.
  Similarly $\ap f(x,\ap a)\evalM_{\RrAP} z\minus f(x,y\minus
  a)$ with $y,z$ fresh variables and the latter pattern reduces to
  $a\plus b\plus f(x,a)$.
\end{example}

$\RrAP$ is clearly convergent and the normal form of any {\antit} is
an extended term containing no $\ap$ symbol.
Since the reduction introduces only fresh variables and does not
duplicate terms, the normal form of a linear term is linear as well.
Moreover, the reduction preserves the ground semantics:

\begin{restatable}[Anti-pattern semantics preservation]{prop}{apEq}
\label{prop:apEq}
  For any  {\antit}s $p,p' \in\ETCXap$,
  if $p\evalM_{\RrAP}p'$ then, $\semg{p}=\semg{p'}$.
\end{restatable}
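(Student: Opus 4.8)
The plan is to reduce the statement to a single application of the one rule of $\RrAP$ and then propagate the resulting local equality through the surrounding context by compositionality of the ground semantics. Since $p \evalM_{\RrAP} p'$ is a one-step reduction, there is a position $\omega \in \PPos(p)$ with $\stt{p}{\omega} = \ap t'$, a fresh pattern variable $z$, and $p' = \rmp{p}{\omega}{z \minus t'}$. It therefore suffices to establish two things: (i) the local equality $\semg{\ap t'} = \semg{z \minus t'}$, and (ii) that replacing a subterm by one with the same ground semantics inside a linear term of $\ETCXap$ leaves the ground semantics of the whole term unchanged.

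For (i) I would simply compute both sides. By the semantics of anti-patterns, $\semg{\ap t'} = \TC \setminus \semg{t'}$. Because $z$ is fresh we have $\semg{z} = \TC$ and $z \notin \var{t'}$, so the definition of the semantics of a complement gives $\semg{z \minus t'} = \semg{z} \setminus \semg{t'} = \TC \setminus \semg{t'}$. Hence the two sides coincide; this local equality is the heart of the argument and everything else is bookkeeping.

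For (ii) I would argue by induction on the length of $\omega$, mirroring the recursive definition of $\semg{\cdot}$. The base case $\omega = \varepsilon$ is exactly (i). In the inductive step the symbol of $p$ above the hole is a constructor or tuple symbol, a $\plus$, a $\minus$, or an $\ap$; in each case $\semg{\cdot}$ is a fixed set operation applied to the ground semantics of the immediate arguments, so by the induction hypothesis the argument containing $\omega$ keeps its semantics and the whole term does as well. Linearity is what licenses this: the arguments of each symbol have pairwise disjoint variables, hence independent ground semantics, and the freshly introduced $z$ occurs nowhere else, so $p'$ stays linear and the replacement neither captures nor shares a variable.

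The only delicate point---and the step I expect to be the real obstacle---is the case of (ii) in which the rewritten $\ap$ lies in a negative position, that is, to the right of a $\minus$ or underneath another $\ap$; this genuinely arises since $\RrAP$ may fire at any occurrence of $\ap$, and the $\minus$ introduced by earlier steps create exactly such positions. Here it is essential to read $\semg{\cdot}$ on the mixed terms of $\ETCXap$ as the compositional extension $\semg{\ap s} = \TC \setminus \semg{s}$ rather than through the outermost-elimination formula used to define anti-pattern semantics: the two agree when the $\ap$ is reached through constructors alone, as for the pure anti-terms from which we start, but they can diverge on the intermediate mixed terms. Once one fixes the compositional reading, the local equality (i) propagates uniformly through positive and negative contexts alike, and step (ii)---and with it the proposition---goes through.
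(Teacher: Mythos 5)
Your proof is correct and follows essentially the same route as the paper's: establish the local equality $\semg{\ap t}=\semg{z}\setminus\semg{t}=\semg{z\minus t}$ at the rewritten position (the paper gets it by instantiating the defining equation of anti-pattern semantics at $\omega=\varepsilon$, which is the same computation), and then propagate it through the surrounding linear context via a congruence lemma for the ground semantics, which the paper states as a structural induction and you as an induction on the position. Your closing remark about reading $\semg{\cdot}$ compositionally on the intermediate mixed terms of $\ETCXap$ (in particular when the rewritten $\ap$ sits under a $\minus$) is a legitimate clarification of a point the paper's congruence step leaves implicit, not a deviation from its argument.
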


In the rest of this paper we will thus consider
that an anti-pattern is just syntactic sugar for the corresponding
extended pattern obtained by replacing all its sub-terms of the form
$\ap q$ by $z\minus q$ with $z$ a fresh variable.

\section{Elimination of redundant patterns}
\label{se:elimination}
%
%
We have so far a method for transforming an extended (anti-)pattern
$p$ into a set of constructor patterns~$P$.  The set $P$ is not
necessarily canonical and can contain, for example, duplicate or
redundant patterns, {\ie} patterns useless {\wrt} the other patterns
in $P$.

\begin{example}\label{ex:redundantRule}
  The pattern $f(x,!a)\minus f(b,a)$ which corresponds to
  $f(x,y\minus a)\minus f(b,a)$ is reduced by
  $\RrC$ to $f(a\plus f(x_1,x_2),b\plus f(y_1,y_2)) \plus f(x,b\plus
  f(y_1,y_2))$ and finally to the pure additive pattern $f(a,b) \plus
  f(f(x_1,x_2),b) \plus f(a,f(y_1,y_2)) \plus f(f(x_1,x_2),f(y_1,y_2))
  \plus f(x,b) \plus f(x,f(y_1,y_2))$.  The disambiguation of the
  initial pattern results thus in the set
  $\{f(a,b)\otrsep\;\;
  f(f(x_1,x_2),b)\otrsep\;\; f(a,f(y_1,y_2))\otrsep\;\;
  f(f(x_1,x_2),$ \linebreak $f(y_1,y_2))\otrsep\; f(x,b)\otrsep\; f(x,f(y_1,y_2))\}$
  which is clearly equivalent to the set $\{f(x,b) \otrsep\;
  f(x,f(y_1,y_2))\}$ since all the patterns of the former are subsumed
  by the patterns of the latter.
\end{example}

The simplification consisting in eliminating patterns subsumed by
other patterns is obvious and this is one of the optimizations
proposed in Section~\ref{se:experiments}.  There are some other cases
where a pattern is subsumed not by a single pattern but by several
ones.
The objective is to find, for each set $P$ of constructor patterns
resulting from the transformation of an extended pattern a smallest
subset $P'\subseteq P$ such that $P'$ has the same 
semantics as $P$.
In particular, a pattern $p_k$ from $P=\{p_1,\ldots,p_n\}$ can be
removed from $P$ without changing its semantics if
$\semg{p_k}\subseteq\bigcup_{j\neq k}\semg{p_j}$.
By exploring all possible removals we can find the smallest
subset~$P$.

\begin{example}
  \label{ex:elimPattern}
  We consider the signature from Example~\ref{ex:appComp} enriched
  with the 
  constructor $g$ with $\ar(g)=1$ and the set of constructor patterns $\{
    f(g(b),f(x,b)) \otrsep$
    $f(g(b),f(b,y)) \otrsep$
    $f(g(x),f(a,b)) \otrsep$
    $f(x,f(f(z_1,z_2),y)) \otrsep$
    $f(x,f(g(z),y)) 
  \}$.
  This time none of the patterns is subsumed directly by another one
  but the first one is subsumed by the set consisting of the four
  other patterns.  To convince ourselves we can consider instances of
  this pattern with~$x$ replaced respectively by $a, b, g(z)$ and
  $f(z_1,z_2)$ ({\ie} all the constructors of the signature) and check
  that each of these instances is subsumed by one of the other
  patterns.
\end{example}

We have seen that we can identify redundant patterns in a set $P$
(Proposition~\ref{th:subsumption}) and thus we can subsequently remove
them in order to obtain a \emph{valid subset} $P'\subseteq P$ with
equivalent semantics, $\semg{P'}=\semg{P}$.
Given a set of patterns we can remove all redundant patterns one by
one till the obtained set contains no such pattern but, depending on
the pattern we have chosen to eliminate at some point, we can
nevertheless get different valid subsets and some of them do not
necessarily lead to a minimal one.

\begin{figure}[t!]
  \begin{framed}
{\flushleft
$\minimum(P) = \minimum'(P,\emptyset)$\\
$\minimum'(\emptyset, kernel) = kernel$\\
$\minimum'(\{q\}\cup P, kernel) =$ \textsf{\textbf{if}} $q$ \textrm{is subsumed by} $P\cup kernel$ \textsf{\textbf{then}}\\
$\;\;\;\;\;\;\;\;\;\;\;\;\;\;\;\;\;\;\;\;\;\;\;\;\;\;\;\;\;\;\;\;\;\;\;\;\;\;\;\;\;\;\;\;\;\;\;$
$smallest\_set( \minimum'(P, \{q\}\cup kernel), \minimum'(P,kernel) )$\\
$\;\;\;\;\;\;\;\;\;\;\;\;\;\;\;\;\;\;\;\;\;\;\;\;\;\;\;\;\;\;\;\;\;\;\;\;\;\;\;\;\;\;\;$
\textsf{\textbf{else}}\\
$\;\;\;\;\;\;\;\;\;\;\;\;\;\;\;\;\;\;\;\;\;\;\;\;\;\;\;\;\;\;\;\;\;\;\;\;\;\;\;\;\;\;\;\;\;\;\;$
$\minimum'(P, \{q\}\cup kernel)$\\
}
\end{framed}
\caption{$\minimum(P)$ computes a minimal valid subset of~$P$.
$smallest\_set(P,P')$ returns $P$ if $\card{P}<\card{P'}$, $P'$
  otherwise.
\label{fig:minimize}
}
\end{figure}

Computing the smallest valid subset can be done by enumerating the
powerset of~$P$ and taking its smallest element $P'$ which is a valid
subset of $P$.  Figure~\ref{fig:minimize} presents a more efficient
algorithm where the search space is reduced: $P'$ is searched only
among the subsets of~$P$ which contain the initial $kernel$ of $P$,
{\ie} the set $\{p\mid p\in P,p\text{ is not subsumed by
}P\minus\{p\}\}$.  The algorithm still explores all the possible valid
subsets and eventually returns the minimal one:
%
\begin{restatable}[Minimal subset]{prop}{minimize}
  \label{prop:minimize}
  Given a set of constructor patterns $P$, the algorithm given in
  Figure~\ref{fig:minimize} computes the smallest valid subset
  $P'\subseteq P$.
\end{restatable}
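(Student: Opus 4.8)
The plan is to prove a stronger statement about the auxiliary function $\minimum'$ and then specialize it to the top-level call $\minimum(P)=\minimum'(P,\emptyset)$. Concretely, I would establish by induction on $\card{P}$ the following invariant: for all finite sets of constructor patterns $P$ and $kernel$, the call $\minimum'(P,kernel)$ terminates and returns a set $S$ of minimum cardinality among all sets satisfying $kernel\subseteq S\subseteq kernel\cup P$ and $\semg{S}=\semg{kernel\cup P}$; I will call such an $S$ \emph{admissible} (for the pair $(P,kernel)$). Termination is immediate since every recursive call strictly decreases $\card{P}$, so the substance is the minimality claim. The subsumption tests performed by the algorithm are read semantically throughout, which is justified by Proposition~\ref{th:subsumption}: "$q$ is subsumed by $Q$" means $\semg{q}\subseteq\semg{Q}$. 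For the base case $P=\emptyset$ the only admissible set is $kernel$ itself, since the constraint forces $S=kernel$ and $\semg{kernel}=\semg{kernel}$; hence $\minimum'(\emptyset,kernel)=kernel$ is trivially of minimum cardinality.

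For the inductive step I would write the first argument as $\{q\}\cup P$ and partition the admissible sets $S$ for $(\{q\}\cup P,kernel)$ according to whether $q\in S$. Because $kernel\subseteq S$, the admissible sets with $q\in S$ are exactly those with $\{q\}\cup kernel\subseteq S\subseteq(\{q\}\cup kernel)\cup P$ and $\semg{S}=\semg{(\{q\}\cup kernel)\cup P}$, i.e.\ precisely the admissible sets for the call $\minimum'(P,\{q\}\cup kernel)$. For the admissible sets with $q\notin S$ I would observe that $S\subseteq kernel\cup P$ forces $\semg{S}\subseteq\semg{kernel\cup P}$, so such an $S$ can meet $\semg{S}=\semg{kernel\cup\{q\}\cup P}$ only if $\semg{q}\subseteq\semg{kernel\cup P}$, that is, only if $q$ is subsumed by $kernel\cup P$; conversely, when $q$ is subsumed these are exactly the admissible sets for $\minimum'(P,kernel)$.

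This dichotomy matches the conditional in the algorithm. When $q$ is subsumed by $P\cup kernel$, both families are non-empty, and by the induction hypothesis $\minimum'(P,\{q\}\cup kernel)$ and $\minimum'(P,kernel)$ return minimum-cardinality members of each; the global minimum over their union is the smaller of the two, which is exactly $smallest\_set(\minimum'(P,\{q\}\cup kernel),\minimum'(P,kernel))$, the value returned by the \textsf{then}-branch. When $q$ is not subsumed, the second family is empty, so every admissible set contains $q$ and the whole family reduces to that of $\minimum'(P,\{q\}\cup kernel)$, whose minimum the induction hypothesis delivers, matching the \textsf{else}-branch. Instantiating the invariant at $kernel=\emptyset$ then gives that $\minimum(P)=\minimum'(P,\emptyset)$ returns a minimum-cardinality $S\subseteq P$ with $\semg{S}=\semg{P}$, i.e.\ a smallest valid subset of $P$, which is the claim.

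The main obstacle I anticipate is the bookkeeping in the inductive step: carefully verifying that the single semantic side-condition $\semg{S}=\semg{kernel\cup\{q\}\cup P}$ splits correctly into the two side-conditions attached to the recursive calls, and in particular establishing the equivalence "no admissible set omits $q$" $\iff$ "$q$ is not subsumed by $kernel\cup P$", since it is exactly this equivalence that makes the two branches of the conditional both exhaustive and sound. A secondary point worth stating explicitly is that the choice of the element $q$ extracted from $\{q\}\cup P$ does not affect correctness — it may change \emph{which} minimal valid subset is produced in case of ties, but not its cardinality, as the invariant only asserts minimality of $\card{S}$.
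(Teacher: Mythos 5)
Your proof is correct, and it takes a genuinely different route from the paper's. The paper argues in two stages using two auxiliary lemmas: first it shows that $\minimum(P)$ is a valid subset (every $p\in P$ is subsumed by the result, via an induction combined with Lemma~\ref{lemma:subsume}, then Lemma~\ref{lemma:validsubset}); second it establishes the monotonicity property $\card{\minimum(S\cup\{q\})}\leq\card{\minimum(S)}$ for any $q$ subsumed by $S$, and applies it repeatedly to the elements of $S\setminus S'$ for an arbitrary valid subset $S'$ to conclude $\card{\minimum(S)}\leq\card{S'}$. You instead strengthen the statement to a single invariant about the auxiliary function $\minimum'(P,kernel)$ --- that it returns a minimum-cardinality set among all $S$ with $kernel\subseteq S\subseteq kernel\cup P$ and $\semg{S}=\semg{kernel\cup P}$ --- and prove it by one induction on $\card{P}$, with the key step being the exact correspondence between the two branches of the conditional and the partition of the admissible sets according to whether they contain $q$. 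Your approach buys a precise characterization of the search space of every recursive call (making it immediately clear that no minimal candidate is missed and that the choice of the extracted element $q$ is immaterial), and it dispenses with the two auxiliary lemmas and the somewhat delicate iterated-monotonicity argument; the paper's approach keeps the top-level statement unchanged and isolates the set-theoretic content into reusable lemmas about subsumption. One cosmetic remark: the semantic reading of ``$q$ is subsumed by $Q$'' as $\semg{q}\subseteq\semg{Q}$ is the paper's \emph{definition} of subsumption; Proposition~\ref{th:subsumption} is what justifies that the test can be \emph{computed} by normalizing $q\minus(\cdots)$ with $\RrC$, so it is the effectiveness of the algorithm, not the correctness of your semantic reasoning, that rests on it.
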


\section{Function encoding} 
\label{se:caseEncode}
%
We have focused so far on the matching mechanism behind function
definitions using case expressions and we have eluded so far the
potential problems related to the evaluation of such functions.

If we consider, for example, a function $\foo$ defined by the list of
rules 
\[\arraycolsep=1.4pt
\begin{array}{ll}
[
&
\foo(z,a) \ra z 
\otrsep
\\
&
\foo(x,y) \ra y
~~]
\end{array}
\]
we can proceed to the disambiguation of its patterns which results in
the sets of patterns \{$\tup{z,a}$\} and $\{\tup{x,b}\otrsep
\tup{x,f(y_1,y_2)}\}$ as shown in the examples in the previous section.
Consequently, if we replace naively the initial patterns with the ones
obtained by disambiguation then  the following set of  corresponding
rules is obtained
\[\arraycolsep=1.4pt
\begin{array}{ll}
\{
&
\foo(z,a) \ra z 
\otrsep
\\
&
\foo(x,b) \ra y
\otrsep
\\
&
\foo(x,f(y_1,y_2)) \ra y
~~\}
\end{array}
\]
One can easily see that the two last rules are not well-defined and in
what follows we extend the transformations proposed in the previous
sections to tackle such situations.

\subsection{As-patterns and their encoding}
\label{se:aspatterns}

\begin{figure*}[!t]
\framebox{\makebox[1\width]{
\begin{mathpar}
\begin{array}{l@{\hspace{5pt}}r@{\hspace{5pt}}c@{\hspace{5pt}}l@{\hspace{5pt}}l}
\multicolumn{2}{l}{\textbf{Remove empty sets:}}\\
\text{\small{(\rmAddL)}}&
  \bot \plus \meta{\vm}  & \raM & \meta{\vm} \\
\text{\small{(\rmAddR)}}&
  \meta{\vm} \plus \bot  & \raM & \meta{\vm} \\[+2pt]
\multicolumn{2}{l}{\textbf{Distribute sets:}}\\
\text{\small{(\rmEmptyF)}}&
h(\meta{v_1},\ldots,\bot{},\ldots,\meta{v_n})   & \raM & \bot \\[+2pt]
\text{\small{(\rmEmptyA)}}&
  \varVar \at \bot{}   & \raM & \bot \\[+2pt]
\text{\small{(\rmDistAdd)}}&
   h(\meta{v_1},\ldots,\meta{v_i}\plus\meta{w_i},\ldots,\meta{v_n}) & \raM &
   h(\meta{v_1},\ldots,\meta{v_i},\ldots,\meta{v_n}) \plus h(\meta{v_1},\ldots,\meta{w_i},\ldots,\meta{v_n}) \\[+2pt]
\text{\small{(\rmDistAt)}}&
\varVar\at(\meta{v_1}\plus \meta{v_2}) & \raM & \varVar\at\meta{v_1} \plus \varVar\at\meta{v_2} \\[+2pt]
\multicolumn{2}{l}{\textbf{Simplify complements:}}\\[+2pt]
\text{\small{(\rmMinusMV)}}&
  \meta{\vm} \minus \varVar  & \raM & \bot \\[+2pt]
\text{\small{(\rmMinusMB)}}&
  \meta{\vm} \minus \bot   & \raM &  \meta{\vm}  \\[+2pt]
\text{\small{(\rmMinusMP)}}&
\meta{\wm} \minus (\meta{v_1} \plus \meta{v_2})   & \raM &  (\meta{\wm} \minus \meta{v_1}) \minus \meta{v_2} \\[+2pt]
\text{\small{(\rmMinusVMp)}}&
  \varVar \minus g(\meta{t_1},\ldots,\meta{t_n})  & \raM & 
  \varVar \at (
  \sum\limits_{c\in\CC} c(\varCt_1,\ldots,\varCt_{m}) \minus g(\meta{t_1},\ldots,\meta{t_n}) )&
  {\textit{with  } m=arity(c)} \\[+2pt]
\text{\small{(\rmMinusBM)}}&
  \bot \minus f(\meta{v_1},\ldots,\meta{v_n})  & \raM & \bot \\[+2pt]
\text{\small{(\rmMinusPM)}}&
  (\meta{\vm} \plus \meta{\wm}) \minus f(\meta{v_1},\ldots,\meta{v_n})   & \raM &  (\meta{\vm} \minus f(\meta{v_1},\ldots,\meta{v_n})) \plus (\meta{\wm} \minus f(\meta{v_1},\ldots,\meta{v_n})) \\[+2pt]
\text{\small{(\rmMinusFF)}}&
  f(\meta{v_1},\ldots,\meta{v_n}) \minus f(\meta{t_1},\ldots,\meta{t_n})  & \raM &
  f(\meta{v_1}\minus \meta{t_1},\ldots,\meta{v_n}) \plus \cdots \plus f(\meta{v_1},\ldots,\meta{v_n}\minus \meta{t_n})   \\[+2pt]
\text{\small{(\rmMinusFG)}}&
  f(\meta{v_1},\ldots,\meta{v_n}) \minus g(\meta{w_1},\ldots,\meta{w_n})  & \raM &
  f(\meta{v_1},\ldots,\meta{v_n}) 
  & {\textit{with  } f\not=g}\\[+2pt]
\text{\small{(\rmMinusAM)}}&
  \varVar \at \meta{\vm} \minus \meta{\wm}  & \raM &    \varVar \at (\meta{\vm} \minus \meta{\wm}) \\[+2pt]
\text{\small{(\rmMinusMA)}}&
  \meta{\vm} \minus \varVar \at \meta{\wm}  & \raM &    \meta{\vm} \minus \meta{\wm} \\[+2pt]
\end{array}
\end{mathpar}
}}
\caption{$\RrCat$: reduce (as-)extended patterns to {\addt} terms.
  $\meta{\vm},\meta{v_1},\ldots,\meta{v_n}$,
  $\meta{\wm},\meta{w_1},\ldots,\meta{w_n}$ range over {\addt}
  patterns, $\meta{t_1},\ldots,\meta{t_n}$ range over pure {\addt}
  patterns, $\varVar$ ranges over pattern variables. $f,g$ expand to
  all the symbols in $\CC\cup\CCt$, $h$ expands to all
  symbols in $\CC^{n>0}\cup\CCt$.
\label{fig:elimComplementAs}
}
\end{figure*}

We first consider a new construct for \emph{extended
patterns} which are now defined as follows:
$$
\begin{array}{rclr}
  p & :=  &  \XX \mid f(p_1,\ldots,p_n) \mid {p_1}\plus{p_2}  \mid  {p_1}\minus{p_2}  \mid \bot 
  \mid   q\at{p}
\end{array}
$$
with $f\in\CC, q\in\TCX$.

All patterns $q\at{p}$, called \emph{as-patterns}, are, as all the
other extended patterns, linear {\ie} $p$, $q$ are linear and
$\var{q}\cap\var{p}=\emptyset$.
As we will see, $\at$ is a convenient way to alias terms and use the
variable name in the right-hand side of the corresponding rewrite
rule.  In fact, all the aliases used explicitly in the left-hand sides
of the {\er}s (defined formally in Section~\ref{se:reduction}) are
of the form $x\at p$; the general form $q@p$ is used only in the
matching process and in this case $q\in\TF$.
$\at$ has a higher priority than $\minus$
which has a higher priority than $\plus$.
From now on, unless stated explicitly, extended patterns are
considered to include as-patterns.

The notion of ground semantics is extended accordingly for as-patterns:
$\semg{{q\at{p}}}=\semg{q}\cap\semg{p}$.
Notice that the variable $x$ aliasing the pattern $p$ in $x\at p$
has no impact on the semantics of the term: $\semg{x\at p}=\semg{p}$.

To transform any extended (as-)pattern into a pure {\addt} pattern we
use the rewriting system $\RrCat$ described in
Figure~\ref{fig:elimComplementAs}; it consists of the the rules of
$\RrC$ with the rule $\rmMinusVM$ slightly modified together with a set
of specific rules used to handle the as-patterns.

The new rules $\rmEmptyA$ and $\rmDistAt$ specify respectively that
aliasing a $\bot$ is useless and that aliasing a sum comes to aliasing
all its patterns. Rules $\rmMinusAM$ and $\rmMinusMA$ indicate that
the alias of a complement pattern $p\minus{q}$ concerns only the
pattern $p$. The modified rule $\rmMinusVMp$ guarantees that the
variables of a complement pattern are not lost in the transformation
and, as we will see in the next sections, prevent ill-formed rules as
those presented at the beginning of the section.

\begin{example}
  \label{ex:appCompAs}
  We consider the signature in Example~\ref{ex:appComp}.
  The pattern $f(x,y)\minus f(z,a)$ reduces {\wrt} $\RrCat$ as it had
  {\wrt} $\RrC$ but because of the new rule $\rmMinusVMp$ we obtain
  $f(x,y\at (b\plus f(y_1,y_2)))$. This latter term is eventually
  reduced using rules $\rmDistAt$ and $\rmDistAdd$ to $f(x,y\at b)\plus
  f(x,y\at f(y_1,y_2))$.
\end{example}

$\RrCat$ is convergent and 
the normal form of a term in $\ETCX$ is similar to that that obtained
with $\RrC$ but with some of its subterms potentially aliased with the
$\at$ construct.

\begin{restatable}[Convergence]{lemm}{convergenceAt}
  The rewriting system $\RrCat$ is confluent and terminating.  Given
  an extended pattern $t$ 
  the normal form of $t$ {\wrt} to $\RrCat$ is either $\bot{}$ or a
  sum of (tuples of) constructor patterns potentially aliased, {\ie} a
  pure {\addt} term $t$ such that if $t(\omega)={\plus}$ for a given
  $\omega$ then, for all $\prefix{\omega'}{\omega}$,
  $t(\omega')=\plus$.
\end{restatable}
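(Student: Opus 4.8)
The plan is to prove convergence of $\RrCat$ by establishing termination and local confluence separately, then invoking Newman's Lemma, and finally characterising the shape of the normal forms. Since $\RrCat$ extends $\RrC$ (whose convergence is established in Lemma~\ref{lemma:convergence}) with the four as-pattern rules $\rmEmptyA$, $\rmDistAt$, $\rmMinusAM$, $\rmMinusMA$ together with the modified rule $\rmMinusVMp$, I would try to reuse as much of the reasoning for $\RrC$ as possible rather than redo everything from scratch.

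First I would address termination. The natural approach is a well-founded measure on extended patterns that strictly decreases under every rule. The $\minus$-elimination rules and the $\plus$-distribution rules are essentially inherited from $\RrC$, so the measure used there (which pushes $\minus$ downward, eliminates it against variables and mismatched heads, and lifts $\plus$ to the top) should extend. The delicate new point is $\rmMinusVMp$: unlike the old $\rmMinusVM$ which simply rewrote $\varVar\minus g(\ldots)$ to a sum of complements, the modified rule wraps the result in an $\at$ alias, introducing a fresh $\at$ symbol. I would therefore design a lexicographic or weighted interpretation in which the number/nesting of $\minus$ symbols (weighted to account for the expansion in $\rmMinusVMp$ and $\rmMinusMP$) dominates, with the $\at$-handling rules ($\rmEmptyA$, $\rmDistAt$, $\rmMinusAM$, $\rmMinusMA$) decreasing a secondary component — e.g. the number of $\at$ symbols that sit above a $\minus$, a $\plus$, or a $\bot$. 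Rules $\rmMinusAM$ and $\rmMinusMA$ push $\at$ inside or delete it against $\minus$, and $\rmDistAt$/$\rmEmptyA$ move $\at$ past $\plus$ or annihilate it against $\bot$, so each should strictly reduce this secondary measure while not increasing the primary one.

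Next I would prove local confluence by checking that all critical pairs are joinable. Because the new rules interact with the old ones, the genuinely new overlaps are those involving the $\at$ construct: for instance $\varVar\at(\vm\plus\wm)$ where the argument is simultaneously subject to $\rmDistAt$ and to an inner $\plus$-rule, or $\varVar\at\bot$ reachable both by $\rmEmptyA$ directly and by first simplifying its argument, or the interaction between $\rmMinusAM$/$\rmMinusMA$ and the $\minus$-rules that may fire on the same redex. I would enumerate these overlaps and verify joinability case by case; the overlaps purely among the inherited $\RrC$ rules are already handled by Lemma~\ref{lemma:convergence}. With termination and local confluence in hand, Newman's Lemma yields confluence.

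Finally I would establish the normal-form characterisation. The argument mirrors the one for $\RrC$: any term containing $\plus$ below another symbol is reducible by $\rmDistAdd$ or $\rmDistAt$; any $\minus$ is reducible by one of $\rmMinusMV$--$\rmMinusMA$; any $\bot$ below a constructor or an $\at$ is removed by $\rmEmptyF$ or $\rmEmptyA$; and a leftover $\ap$ cannot occur since we are in $\ETCX$. Hence an irreducible term is $\bot$ or a sum of (tuples of) constructor patterns, possibly with $\at$ aliases on subterms, with every $\plus$ lying above any other symbol — precisely the stated prefix-closure condition on positions labelled $\plus$. I expect the main obstacle to be termination, specifically finding a single measure that simultaneously absorbs the $\at$-wrapping introduced by $\rmMinusVMp$ and the $\at$-propagation performed by $\rmDistAt$, $\rmMinusAM$ and $\rmMinusMA$ without any rule increasing the dominant $\minus$-component; getting the lexicographic ordering of these components right is the crux, while local confluence should be a finite and routine critical-pair check.
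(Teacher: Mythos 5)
Your overall architecture --- termination plus local confluence of all critical pairs, Newman's Lemma, and a ``every non-conforming term is a redex'' argument for the normal forms --- is exactly the paper's hand proof (the paper additionally offers a fully automatic route via a first-order meta-encoding of the rule schemas checked by {\aprove}/{\TTT}). The normal-form and critical-pair parts of your plan are sound and match the paper's; note only that joining the overlap between $\rmEmptyA$ and $\rmMinusAM$ uses the auxiliary fact that $\bot\minus v$ reduces to $\bot$ for every additive $v$, which the paper proves separately.

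The genuine gap is the termination order, which you correctly flag as the crux but whose sketched form would fail. A ``number/nesting of $\minus$ symbols'' cannot be the dominant component, however weighted: $\rmMinusMP$ turns one $\minus$ into two, $\rmMinusPM$ and $\rmMinusFF$ duplicate the complement ($n$ copies for $\rmMinusFF$), and $\rmMinusVMp$ turns one $\minus$ into $\card{\CC}$ of them, so the raw count strictly increases on most of the rules you need it to handle, and no uniform weighting fixes this because the duplication factor depends on the signature and on the arity of the terms involved. The paper's resolution is structurally different: the dominant component is not a count of $\minus$ at all but a set $\sizep(t)$ of positions \emph{potentially subject to a future variable-complement} $\varVar\minus u$, ordered by strict inclusion; this is strictly decreasing only on $\rmMinusVMp$ (the one rule that genuinely ``consumes'' such a potential) and merely non-increasing everywhere else, while a single RPO with status --- precedence making $\minus$ the greatest symbol and $\at$ placed below it, with $\minus$ compared lexicographically on its \emph{second} argument first --- provides the strict decrease for all remaining rules, including the four $\at$-rules. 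So rather than your two-tier split (``$\minus$-rules decrease the primary, $\at$-rules decrease the secondary''), the secondary RPO component does almost all of the work and the primary component exists solely to absorb $\rmMinusVMp$; the $\at$-rules are handled not by counting $\at$ occurrences but by making $\sizep$ (and the auxiliary maps $\varp$, $\symp$) transparent to aliases, i.e.\ $\sizep(X\at u)=\sizep(u)$, so that wrapping the result of $\rmMinusVMp$ in an alias costs nothing in the primary order. Without supplying an order of this kind (or delegating to the meta-encoding and an automatic prover), the termination claim remains unproved.
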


For the same reasons as before, a linear term is always rewritten to a
linear term and thus, the normal form of a linear term is linear as
well.  Once again, each intermediate step and consequently the overall
transformation is sound and complete {\wrt} the ground semantics.

\begin{restatable}[Complement semantics preservation]{prop}{comEqAs}
\label{prop:comEqAs}
  For any extended pattern $p,p'$ 
  if $p\evalM_{\RrCat}p'$ then, $\semg{p}=\semg{p'}$.
\end{restatable}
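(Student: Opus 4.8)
The plan is to prove Proposition~\ref{prop:comEqAs} (semantics preservation for $\RrCat$) by showing that each individual rewrite rule of $\RrCat$ preserves the ground semantics, and then lift this to arbitrary single-step reductions $p \evalM_{\RrCat} p'$ by a standard congruence argument. Since $\evalM_{\RrCat}$ rewrites a subterm at some position, and the ground semantics is defined compositionally (recursively through $\CC\cup\CCt$, and distributively over $\plus$, $\minus$, $\at$), it suffices to establish that for every rule $\ell \raM r$ and every ground substitution of the overlined TRS-variables by additive patterns (respecting the side conditions in Figure~\ref{fig:elimComplementAs}), we have $\semg{\ell} = \semg{r}$; the congruence closure then propagates this equality upward through any enclosing context because $\semg{\cdot}$ commutes with each constructor.

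First I would dispatch the rules already present in $\RrC$, for which Proposition~\ref{prop:comEq} has in effect done the work: rules $\rmAddL$, $\rmAddR$, $\rmEmptyF$, $\rmDistAdd$, $\rmMinusMV$, $\rmMinusMB$, $\rmMinusMP$, $\rmMinusBM$, $\rmMinusPM$, $\rmMinusFF$, $\rmMinusFG$ each correspond to an elementary set-theoretic identity over $\semg{\cdot}$ (neutrality of $\emptyset$ for $\cup$, annihilation under a constructor when a component is $\emptyset$, distribution of $\cup$ through constructors, $S\setminus\TC=\emptyset$, $S\setminus\emptyset=S$, $S\setminus(A\cup B)=(S\setminus A)\setminus B$, the difference of Cartesian products, and disjointness for distinct head symbols). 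These verifications are routine unfoldings of the semantic equations $\semg{p_1\plus p_2}=\semg{p_1}\cup\semg{p_2}$, $\semg{p_1\minus p_2}=\semg{p_1}\setminus\semg{p_2}$, $\semg{\bot}=\emptyset$, and the recursive clause for constructors.

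Next I would treat the genuinely new material, namely the as-pattern rules and the modified rule. The key semantic fact here is $\semg{x\at p}=\semg{p}$, since the aliasing variable does not constrain instances; using this together with $\semg{q\at p}=\semg{q}\cap\semg{p}$, the rules $\rmEmptyA$, $\rmDistAt$, $\rmMinusAM$, and $\rmMinusMA$ all reduce to set identities: $\semg{x}\cap\emptyset=\emptyset$; $S\cap(A\cup B)=(S\cap A)\cup(S\cap B)$ combined with $\semg{x}=\TC\supseteq$ everything; $(\semg{x}\cap\semg{v})\setminus\semg{w}=\semg{x}\cap(\semg{v}\setminus\semg{w})$ using $\semg{x}=\TC$; and $\semg{v}\setminus(\semg{x}\cap\semg{w})=\semg{v}\setminus\semg{w}$ again via $\semg{x}=\TC$. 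The most delicate case is the modified rule $\rmMinusVMp$, which wraps the right-hand side of $\rmMinusVM$ in an alias $\varVar\at(\cdots)$. Since $\semg{\varVar}=\TC=\bigcup_{c\in\CC}\semg{c(z_1,\ldots,z_m)}$, aliasing by $\varVar$ adds no constraint, so $\semg{\varVar\at(\sum_c c(\ldots)\minus g(\ldots))}=\semg{\sum_c c(\ldots)\minus g(\ldots)}$, which is exactly the semantics of the $\rmMinusVM$ right-hand side already validated for $\RrC$. I expect this rule to be the main obstacle, not because the semantic equality is hard, but because one must argue carefully that introducing the fresh alias variable $\varVar$ genuinely contributes nothing to the ground semantics while being essential for linearity and for the later well-formedness of rules; the proof should make explicit that $\semg{\cdot}$ is invariant under aliasing, so that the added $\at$ is semantically transparent here even though it is syntactically crucial downstream.
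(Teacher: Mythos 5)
Your proposal is correct and follows essentially the same route as the paper: verify that each rewrite rule of $\RrCat$ preserves the ground semantics (reusing the $\RrC$ cases from Proposition~\ref{prop:comEq}), and handle the new $\at$-rules and $\rmMinusVMp$ via $\semg{\varVar}=\TC$, so that intersecting with $\semg{\varVar}$ is the identity. The paper's proof is just a terser statement of exactly this argument.
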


\subsection{Matchable and free variables}
\label{se:matchable}
%
Given a constructor pattern $p$ and a value $v$, if $v\in\semg{p}$
then there exists a substitution $\sigma$ with $\dom{\sigma}=\var{p}$
{\st} $v\in\semg{\sigma(p)}$, or equivalently $v=\semg{\sigma(p)}$. When $p$
is an extended pattern some of its variables are not significant for
the matching, {\ie} if $v\in\semg{p}$ then there exists a substitution
$\sigma$ {\st} $v\in\semg{\tau(\sigma(p))}$ for all substitution
$\tau$ with $\dom{\tau}=\var{p}\setminus\dom{\sigma}$.
For example, given the pattern 
$f(x,z\minus g(y))$, the value $f(a,b)$ belongs to the semantics of
any instance of $f(a,b\minus g(y))$.

The set $\mvar p$ of \emph{matchable variables} of a pattern $p$ is
defined as follows:

\[
\begin{array}{l@{\hspace{15pt}}l}
  \mvar{x} = \{x\}, \forall x\in\XX  
  \\
  \mvar{f(p_1,\ldots,p_n)} = \mvar{p_1}\cup\ldots\cup\mvar{p_n}, \forall f \in \CC^n
  \\
  \mvar{p_1\plus p_2} = \mvar{p_1}\cap\mvar{p_2} 
  \\
  \mvar{p_1\minus p_2} =  \mvar{p_1} 
  \\
  \mvar{\bot} = \XX 
  \\
  \mvar{q\at p} =  \mvar{q}\cup\mvar{p} 
\end{array}
\]
The variables of $p$ which are not matchable are \emph{free}: 
\[
\fvar{p}=\var{p}\setminus\mvar{p}
\]

Note that the definition of linearity we have used for (complement)
patterns guarantees that matchable and free variables have different
names. Consequently, we have $\sigma(p_1\minus
p_2)=\sigma(p_1)\minus{p_2}$ for all $\sigma$ such that
$\dom{\sigma}=\mvar{p_1\minus p_2}$.

The encoding rules preserve not only the semantics but also the set of
matchable variables of the initial pattern; this is important when
transforming the {\er}s introduced in the next section.

\begin{restatable}[Variables preservation]{prop}{mvarPreserve}
\label{prop:mvarPreserve}
  For any extended patterns $p,p'$ such that
  $p\evalM_{\RrCat}p'$
  we have $\mvar{p}\subseteq\mvar{p'}$ and $\fvar{p'}\subseteq\fvar{p}$.
\end{restatable}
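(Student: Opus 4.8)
The plan is to prove Proposition~\ref{prop:mvarPreserve} by establishing it for a single rewrite step and then extending to the reflexive-transitive closure by induction on the length of the derivation (the inclusions $\mvar{p}\subseteq\mvar{p'}$ and $\fvar{p'}\subseteq\fvar{p}$ compose transitively, and reflexivity is trivial). For the single-step case, since $\RrCat$ rewrites at an arbitrary position inside the pattern, I would first reduce to the situation where the redex is at the root. This uses a congruence argument: if $p\evalM_{\RrCat}p'$ by contracting a redex at position $\omega$ with $\stt{p}{\omega}\evalM_{\RrCat}\stt{p'}{\omega}$ at the root, then I need to show that the definition of $\mvar{\cdot}$ is monotone under replacing a subterm by one with a larger matchable-variable set. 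Concretely, I would prove a congruence lemma: for every pattern context and every defining clause of $\mvar{\cdot}$ (the cases for $f$, $\plus$, $\minus$, $\at$), if $\mvar{s}\subseteq\mvar{s'}$ then $\mvar{C[s]}\subseteq\mvar{C[s']}$. This holds because every clause in the definition of $\mvar{\cdot}$ builds the matchable set from its immediate subterms using only $\cup$ and $\cap$, which are monotone in each argument.

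Second, I would verify the root case rule by rule, checking for each of the rewrite rules in Figure~\ref{fig:elimComplementAs} that contracting it does not shrink the matchable-variable set. Most rules are routine: for instance, $\rmAddL$ ($\bot\plus\meta{\vm}\raM\meta{\vm}$) gives $\mvar{\bot\plus\vm}=\XX\cap\mvar{\vm}=\mvar{\vm}$, an equality; the distribution rules $\rmDistAdd$ and $\rmDistAt$ likewise yield equalities after expanding the definitions and using that $\cap$ distributes over $\cup$ and that the duplicated context shares its matchable variables. The crucial rule is the modified $\rmMinusVMp$, $\varVar\minus g(\ldots)\raM \varVar\at(\sum_c c(\varCt_1,\ldots,\varCt_m)\minus g(\ldots))$: on the left $\mvar{\varVar\minus g(\ldots)}=\mvar{\varVar}=\{\varVar\}$, while on the right the outer $\at$ gives $\mvar{\varVar}\cup\mvar{\cdots}\supseteq\{\varVar\}$, so the inclusion is preserved (and it is exactly here that the switch from $\rmMinusVM$ to $\rmMinusVMp$ is what keeps $\varVar$ matchable rather than discarding it). I would similarly check $\rmMinusMV$ ($\mvar{\vm\minus\varVar}=\mvar{\vm}$, equal to the matchable set of $\bot$ only if one is careful — here the result is $\bot$ with $\mvar{\bot}=\XX\supseteq\mvar{\vm}$, so the inclusion holds comfortably), and $\rmMinusMA$, $\rmMinusAM$ which just move or drop the alias on the complemented side.

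The statement about free variables, $\fvar{p'}\subseteq\fvar{p}$, does \emph{not} follow formally from $\mvar{p}\subseteq\mvar{p'}$ alone, since $\fvar{}$ is defined as $\var{}\setminus\mvar{}$ and $\var{}$ may change under rewriting (new fresh variables $\varCt_i$ are introduced by $\rmMinusVMp$). So I would prove it by a parallel rule-by-rule check, or more cleanly by combining the matchable inclusion with the observation that the rules only ever \emph{add} variables as matchable (the fresh $\varCt_i$ introduced in $\rmMinusVMp$ appear inside $c(\varCt_1,\ldots,\varCt_m)$ and are therefore matchable in the result), so no variable can pass from matchable to free. I expect the main obstacle to be bookkeeping around the fresh variables in $\rmMinusVMp$: one must confirm that the newly generated $\varCt_i$ are matchable in $p'$ (so they do not enter $\fvar{p'}$) and that no previously matchable variable of $p$ becomes free, which is precisely the content of the matchable inclusion established in the first part. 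Once the congruence lemma and the root-case verification of $\rmMinusVMp$ are in hand, the remaining cases are purely mechanical.
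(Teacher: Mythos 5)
Your proposal follows essentially the same route as the paper's proof: a structural induction (your congruence lemma on one-hole contexts is exactly the inductive step, relying on the monotonicity of $\cup$ and $\cap$ in the defining clauses of $\mvar{\cdot}$), combined with a rule-by-rule verification at the root where $\rmMinusVMp$ is the only rule for which the matchable set strictly grows, and with the free-variable inclusion obtained by tracking that $\var{p'}\subseteq\var{p}$ for all rules except $\rmMinusVMp$, whose fresh $\varCt_i$ are argued to be matchable in the result. Nothing in your argument diverges from the paper's, including that last claim about the fresh variables (the paper likewise asserts $\var{p'}\setminus\var{p}\subseteq\mvar{p'}$ for $\rmMinusVMp$).
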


We consider for convenience that the set of matchable variables of
$\bot$ is the set of all variables; a more natural definition
considering the empty set would have required a more complicated
statement for the above proposition dealing explicitly with the rules
whose right-hand side is $\bot$. As explained in the next section this
choice has no impact on the proposed formalism.

The above property was not verified by the rule $\rmMinusVM$ of $\RrC$
and it was intuitively the origin of the ill-formed rules
presented at the beginning of the section.
An immediate consequence of  this property is that for any
patterns $p,p_1,p_2,\ldots,p_n$ such that
$p\downarrow_{\RrCat}=p_1\plus p_2\plus\ldots\plus p_n$ we have
$\mvar{p}\subseteq\mvar{p_i}$ for all $i\in[1,\ldots,n]$.

\subsection{Encoding sets and lists of {\er}s}
\label{se:reduction}
%

An \emph{\erw}, or simply \emph{\er}, over a signature $\Sigma$ is a
pair $(l,r)$ (also denoted $l \ra r$) with $l=\foo(\ovr{p})$,
$\foo\in\DD$, $\ovr{p}$ a tuple of extended patterns, and $r\in\TFX$,
such that $\var r\subseteq \mvar{\ovr{p}}$.
A set of {\er}s $\EE=\{\foo(\ovr{p_1})\ra
r_1,\ldots,\foo(\ovr{p_n})\ra r_n\}$ induces a relation
$\eval_{\EE}$ over $\TF$ such that $t \eval_{\EE}
t'$ iff there exist $i\in[1,\ldots,n]$, $\omega\in\PPos(t)$ and a
substitution $\sigma$ such that $\stt t \omega = \foo(\ovr{v})$,
$\dom{\sigma}=\mvar{\ovr{p_i}}$, $\ovr{v} \in
\semg{\sigma(\ovr{p_i})}$ and $t'=\rmp t \omega {\sigma(r_i)}$.
A list of {\er}s $\LL=[\foo(\ovr{p_1})\ra
r_1,\ldots,\foo(\ovr{p_n})\ra r_n]$ induces a relation
$\eval_{\LL}$ over $\TF$ such that $t \eval_{\LL}
t'$ iff there exist $i\in[1,\ldots,n]$, $\omega\in\PPos(t)$ and a
substitution $\sigma$ such that $\stt t \omega = \foo(\ovr{v})$,
$\dom{\sigma}=\mvar{\ovr{p_i}}$, $\ovr{v} \in
\semg{\sigma(\ovr{p_i})}$, ${p_j} \nmatcha {v}, \forall j<i$ and
$t'=\rmp t \omega {\sigma(r_i)}$.  We may write $\foo(\ovr{v})
\evaleps{i}_{\LL} t'$ to indicate that the $i$-th {\er} has been
used in the reduction.

When restricting to constructor patterns, all variables of a pattern
$p$ are matchable and thus, for all $\sigma$ such that
$\dom{\sigma}=\mvar{p}$, $\sigma(p)$ is ground and for all value $v$
we have ${v}\in\semg{\sigma(p)}$ iff ${v}=\sigma(p)$. Consequently,
sets and lists of {\er}s whose left-hand sides contain only
constructor patterns are nothing else but CTRSs and respectively
ordered CTRSs. 
We have used a different syntax for {\er}s and general rules in order
to help the reader identify the rules used in function definitions from
those used to transform the extended patterns.

In the rest of this section we show how these lists of {\er}s, which
can be seen as extended ordered CTRSs, can be compiled towards plain
ordered CTRS and eventually plain (order independent) CTRS.

Note that since we considered that the set of matchable variables of
$\bot$ is the set of all variables, we can have {\er}s with a
left-hand side $\bot$ and an arbitrary right-hand. Such {\er}s are
of no practical use since they apply on no term. For the rest of this
paper we restrict thus to {\er}s containing no $\bot$.

\begin{example}
\label{ex:equationLists}
We consider 
a signature 
with
 $\DD=\{\foo\},$
$\CC=\{a,b,f\}$, $\ar(\foo)=2$, $\ar(a)=\ar(b)=0$, $\ar(f)=2$  and
the list of {\er}s
$\LL={[ \foo(x, y\at\ap a) \ra y, \foo(a \plus b,y) \ra y, \foo(f(x,y),z) \ra x]}$.
We have the reductions
 $\foo(a,b) \evaleps{1}_{\LL} b$,
 $\foo(a,a) \evaleps{2}_{\LL} a$,
 $\foo(b,a) \evaleps{2}_{\LL} a$,
 $\foo(f(b,a),a) \evaleps{3}_{\LL} b$.
We also have
 $\foo(\foo(a,a),\foo(f(b,a),a)) \evaleps{3}_{\LL} \foo(\foo(a,a),b) \evaleps{2}_{\LL}  \foo(a,b) \evaleps{1}_{\LL} b$.
\end{example}

The semantics preservation guaranteed by $\RrCat$ has several
consequences. On one hand, the corresponding transformations can be
used to check the completeness of an equational definition and on the
other hand, they can be used to transform extended patterns into
equivalent constructor patterns (potentially aliased) and eventually
to compile prioritized equational definitions featuring extended
patterns into classical and order independent definitions. The
former was detailed in Section~\ref{se:elimination}. For the latter, we
proceed in several steps and we define first a transformation
$\TRcomp$ which encodes a list of  {\er}s
$\LL=[\foo(\ovr{p_1})\ra t_1,\ldots,\foo(\ovr{p_n})\ra t_n]$ into a
list of rules using only constructor patterns potentially aliased:

\[
\begin{array}{l@{\hspace{3pt}}l@{\hspace{3pt}}}
\TRcomp(\LL)=\oplus_{i=1}^{n} 
[ & 
\foo(\ovr{q_1^i}) \ra t_i \otrsep \ldots \otrsep \foo(\ovr{q_m^i})\ra t_i \mid \\
&
\ovr{q_1^i}\plus\ldots\plus\ovr{q_m^i}=\ovr{p_i}\downarrow_{\RrCat}\neq\bot,\\
& 
\ovr{q_1^i},\ldots,\ovr{q_m^i} \text{ contain no}\text{ symbol } \plus
]
\end{array}
\]~\\
The order should be preserved in the resulting list of rules,
{\ie} the rules obtained for a given {\er} should be placed in
the resulting list at the corresponding position. The transformation
can be applied in a similar way to sets of {\er}s.

Note that if the left-hand side of a rule reduces to $\bot$ then,
there is no corresponding rule in the result of the
transformation.
According to Proposition~\ref{prop:mvarPreserve},
$\mvar{p_i}\subseteq\mvar{q_j^i}$ for all $j\in[1,\ldots,m]$ and since
$\RrCat$ preserves the linearity then, the rules in
$\TRcomp(\LL)$ are all well-formed.

\begin{example}
\label{ex:TRcomp}
We consider the signature and  the list of {\er}s in Example~\ref{ex:equationLists}.
The pattern $\foo(x, y\at\ap a)$ of the first {\er} reduces {\wrt}
$\RrCat$ to ${f(x,y\at b)}\plus {f(x,y\at f(y_1,y_2))}$ (reductions are
similar to those in Example~\ref{ex:antip} but we also consider
aliasing now).  The pattern $\foo(a \plus b,y)$ reduces immediately to
the pattern $\foo(a,y) \plus \foo(b,y)$.
We have thus
$\TRcomp(\LL)=[
{f(x,y\at b) \ra y}
\otrsep 
{f(x,y\at f(y_1,y_2)) \ra y}
\otrsep $\linebreak
$
{\foo(a,y) \ra y}
\otrsep 
{\foo(b,y) \ra y}
\otrsep{\foo(f(x,y),z) \ra x}
]$.

We can then check that the reductions presented in
Example~\ref{ex:equationLists} are still possible with $\TRcomp(\LL)$:
 $\foo(a,b) \evaleps{1}_{\TRcomp(\LL)} b$,
 $\foo(a,a) \evaleps{3}_{\TRcomp(\LL)} a$,
 $\foo(b,a) \evaleps{4}_{\TRcomp(\LL)} a$ and
$\foo(f(b,a),a) \evaleps{5}_{\TRcomp(\LL)} b$.
As before, we have also
 $\foo(\foo(a,a),\foo(f(b,a),a)) \evaleps{5}_{\TRcomp(\LL)} \foo(\foo(a,a),b) \evaleps{3}_{\TRcomp(\LL)}  \foo(a,b) \evaleps{1}_{\LL} b$.
\end{example}

The transformation preserves the corresponding relations:
\begin{restatable}[Complement encoding]{prop}{compEnc}
  \label{prop:compEnc}
  Given a list of {\er}s $\LL$ and a term $t\in\TF$, we have $t\eval_{\LL}t'$ iff
  $t\eval_{\TRcomp(\LL)}t'$.
\end{restatable}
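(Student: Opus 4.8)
The plan is to establish the two directions of the \emph{iff} separately by analyzing a single reduction step $t\eval t'$ and matching the redex used in $\LL$ with a corresponding redex in $\TRcomp(\LL)$ (and vice versa). The heart of the matter is a purely local correspondence: a single rewrite step in $\LL$ using the $i$-th \er{} at some position $\omega$ must be simulated by a single step in $\TRcomp(\LL)$ using one of the derived rules $\foo(\ovr{q_k^i})\ra t_i$ at the same position, and conversely. Since $\eval_{\LL}$ and $\eval_{\TRcomp(\LL)}$ are defined by a single application of one \er{} at one position, it suffices to prove the equivalence for one-step reductions; the statement as phrased already concerns single steps, so no transitive-closure bookkeeping is needed.

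\textbf{Forward direction.} First I would assume $t\eval_{\LL}t'$, so there are $i$, $\omega$, and $\sigma$ with $\stt t\omega=\foo(\ovr v)$, $\dom\sigma=\mvar{\ovr{p_i}}$, $\ovr v\in\semg{\sigma(\ovr{p_i})}$, ${p_j}\nmatcha v$ for all $j<i$, and $t'=\rmp t\omega{\sigma(r_i)}$. By Lemma~\ref{lemma:convergence}'s analogue (Convergence of $\RrCat$) and Proposition~\ref{prop:comEqAs}, $\ovr{p_i}\downarrow_{\RrCat}=\ovr{q_1^i}\plus\cdots\plus\ovr{q_m^i}$ has the same ground semantics as $\ovr{p_i}$; since $\semg{\cdot}$ distributes over $\plus$, we have $\ovr v\in\semg{\sigma(\ovr{p_i})}=\bigcup_{k}\semg{\sigma(\ovr{q_k^i})}$, so $\ovr v\in\semg{\sigma(\ovr{q_k^i})}$ for some $k$. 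The key point is that this $k$-th derived rule is the one applicable in $\TRcomp(\LL)$: its matchable variables contain $\mvar{\ovr{p_i}}$ by Proposition~\ref{prop:mvarPreserve}, so the restriction of $\sigma$ realizes the match and produces the same $\sigma(t_i)$ (the right-hand side is shared). To conclude $t\eval_{\TRcomp(\LL)}t'$ I must check that the derived rule is the \emph{first} applicable one in the resulting list, which requires translating the negative conditions ${p_j}\nmatcha v$ for $j<i$ into the corresponding negative conditions on the derived rules coming from $\LL$'s earlier equations; here Proposition~\ref{th:subsumptionVSsemanticsEXT} converts $\matcha$ into membership in $\semg{\cdot}$, and the semantics-preservation of $\RrCat$ guarantees that $v$ avoiding $\semg{p_j}$ is equivalent to $v$ avoiding the union of semantics of all derived rules of the $j$-th block.

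\textbf{Converse direction.} For the reverse implication I would run the same argument backwards: a step $t\eval_{\TRcomp(\LL)}t'$ uses some derived rule $\foo(\ovr{q_k^i})\ra t_i$ as the first applicable one; since $\semg{\ovr{q_k^i}}\subseteq\semg{\ovr{p_i}}$, the match lifts to a match of $\ovr{p_i}$ with the same substitution on $\mvar{\ovr{p_i}}$ and the same right-hand side instance, and the priority conditions transfer back in the same way. \textbf{The main obstacle} will be the careful handling of the order/priority conditions across the $\oplus$-concatenation that defines $\TRcomp$: I must show that the block-structure of $\TRcomp(\LL)$ exactly mirrors the original ordering, so that ``first matching derived rule comes from block $i$'' is equivalent to ``first matching \er{} is the $i$-th''. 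This reduces to the semantic identity $\semg{\ovr{p_j}}=\bigcup_k\semg{\ovr{q_k^j}}$ for each block $j$ (Proposition~\ref{prop:comEqAs} plus distributivity of $\semg\cdot$ over $\plus$), which makes the negative matching conditions block-wise equivalent; the aliasing introduced by $\rmMinusVMp$ causes no trouble because $\semg{x\at p}=\semg{p}$, so as-patterns are semantically transparent while still carrying the matchable variables needed for well-formedness of the right-hand sides.
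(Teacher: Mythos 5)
Your overall strategy coincides with the paper's: a local, one-step simulation in both directions, using the semantic equality between $\ovr{p_i}$ and its $\RrCat$-normal form to transfer both the positive match and the negative priority conditions, and Proposition~\ref{prop:mvarPreserve} to keep the right-hand sides well-formed. However, there is a genuine gap at the central step of the forward direction. You derive $\ovr v\in\semg{\sigma(\ovr{p_i})}=\bigcup_{k}\semg{\sigma(\ovr{q_k^i})}$ from Proposition~\ref{prop:comEqAs} (which gives only $\semg{\ovr{p_i}}=\semg{\ovr{q_1^i}\plus\cdots\plus\ovr{q_m^i}}$) together with distributivity of $\semg{\cdot}$ over $\plus$. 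That inference is invalid: equality of ground semantics of two patterns is \emph{not} stable under instantiation. For instance $\semg{x}=\semg{a\plus b\plus f(y_1,y_2)}=\TC$, yet applying $\sigma=\{x\mapsto a\}$ yields $\semg{\sigma(x)}=\{a\}$ while the right-hand side is unchanged. This is precisely the phenomenon that produces the ill-formed rules at the beginning of Section~\ref{se:caseEncode}, and it is the entire reason the paper replaces $\rmMinusVM$ by the aliasing rule $\rmMinusVMp$. What the proof actually needs is the strictly stronger Lemma~\ref{lemma:semgStableBySubst}: for $p\evalM_{\RrCat}p'$ and any $\sigma$ with $\dom{\sigma}\cap\fvar{p}=\emptyset$, one has $\semg{\sigma(p)}=\semg{\sigma(p')}$. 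The paper proves this by a fresh rule-by-rule case analysis \emph{with the substitution applied} (the alias produced by $\rmMinusVMp$ is exactly what makes the $\varVar\minus g(\ldots)$ case go through). Your closing remark that ``as-patterns are semantically transparent'' gestures at this but does not supply the argument; without this lemma or an equivalent, the step from $\ovr v\in\semg{\sigma(\ovr{p_i})}$ to membership in some $\semg{\sigma(\ovr{q_k^i})}$ is unjustified, and the same issue recurs symmetrically in the converse direction.

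Two smaller points. First, the substitution witnessing the step in $\TRcomp(\LL)$ must have domain $\mvar{\ovr{q_k^i}}\supseteq\mvar{\ovr{p_i}}$, so it is an \emph{extension} of $\sigma$, not a restriction; this is harmless for the right-hand side since $\var{t_i}\subseteq\mvar{\ovr{p_i}}$, but it should be stated correctly. Second, your treatment of the negative conditions is fine as written, because there the comparison $\ovr v\not\in\semg{\ovr{p_j}}$ versus $\ovr v\not\in\bigcup_k\semg{\ovr{q_k^j}}$ only uses the \emph{unsubstituted} semantic equality (the identity-substitution instance of the lemma), which Proposition~\ref{prop:comEqAs} does deliver.
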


The left-hand sides of the rules obtained following the $\TRcomp$
transformation are constructor patterns potentially aliased and to
remove the aliases from these rules and replace accordingly the
concerned variables in the corresponding right-hand sides we use the
following recursive transformation:
\[
\begin{array}{r@{\hspace{3pt}}c@{\hspace{3pt}}l}
\TRat(\foo(p)\ra r) & = & \foo(p)\ra r\\
   && \text{if }\forall\omega\in\PPos(p),p(\omega)\not=\at \\
   \TRat(\foo(\rmp{p}{\omega}{x\at q})\ra r) & = & \TRat(\foo(\rmp{p}{\omega}{q})\ra\{x\mapsto q\}{r}) \\
   && \text{if }\forall \omega'\in\PPos(q),q(\omega')\not=\at
\end{array}
\]~\\
which extends to lists ans sets of {\er}s: 
\[
\TRat([e_1,\ldots,e_n])=[\TRat(e_1),\ldots,\TRat(e_n)]
\]
\[
\TRat(\{e_1,\ldots,e_n\})=\{\TRat(e_1),\ldots,\TRat(e_n)\}
\]

Note that since at each intermediate transformation step the
considered aliased pattern $q$ contains no aliases itself then the
right-hand sides $\{x\mapsto q\}{r}$ of the obtained rules contain
no aliases at the positions concerned by the replacement and thus,
become eventually terms in $\TFX$.

\begin{example}
\label{ex:TRcompAT}
We consider the list of rules 
obtained by applying the transformation $\TRcomp$ in
Example~\ref{ex:TRcomp}: $\TRcomp(\LL)=[
{f(x,y\at b) \ra y}
\otrsep
{f(x,y\at f(y_1,y_2)) \ra y}
\otrsep$
${\foo(a,y) \ra y}
\otrsep 
{\foo(b,y) \ra y}
\otrsep 
\foo(f(x,y),z) \ra x
]$.
We have then $\TRat(\TRcomp(\LL))=[
f(x,b) \ra b
\otrsep 
f(x,f(y_1,y_2)) \ra f(y_1,y_2)
\otrsep 
\foo(a,y) \ra y
\otrsep 
\foo(b,y) \ra y
\otrsep 
\foo(f(x,y),z) \ra x
]$.
\end{example}

The transformation $\TRat$ obviously preserves the ground semantics of
the left-hand side of the transformed rules and terminates since
it decreases the number of aliases in the left-hand side.  The result
of the transformation preserves the one-step semantics of the original
rules:
%
\begin{restatable}[Alias encoding]{prop}{aliasEncoding}
\label{prop:aliasEncoding}
  Given a list of rules $\LL$ using only constructor as-patterns
  and a term $t\in\TF$, we have $t\eval_{\LL}t'$ iff
  $t\eval_{\TRat(\LL)}t'$.
\end{restatable}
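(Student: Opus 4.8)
The plan is to prove the statement by reducing it to a single application of the recursive clause defining $\TRat$ and then iterating. Since each recursive call of $\TRat$ strictly decreases the number of occurrences of $\at$ in the left-hand side, $\TRat$ terminates, and it is enough to establish the claim for one alias-elimination step and then proceed by induction on the total number of aliases occurring in $\LL$. Concretely, I would fix a rule $\foo(p')\ra r$ of $\LL$ whose left-hand side contains an aliased subterm $x\at q$ with $q$ alias-free, say $p'=\rmp{p}{\omega}{x\at q}$, and let $\foo(p'')\ra r'$ with $p''=\rmp{p}{\omega}{q}$ and $r'=\{x\mapsto q\}{r}$ be the rule obtained by one elimination step. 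It then suffices to show that replacing $\foo(p')\ra r$ by $\foo(p'')\ra r'$ inside the list leaves the induced relation $\eval$ unchanged.

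First I would observe that the two left-hand sides have the same ground semantics. Indeed $\semg{x\at q}=\semg{q}$, and since ground semantics is computed compositionally this propagates by a straightforward induction along the path to $\omega$ to $\semg{p'}=\semg{p''}$. By Proposition~\ref{th:subsumptionVSsemanticsEXT} the instance relation coincides with membership in the ground semantics, so for every value the matching and non-matching conditions $\matcha$, $\nmatcha$ attached to this rule are identical for $p'$ and $p''$; moreover, since we modify only this one rule, the ordering conditions $p_j\nmatcha v$ for the other rules are untouched. Hence rule $i$ is applicable to a subterm $\foo(\ovr{v})$ at exactly the same positions, with the same priority, before and after the step.

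The core of the argument is that the reduct produced is also unchanged. Because the left-hand sides are linear constructor as-patterns, all their variables are matchable and, whenever $\foo(\ovr v)\in\semg{\foo(p')}$, the matching substitution $\sigma$ with $\dom\sigma=\mvar{p'}=\var{p'}$ is uniquely determined by $\ovr v$, and the as-pattern semantics forces $\sigma(x)=\sigma(q)$ at position $\omega$. I would then set up the bijection between witnessing substitutions: from $\sigma$ take the restriction $\sigma''=\sigma|_{\var{p''}}$ (dropping $x$), and conversely extend a matching substitution $\sigma''$ for $p''$ by $x\mapsto\sigma''(q)$. Both directions yield matching substitutions for the other pattern because $\sigma$ and $\sigma''$ agree on $\var q$, so $\sigma''(q)=\sigma(q)$ is exactly the value matched at position $\omega$. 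Using linearity ($x\notin\var q$ and the variables of $q$ occur nowhere else in $p'$) together with the well-formedness condition $\var r\subseteq\var{p'}$, a routine case split on the variables of $r$ gives $\sigma(r)=\sigma''(\{x\mapsto q\}{r})=\sigma''(r')$: every variable $y\neq x$ is mapped identically by $\sigma$ and $\sigma''$, while each occurrence of $x$ is sent to $\sigma(x)=\sigma(q)=\sigma''(q)$ on the left and, after the textual replacement, to $\sigma''(q)$ on the right.

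Combining the two observations, for every term $t$ and every position the $i$-th rule reduces $t$ to the same $t'$ before and after the elimination step, so the one-step relations of the two lists coincide; applying the induction hypothesis to the remaining aliases of $\TRat(\foo(p'')\ra r')$ then yields $\eval_{\LL}=\eval_{\TRat(\LL)}$. I expect the only delicate point to be the substitution bookkeeping in the reduct computation, in particular making explicit that the as-pattern semantics pins down $\sigma(x)$ to be $\sigma(q)$, which is precisely what makes the syntactic replacement $\{x\mapsto q\}$ in the right-hand side semantically correct; everything else follows from semantics preservation together with Proposition~\ref{th:subsumptionVSsemanticsEXT}.
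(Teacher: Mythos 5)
Your proposal is correct and follows essentially the same route as the paper's proof: both proceed one alias-elimination step at a time, use the fact that $\semg{x\at q}=\semg{q}$ so the left-hand sides keep the same semantics, and exploit that the singleton semantics of a ground instance of a constructor as-pattern forces $\sigma(x)=\sigma(q)$, which is exactly what justifies the textual replacement $\{x\mapsto q\}$ in the right-hand side. The only cosmetic differences are that you make the priority conditions $p_j\nmatcha v$ and the multiple-occurrence case of $x$ in $r$ explicit, which the paper dispatches with a remark.
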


We have thus a method allowing the transformation of a list of
rules using extended patterns, {\ie} complement patterns (and thus
anti-patterns), sum patterns and as-patterns, into an equivalent list
using only constructor patterns.  This is particularly useful when we
have to compile towards languages which can handle only this kind of
patterns. We also have the ingredients to encode it into a set of
rules; this is interesting when we want to use (reasoning) tools
for which is more convenient to handle order independent rules.
For this, we replace the pattern of each rule in a list by
complementing it {\wrt} the  previous rules in the list.
More precisely, given a list of rules $\LL=[{\foo(\ovr{p_1})\ra
t_1},\ldots,\foo(\ovr{p_n})\ra t_n]$ we consider the transformation
\[
\begin{array}{l@{\hspace{3pt}}l@{\hspace{3pt}}l}
\TRord(\LL)=\cup_{i=1}^{n} 
\{ &
\foo(\ovr{q_1^i}) \ra t_i \otrsep \ldots \otrsep \foo(\ovr{q_m^i})\ra t_i \mid\\
&
\ovr{q_1^i}\plus\ldots\plus\ovr{q_m^i}=\ovr{p_i}\minus(\ovr{p_1}\plus\ldots\plus \ovr{p_{i-1}})\downarrow_{\RrCat}\neq\bot,\\
&
\ovr{q_1^i},\ldots,\ovr{q_m^i} \text{ contain no}\text{ symbol } \plus \}
\end{array}
\]
which preserves the initial relation:
\begin{restatable}[Order encoding]{prop}{orderEnc}
\label{prop:orderEnc}
  Given a list of {\er} $\LL$ and a term $t\in\TF$, we have
  $t\eval_{\LL}t'$ iff $t\eval_{\TRord(\LL)}t'$.
\end{restatable}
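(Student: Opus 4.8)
The plan is to reduce the statement to a top-level redex and then match up, for each tuple of values $\ovr{v}$, the rule of $\LL$ that fires under first-match-wins with the rule of $\TRord(\LL)$ that rewrites $\foo(\ovr{v})$. Since both $\eval_{\LL}$ and $\eval_{\TRord(\LL)}$ replace a subterm $\stt{t}{\omega}=\foo(\ovr{v})$ in place and leave the context $\rmp{t}{\omega}{\cdot}$ untouched, it suffices to prove, for every $\ovr{v}$ and every reduct, that $\foo(\ovr{v})\evaleps{i}_{\LL}\sigma(t_i)$ for some $i,\sigma$ iff $\foo(\ovr{v})$ is rewritten by a rule of $\TRord(\LL)$ to the same term. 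The key semantic identity is that, by the convergence of $\RrCat$ together with Proposition~\ref{prop:comEqAs}, the normal form $\ovr{q_1^i}\plus\cdots\plus\ovr{q_m^i}$ of $\ovr{p_i}\minus(\ovr{p_1}\plus\cdots\plus\ovr{p_{i-1}})$ satisfies $\bigcup_k\semg{\ovr{q_k^i}}=\semg{\ovr{p_i}}\setminus\bigcup_{j<i}\semg{\ovr{p_j}}$. Hence the rules of $\TRord(\LL)$ issued from the $i$-th extended rule fire on exactly those $\ovr{v}$ with $\ovr{v}\in\semg{\ovr{p_i}}$ and $\ovr{v}\notin\semg{\ovr{p_j}}$ for all $j<i$; by Proposition~\ref{th:subsumptionVSsemanticsEXT} this is precisely the firing condition of the $i$-th rule of $\LL$, namely $\ovr{p_i}\matcha\ovr{v}$ and $\ovr{p_j}\nmatcha\ovr{v}$ for $j<i$. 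Well-formedness of $\TRord(\LL)$ follows from Proposition~\ref{prop:mvarPreserve}: since $\mvar{\ovr{p_i}\minus(\cdots)}=\mvar{\ovr{p_i}}\subseteq\mvar{\ovr{q_k^i}}$ and $\var{t_i}\subseteq\mvar{\ovr{p_i}}$, every right-hand side variable is matchable in each new left-hand side.

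Rather than re-prove from scratch the delicate fact that the matching substitutions for $\ovr{p_i}$ and for $\ovr{q_k^i}$ agree on $\var{t_i}$, I would discharge it by reusing Proposition~\ref{prop:compEnc}. To this end I introduce the list of extended rules $\LL'$ whose $i$-th rule is $\foo(\ovr{p_i}\minus(\ovr{p_1}\plus\cdots\plus\ovr{p_{i-1}}))\ra t_i$; this uses a tuple-level complement in the left-hand side, which is harmless since $\RrCat$ and the ground semantics are defined for all tuples of extended patterns. First I would show $\eval_{\LL}=\eval_{\LL'}$. In the list relation the $i$-th rule of $\LL'$ fires on $\ovr{v}$ iff $\ovr{v}\in\semg{\ovr{p_i}}\setminus\bigcup_{j<i}\semg{\ovr{p_j}}$ and no earlier rule of $\LL'$ matches; but membership in that set already forces $\ovr{v}\notin\semg{\ovr{p_j}}\supseteq\semg{\ovr{p_j}\minus(\cdots)}$ for $j<i$, so the side condition is automatic and the firing condition collapses to that of $\LL$. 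Moreover $\dom{\sigma}=\mvar{\ovr{p_i}\minus(\cdots)}=\mvar{\ovr{p_i}}$ and $\sigma(\ovr{p_i}\minus(\cdots))=\sigma(\ovr{p_i})\minus(\cdots)$, so the same $\sigma$ yields the same reduct $\sigma(t_i)$ in both lists. By construction $\TRcomp(\LL')$ consists of exactly the rules of $\TRord(\LL)$, merely arranged in a list instead of a set, so Proposition~\ref{prop:compEnc} gives $\eval_{\LL'}=\eval_{\TRcomp(\LL')}$ and in particular settles the substitution-agreement issue.

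It then remains to prove $\eval_{\TRcomp(\LL')}=\eval_{\TRord(\LL)}$, that is, that passing from the ordered list $\TRcomp(\LL')$ to the unordered set $\TRord(\LL)$ does not change the induced relation; this is the main obstacle. The set relation trivially contains the list relation, so the point is the converse. Here I would use that left-hand side semantics of rules coming from different original indices are disjoint: if $i\neq i'$ then $\bigl(\semg{\ovr{p_i}}\setminus\bigcup_{j<i}\semg{\ovr{p_j}}\bigr)\cap\bigl(\semg{\ovr{p_{i'}}}\setminus\bigcup_{j<i'}\semg{\ovr{p_j}}\bigr)=\emptyset$. Consequently, whenever a rule of $\TRord(\LL)$ fires on $\ovr{v}$, every rule of $\TRcomp(\LL')$ matching $\ovr{v}$ comes from the same index $i$ and therefore carries the same right-hand side $t_i$, while the possible overlaps within a single index $i$ are already known to be consistent through $\eval_{\LL'}=\eval_{\TRcomp(\LL')}$. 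Thus the first rule of the list $\TRcomp(\LL')$ that matches $\ovr{v}$ produces the same reduct as the chosen set rule, and the two relations coincide.

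Chaining the equalities $\eval_{\LL}=\eval_{\LL'}=\eval_{\TRcomp(\LL')}=\eval_{\TRord(\LL)}$ then yields the claimed equivalence $t\eval_{\LL}t'$ iff $t\eval_{\TRord(\LL)}t'$. I expect the bookkeeping of the first and second paragraphs to be routine once the semantic identity and Proposition~\ref{prop:mvarPreserve} are in hand; the genuinely load-bearing step is the order-elimination argument of the last paragraph, where cross-index disjointness is what makes the resulting set of rules order independent, and within-index consistency is exactly the ingredient borrowed from Proposition~\ref{prop:compEnc}.
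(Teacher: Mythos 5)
Your proof is correct in substance but takes a genuinely different route from the paper's. The paper argues directly: it takes an $\LL$-step witnessed by $i$, $\sigma$ with $\dom{\sigma}=\mvar{\ovr{p_i}}$, invokes Lemma~\ref{lemma:semgStableBySubst} to get the \emph{substituted} identity $\semg{\sigma(\ovr{p_i})}\setminus\bigcup_{j<i}\semg{\ovr{p_j}}=\bigcup_k\semg{\sigma(\ovr{q_k^i})}$, and reads off that the firing condition of the $i$-th rule of $\LL$ holds iff one of the rules $\foo(\ovr{q_k^i})\ra t_i$ fires with (an extension of) the same $\sigma$, hence with the same reduct $\sigma(t_i)$; the converse is symmetric. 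You instead factor the transformation as $\eval_{\LL}=\eval_{\LL'}=\eval_{\TRcomp(\LL')}=\eval_{\TRord(\LL)}$ through the auxiliary list $\LL'$ with complemented left-hand sides, so that Proposition~\ref{prop:compEnc} can be reused wholesale. This is a nice modularization --- it makes explicit that $\TRord$ is literally ``complement the left-hand sides, then apply $\TRcomp$, then forget the order'', and it isolates the order-elimination step as a separate claim justified by cross-index disjointness of $\semg{\ovr{p_i}}\setminus\bigcup_{j<i}\semg{\ovr{p_j}}$. The price is that you must separately verify that $\LL'$ is a well-formed list of {\er}s and that the list-to-set passage is harmless, neither of which the paper needs to address explicitly. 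Your implicit use of the substituted semantic identity (via $\sigma(\ovr{p}\minus q)=\sigma(\ovr{p})\minus q$ for $\dom{\sigma}=\mvar{\ovr{p}}$) in the $\eval_{\LL}=\eval_{\LL'}$ step is exactly the role Lemma~\ref{lemma:semgStableBySubst} plays in the paper.

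One step is stated too loosely to stand as written: in the final list-versus-set argument you claim that the \emph{first} rule of $\TRcomp(\LL')$ matching $\ovr{v}$ produces the same reduct as the chosen set rule, with within-index consistency ``borrowed from Proposition~\ref{prop:compEnc}''. Two distinct summands $\ovr{q_k^i}$ and $\ovr{q_{k'}^i}$ can both match $\ovr{v}$ (the summands produced by $\rmMinusFF$ overlap), and the equality of relations $\eval_{\LL'}=\eval_{\TRcomp(\LL')}$ does not by itself say that their matching substitutions agree on $\var{t_i}$. The clean repair stays inside your own framework: given a set-firing by $\foo(\ovr{q_k^i})\ra t_i$ with $\sigma_k$, restrict to $\tau=\sigma_k|_{\mvar{\ovr{p_i}}}$; since $\ovr{q_k^i}$ is a constructor (as-)pattern, $\semg{\sigma_k(\ovr{q_k^i})}\subseteq\semg{\tau(\ovr{q_k^i})}\subseteq\semg{\tau(\ovr{p_i}\minus(\ovr{p_1}\plus\cdots\plus\ovr{p_{i-1}}))}$ by Lemma~\ref{lemma:semgStableBySubst}, so the $i$-th rule of $\LL'$ fires with $\tau$ and yields the same reduct $\tau(t_i)=\sigma_k(t_i)$ because $\var{t_i}\subseteq\mvar{\ovr{p_i}}$. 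This gives $\eval_{\TRord(\LL)}\subseteq\eval_{\LL'}$ directly and closes the chain without any claim about which list rule fires first.
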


\begin{example}
\label{ex:appPlus}
We consider the signature in Example~\ref{ex:TRcomp} and the list of
rules $\LL=[{\foo(z,a) \ra z} \otrsep$ ${\foo(x,y) \ra y}]$.  The
pattern $\tup{x,y}\minus \tup{z,a}$ reduces {\wrt} $\RrCat$ to
$f(x,y\at b)\plus f(x,y\at f(y_1,y_2))$ (reductions are similar to
those in Example~\ref{ex:appComp} and Example~\ref{ex:simple} but we
also consider aliasing now).  We have thus $\TRord(\LL)=\{\foo(z,a)
\ra z \otrsep {\foo(x,y\at{b})} \ra y \otrsep \foo(x,y\at f(y_1,y_2)) \ra
y\}$ and $\TRat(\TRord(\LL))=\{{\foo(z,a) \ra z} \otrsep$ ${\foo(x,b) \ra b}
\otrsep$ $\foo(x,f(y_1,y_2)) \ra f(y_1,y_2)\}$.
\end{example}

The transformations presented in this section preserve the one step
semantics of the transformed lists (or sets) of rules.  Moreover,
the transformations are well-defined, {\ie} produce lists or sets of
well-formed rules. The produced rules have a specific shape
suitable for subsequent transformations. We can thus combine them to
define a transformation $\TRall=\TRat\circ\TRcomp$ which transforms a
list of rules using extended patterns into one using only
constructor patterns, or a transformation $\TRorder=\TRat\circ\TRord$
which transforms a list of rules (using extended patterns) into
a set of rules using only plain constructor patterns, {\ie} a CTRS.

\begin{corollary}[Simulation]
  \label{th:simulation}
  Given a list of {\er}s $\LL$ and a term $t\in\TF$, 
  \begin{enumerate}
  \item $t\eval_{\LL}t'$ iff
    $t\eval_{\TRall(\LL)}t'$;
  \item $t\eval_{\LL}t'$ iff
    $t\eval_{\TRorder(\LL)}t'$.
\end{enumerate}
\end{corollary}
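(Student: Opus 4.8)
The plan is to obtain both equivalences by chaining the one-step simulation results already established for the individual transformations, so that no fresh induction on the rewrite relation is needed. Recall that $\TRall=\TRat\circ\TRcomp$ and $\TRorder=\TRat\circ\TRord$; each claim is therefore a two-link chain of ``iff''s, whose first link is Proposition~\ref{prop:compEnc} (resp. Proposition~\ref{prop:orderEnc}) and whose second link is Proposition~\ref{prop:aliasEncoding}.

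For the first statement, Proposition~\ref{prop:compEnc} gives $t\eval_{\LL}t'$ iff $t\eval_{\TRcomp(\LL)}t'$. To compose it with Proposition~\ref{prop:aliasEncoding} applied to $\TRcomp(\LL)$, I first have to discharge the hypothesis of the latter, namely that $\TRcomp(\LL)$ is a list of rules whose left-hand sides use only constructor as-patterns. This is exactly guaranteed by the shape of the normal forms produced by $\RrCat$ (the convergence result for $\RrCat$ established above): every $\ovr{p_i}\downarrow_{\RrCat}$ is either $\bot$ or a sum $\ovr{q_1^i}\plus\cdots\plus\ovr{q_m^i}$ of (tuples of) constructor patterns potentially aliased, and the definition of $\TRcomp$ discards the $\bot$ case and splits on $\plus$, so each $\ovr{q_j^i}$ contains neither $\plus$ nor $\minus$ nor $\bot$. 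Once the hypothesis holds, Proposition~\ref{prop:aliasEncoding} yields $t\eval_{\TRcomp(\LL)}t'$ iff $t\eval_{\TRat(\TRcomp(\LL))}t'$, and since $\TRat(\TRcomp(\LL))=\TRall(\LL)$ by definition, composing the two equivalences proves statement~1.

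The second statement is entirely analogous, with Proposition~\ref{prop:orderEnc} replacing Proposition~\ref{prop:compEnc}: it gives $t\eval_{\LL}t'$ iff $t\eval_{\TRord(\LL)}t'$, and $\TRord(\LL)$ is again a collection of rules whose left-hand sides are constructor patterns potentially aliased, for the same reason as above (the left-hand sides are read off from the normal forms $\ovr{p_i}\minus(\ovr{p_1}\plus\cdots\plus\ovr{p_{i-1}})\downarrow_{\RrCat}$, and the priority of the original list is already absorbed into these patterns by the complementation). Applying $\TRat$ and using $\TRorder=\TRat\circ\TRord$ then closes the chain.

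The one point requiring care is that $\TRord$ produces a \emph{set} of rules whereas Proposition~\ref{prop:aliasEncoding} is phrased for a \emph{list}; the relation $\eval$ on a set simply drops the prefix condition ``$p_j\nmatcha v$ for $j<i$''. The expected main obstacle is thus only to observe that the proof of Proposition~\ref{prop:aliasEncoding} transfers to the set case: since $\TRat$ acts rule by rule, preserves the ground semantics of each left-hand side, and leaves the matched redex and its substitution unchanged, the argument goes through verbatim (in fact more simply, as the prefix condition plays no role). With this remark in place both chains close and the corollary follows; note that the statement concerns only the one-step relations, so no separate treatment of the reflexive--transitive closure is required.
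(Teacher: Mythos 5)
Your proposal is correct and follows exactly the route the paper intends for this corollary: compose Proposition~\ref{prop:compEnc} (resp.\ Proposition~\ref{prop:orderEnc}) with Proposition~\ref{prop:aliasEncoding}, after observing that the intermediate lists/sets have left-hand sides that are constructor patterns potentially aliased, so the second proposition applies. Your explicit checks of this shape condition and of the list-versus-set phrasing for $\TRord$ are exactly the points the paper glosses over with ``the produced rules have a specific shape suitable for subsequent transformations.''
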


\newcommand{\sat}[1]{\overline{#1}}

\section{Optimizations and experimental results}
\label{se:experiments}
%
We know that our algorithms can exhibit exponential time behavior
since the useful clause problem is NP-complete~\cite{Sekar-1992}.
Since we target practical implementations with reasonable running
times, we identify the critical aspects leading to this exponential
behavior and try to limit as much as possible this explosion.
The exponential behavior originates in our case from the rules
$\rmMinusVM$ and $\rmMinusFF$ (Figure~\ref{fig:elimComplement}) for
which the number of elements in the generated sums determines the
effective branching.
We propose two optimizations that turn out to limit significantly this
number for concrete examples.

\smallbreak\noindent\textbf{Cut useless choices.}  
For a symbol~$f$ of arity~$n>0$, the rule~{\rmMinusFF} transforms the
term $f(\meta{v_1},\ldots,\meta{v_n}) \minus
f(\meta{t_1},\ldots,\meta{t_n})$ into a sum
$\sum_{i=1}^n{f(\meta{v_1},\ldots,\meta{v_i}\minus
  \meta{t_i},\ldots,\meta{v_n})}$ of~$n$ new terms to reduce.  We can
remark that if there exists a~$k$ such that $\meta{v_k}\minus
\meta{t_k}=\meta{v_k}$ then the $k$-th term of the sum is the term
$f(\meta{v_1},\ldots,\meta{v_n})$ which subsumes all the other terms
in the sum and whose semantics is thus the same as that of the sum.
Therefore, as soon as such a term is exhibited the sum can be
immediately reduced to $f(\meta{v_1},\ldots,\meta{v_n})$ avoiding thus
further unnecessary reductions.
For example, the term $f(x,a)\minus f(a,b)$ normally reduces to
$f(x\minus a,a)\plus f(x,a\minus b)$ which is eventually reduced to $f(x,a)$,
while using the optimization we get directly 
$f(x,a)$.

\smallbreak\noindent\textbf{Sorted encoding.}
Given a term of the form $\varVar \minus
g(\meta{t_1},\ldots,\meta{t_n})$, the rule~{\rmMinusVM} produces a sum
$\sum_{f\in\CC} f(\varCt_1,\ldots,\varCt_{m}) \minus
g(\meta{t_1},\ldots,\meta{t_n})$ containing an element for each
constructor~$f$ of the signature.  In practice, algebraic signatures
are often many-sorted and in this case,
since we can always identify the sort of the variable $\varVar$ in a
given context then, the sum $\sum_{f\in\FF}
f(\varCt_1,\ldots,\varCt_{m})$ can be restricted to all the
constructors
of this sort.

\begin{example}
  Let us consider the many-sorted signature $E=a\mid b\mid c$ and
  $L=cons(E,L)\mid nil$. The application of rule~{\rmMinusVM} to the
  term $cons(z\minus a,nil)$ produces $cons(z\at((a\plus b\plus c\plus
  nil\plus cons(z_1,z_2))\minus a),nil)$ which is reduced, by
  propagation of $\minus$, to $cons(\varCt\at(a\minus a \plus b\minus
  a\plus c\minus a\plus nil\minus a\plus cons(\varCt_1,\varCt_2)\minus
  a),nil)$. This term contains ill-typed terms like $nil\minus a$ or
  $cons(z_1,z_2)\minus a$ that are eventually reduced to $\bot$ and
  eliminated.  With the optimization, we infer the type~$E$ for $z$
  and generate directly the correctly typed term $cons(z\at((a\plus
  b\plus c)\minus a),nil)$.
\end{example}

\smallbreak\noindent\textbf{Improved minimization.}
%
The minimization algorithm has also an exponential complexity on the number
$n$ of input rules.
A first optimization follows the observation that if a pattern $l_i$
subsumes a pattern $l_j$ then, $l_j$ cannot be in the minimal set of
patterns.  We can thus safely eliminate all patterns directly subsumed
by another one right from the beginning and decrease the number of
recursive calls accordingly.

A second optimization consists in initializing the kernel with all the
patterns~$l_i$ which are not subsumed by
$l_1,\ldots,l_{i-1},l_{i+1},\ldots,l_n$.  Indeed, for such~$l_i$ the
test ``$l_i$ is subsumed by $S\cup kernel$'' 
is always false and thus, $l_i$ is added systematically to the kernel
during the computation. Initializing the kernel with these $l_i$
reduces the complexity from $O(2^n)$ to $O(2^{n-k})$, where~$k$ is the
number of such patterns.

\smallbreak\noindent\textbf{Local minimum  vs.  global minimum.}
Given a set of {\plaint}s~$S$, the algorithm given in
Figure~\ref{fig:minimize} computes the smallest subset $S'\subseteq S$
such that $\validsubset{S'}{S}$.  In the general case $S'$ may not be
the smallest set such that $\semgset{S'}=\semgset{S}$ and we show how
our algorithm can be used to find this smallest set.

We consider the \emph{saturation} $\sat{S}$ of a set of {\plaint}s $S$
{\wrt} to all {\plaint}s subsuming terms in $S$ and preserving the
semantics:
$\sat{S}=\cup_{q\in S}\{p \mid\semg{q}\subseteq\semg{p}\subseteq\semg{S} \}$.
We can show that the minimum of the saturated set is smaller than the
minimum of the original one, $\card{\minimum(\sat{S})}\leq\card{\minimum(S)}$, and that the minimum of the saturations of two
sets of patterns with the same semantics $\semg{S}=\semg{S'}$ is 
the same, $\minimum(\sat{S})=\minimum(\sat{S'})$.
We can thus, take any of them and compute its saturation and
the corresponding (local) minimal set of patterns. Since the minimum
of a saturated set is smaller than any of its subsets then, the
obtained minimum is global:
\begin{proposition}
  $\minimum(\sat{S})$ is the (global) minimum valid subset of $S$.
\end{proposition}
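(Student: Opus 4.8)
The plan is to read the phrase ``(global) minimum valid subset of $S$'' in the sense made precise in the paragraph above, namely as a set $T$ of constructor patterns with $\semg{T}=\semg{S}$ whose cardinality is least among \emph{all} such sets (not merely among subsets of $S$), and to prove that $\minimum(\sat{S})$ is one such set. First I would record two preliminaries. (i) The call $\minimum(\sat{S})$ is meaningful because $\sat{S}$ is finite up to renaming: every $p\in\sat{S}$ subsumes some $q\in S$, hence $q$ is an instance of $p$ and $p$ is obtained from the fixed finite pattern $q$ by replacing subterms with variables, of which there are finitely many. (ii) Since $q$ witnesses $\semg{q}\subseteq\semg{q}\subseteq\semg{S}$ we have $S\subseteq\sat{S}$, while every element of $\sat{S}$ has semantics included in $\semg{S}$; therefore $\semg{\sat{S}}=\semg{S}$, and as $\minimum(\sat{S})$ is a valid subset of $\sat{S}$ we obtain $\semg{\minimum(\sat{S})}=\semg{S}$. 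In particular $\minimum(\sat{S})$ is itself a set with the target semantics, so its cardinality is at least the minimum being sought; only the converse inequality requires work.

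For the converse I would fix a set $T$ with $\semg{T}=\semg{S}$ of least possible cardinality (such a $T$ exists since cardinalities are natural numbers and the family, containing $S$, is non-empty) and show $\card{\minimum(\sat{S})}\le\card{T}$. Because $T$ has least cardinality among all sets of semantics $\semg{S}$, it has no proper valid subset: a proper $T'\subsetneq T$ with $\semg{T'}=\semg{S}$ would be strictly smaller, contradicting minimality. Hence $\minimum(T)=T$ and $\card{\minimum(T)}=\card{T}$. Now I invoke the two facts stated just before the proposition. Applying $\card{\minimum(\sat{\cdot})}\le\card{\minimum(\cdot)}$ to $T$ yields $\card{\minimum(\sat{T})}\le\card{\minimum(T)}=\card{T}$, and applying the invariance $\minimum(\sat{S})=\minimum(\sat{S'})$ for $\semg{S}=\semg{S'}$ to the pair $S,T$ (legitimate since $\semg{T}=\semg{S}$) gives $\minimum(\sat{S})=\minimum(\sat{T})$. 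Combining, $\card{\minimum(\sat{S})}=\card{\minimum(\sat{T})}\le\card{T}$, which is exactly the converse inequality.

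Putting the two bounds together, $\minimum(\sat{S})$ has semantics $\semg{S}$ and cardinality equal to the least achievable, so it is a global minimum, as claimed. The combination itself is short; the real weight of the argument rests on the two auxiliary facts it consumes, and I expect the second one---that $\minimum(\sat{\cdot})$ depends only on the semantics of its argument---to be the genuine obstacle, since in general $\sat{S}\ne\sat{S'}$ even when $\semg{S}=\semg{S'}$. I would prove it by showing that the maximal elements of $\sat{S}$ under subsumption coincide with the most general constructor patterns whose semantics is contained in $\semg{S}$, a collection that manifestly depends on $\semg{S}$ alone; since $\minimum$ discards every pattern subsumed by another, both $\minimum(\sat{S})$ and $\minimum(\sat{S'})$ are assembled from this same canonical pool of maximal patterns and therefore agree. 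A secondary point needing care is the finiteness claim in degenerate signatures (for instance a single inhabited sort), where I would bound pattern depth directly using Proposition~\ref{th:subsumptionVSsemantics} rather than the generalisation argument.
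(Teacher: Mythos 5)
Your argument is correct and follows essentially the same route as the paper, which offers no formal proof of this proposition beyond the paragraph preceding it: it likewise asserts the two facts $\card{\minimum(\sat{S})}\leq\card{\minimum(S)}$ and $\minimum(\sat{S})=\minimum(\sat{S'})$ whenever $\semg{S}=\semg{S'}$, and combines them exactly as you do. Your version is if anything more careful, in making the intended reading of ``global minimum'' precise and in flagging that the semantic invariance of $\minimum(\sat{\cdot})$ is the real unproven burden.
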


This is not an actual optimization but just an extension which
guarantees the global minimality. For most of the examples we have
experienced with, this global minimization technique had no impact,
the minimum set of patterns being obtained directly by our rule
elimination.

\smallbreak\noindent\textbf{Implementation.}  
%
All the transformations and optimizations presented in the paper have
been implemented in~{\tom}, an extension of {\java} allowing the use
of rules and strategies. This
implementation\footnote{\url{https://github.com/rewriting/tom},
  \texttt{[scm]/applications/strategyAnalyzer/}}
can generate TRSs expressed in several syntaxes like, for example,
{\aprove}\cite{terminationDP-aprove2011}/{\TTT}\cite{ttt-2009}  syntax which can be used to check termination, and
{\tom} syntax which can be used to execute the resulting TRS.

An alternative Haskell implementation\footnote{\url{http://github.com/polux/subsume}} has allowed 
us to generate via GHCJS a javascipt version of the algorithm that can
be experimented in a
browser
\footnote{\url{http://htmlpreview.github.io/?https://github.com/polux/subsume/blob/web/out/index.html}}.

For simplicity, we presented the formalism in a mono-sorted context
but both implementations handle many-sorted signatures and implement
the corresponding optimization explained above.

\section{Related works}
\label{se:related}
%
A.~Krauss studied the problem of transforming function definitions
with pattern matching into minimal sets of independent
equations~\cite{ICFP-2008-Krauss}. Our approach can be seen as a new
method for solving the same problem but arguably easier to implement because of
the clear and concise rule based specification. 
We handle here the right-hand sides of the rewrite rules and allow
anti-patterns in the left-hand sides of rules and although this seems
also feasible with the method in~\cite{ICFP-2008-Krauss} the way it
can be done is not made explicit in the paper.
We couldn't obtain the prototype mentioned in the paper but when
experimenting with the proposed 
examples
 we obtained execution times which
indicate comparable performances to those given
in~\cite{ICFP-2008-Krauss} and, more importantly, identical
results:
\begin{itemize}
\item \textsf{Interp} is an interpreter for an expression language
  defined by 7 ordered rules for which a naive disambiguation would
  produce 36 rules~\cite{ICFP-2008-Krauss}.  Our transformation
  without minimization
  produces 31 rules and using 
  the minimization algorithm presented in Section~\ref{se:elimination}
  we obtain, as in~\cite{ICFP-2008-Krauss}, 25 rules.
  Example~\ref{ex:elimPattern} is indeed inspired by one of the rules
  eliminated during the minimization process for this specification.
\item \textsf{Balance} is a balancing function for red-black-trees.
  A.~Krauss reported that for this list
  of  5 ordered rules a naive approach would produce 91 rules.  In our
  case, 
  the transformation directly produces  the minimal set composed of 59
  rules reported in~\cite{ICFP-2008-Krauss}.  
\item \textsf{Numadd} is a function that operates on arithmetic
  expressions. It is composed of 5 rules.  
  Our transformation
  directly produces the minimal set composed of 256 rules~\cite{ICFP-2008-Krauss}.
\end{itemize}

L.~Maranget has proposed an algorithm for detecting useless
patterns~\cite{JFP:977988} for OCaml and Haskell. 
As mentioned previously, the algorithm in
Figure~\ref{fig:elimComplement} can be also used to check whether a
pattern is useless {\wrt} a set of patterns
(Proposition~\ref{th:subsumption}) but it computes the difference
between the pattern and the set in order to make the decision.
The minimization algorithm in Figure~\ref{fig:minimize} can thus use
the two algorithms interchangeably.
Both algorithms have been implemented and we measured the execution
time for the minimization function
on various examples.  In
average, L.~Maranget's approach is $30\%$ more efficient than ours
(40ms vs 55ms for the \textsf{Interp} example for instance) and can
thus be used in the minimization algorithm if one wants to gain some
efficiency with the price of adding an auxiliary algorithm.

This work has been initially motivated by our encoding of TRSs guided
by rewriting strategies into plain
TRSs~\cite{DBLP:conf/rta/CirsteaLM15}. This encoding produces
intermediate systems of the form $\{ \varphi(l)\!\ra\!r,
\varphi(x\at\ap l)\!\ra\!r',$ $\ldots \}$ which are eventually reduced
by expanding the anti-patterns into plain TRSs.
Alternatively, we could use a simpler compilation schema based on ordered CTRSs in
which case the intermediate system would have the form $[
\varphi(l)\ra r \otrsep \varphi(x)\ra r' \otrsep \ldots ]$ and then,
apply the approach presented in this paper to transform the
resulting ordered CTRS into an order independent CTRS.
We experimented with this new approach and for all the examples
in~\cite{DBLP:conf/rta/CirsteaLM15} we obtained between $20\%$ and
$25\%$ less rules than before.
When using strategies, the order of rule application is expressed with
a left-to-right strategy choice-operator and for such strategies the
gain with our new approach is even more significant than for the
examples in~\cite{DBLP:conf/rta/CirsteaLM15} which involved only
plain, order independent, TRSs.

There are a lot of
works~\cite{toplas-GieslRSST11,KraussSTFG11,AvanziniLM15,GieslBEFFOPSSST14}
targeting the analysis of functional languages essentially in terms of
termination and complexity, and usually they involve some encoding of
the match construction. These encodings are generally deep and take
into account the evaluation strategy of the targeted language leading
to powerful analyzing tools. Our encodings are shallow and independent
of the reduction strategy.
Even if it turned out to be very practical for encoding ordered CTRSs involving
anti-patterns and prove the (innermost) termination of the
corresponding CTRSs with {\aprove/\TTT}, in the context of functional
program analysis we see our approach more like a helper that will be
hopefully used as an add-on by the existing analyzing tools.
%

\section{Conclusion}
\label{se:conclusion}

We have proposed a concise and clear algorithm for computing the
complement of a pattern {\wrt} a set of patterns and we showed how it
can be used to encode an OTRS potentially containing anti-patterns
into a plain TRS preserving the one-step semantics of
the original system.
The approach can be used as a generic compiler for ordered rewrite
systems involving anti-patterns and, in collaboration with
well-established techniques for TRS, for analyzing properties of such
systems. 
Since the TRSs obtained with our method define exactly the same
relation over terms, the properties of the TRS stand also for the
original OTRS and the counter-examples provided by the analyzing tools
when the property is not valid can be replayed directly for the OTRS.
Moreover, our approach can be 
used as a new method for detecting
useless patterns and for the minimization of sets of patterns.

For all the transformations we have performed the global minimization
technique was superfluous and we conjecture that, because of the shape
of the problems we handle and of the way they are handled, the
transformation using a local minimization
produces directly the smallest TRS for any input OTRS. One of our
objectives is to prove this conjecture.

We consider of course integrating our algorithm into automatic tools
either to disambiguate equational specifications or to (dis)prove
properties of such specifications. 
The two available implementations let us think that such an
integration can be done smoothly for any tool relying on a declarative
programming language.






\smallbreak\noindent\textbf{Acknowledgments.}
We want to thank Paul Brauner who implemented the Haskell version of
the algorithm and who gave us very helpful feedback on previous drafts
of this paper.

\bibliographystyle{abbrv}
\bibliography{paper}

\begin{thebibliography}{10}

\bibitem{AvanziniLM15}
M.~Avanzini, U.~D. Lago, and G.~Moser.
\newblock Analysing the complexity of functional programs: higher-order meets
  first-order.
\newblock In {\em {ICFP} 2015}, pages 152--164, 2015.

\bibitem{BaaderN98}
F.~Baader and T.~Nipkow.
\newblock {\em {Term Rewriting and All That.}}
\newblock Cambridge University Press, 1998.

\bibitem{BallandBKMR-RTA2007}
E.~Balland, P.~Brauner, R.~Kopetz, P.-E. Moreau, and A.~Reilles.
\newblock Tom: Piggybacking rewriting on java.
\newblock In {\em RTA '07}, volume 4533 of {\em LNCS}, pages 36--47.
  Springer-Verlag, 2007.

\bibitem{jsc2010}
H.~Cirstea, C.~Kirchner, R.~Kopetz, and P.-E. Moreau.
\newblock Anti-patterns for rule-based languages.
\newblock {\em J. of Symbolic Computation}, 45(5):523 -- 550, 2010.

\bibitem{DBLP:conf/rta/CirsteaLM15}
H.~Cirstea, S.~Lenglet, and P.~Moreau.
\newblock A faithful encoding of programmable strategies into term rewriting
  systems.
\newblock In {\em {RTA} 2015}, volume~36 of {\em LIPIcs}, pages 74--88, 2015.

\bibitem{Maude2:03}
M.~Clavel, F.~Dur{\'a}n, S.~Eker, P.~Lincoln, N.~Mart{\'{\i}}-Oliet,
  J.~Meseguer, and C.~Talcott.
\newblock The {Maude} 2.0 {System}.
\newblock In {\em RTA '03}, volume 2706 of {\em LNCS}, pages 76--87.
  Springer-Verlag, 2003.

\bibitem{terminationDP-aprove2011}
C.~Fuhs, J.~Giesl, M.~Parting, P.~Schneider-Kamp, and S.~Swiderski.
\newblock Proving termination by dependency pairs and inductive theorem
  proving.
\newblock {\em J. of Aut. Reasoning}, 47(2):133--160, 2011.

\bibitem{GieslBEFFOPSSST14}
J.~Giesl, M.~Brockschmidt, F.~Emmes, F.~Frohn, C.~Fuhs, C.~Otto,
  M.~Pl{\"{u}}cker, P.~Schneider{-}Kamp, T.~Str{\"{o}}der, S.~Swiderski, and
  R.~Thiemann.
\newblock Proving termination of programs automatically with aprove.
\newblock In {\em {VSL} 2014}, pages 184--191, 2014.

\bibitem{toplas-GieslRSST11}
J.~Giesl, M.~Raffelsieper, P.~Schneider{-}Kamp, S.~Swiderski, and R.~Thiemann.
\newblock Automated termination proofs for {H}askell by term rewriting.
\newblock {\em {ACM} Trans. Program. Lang. Syst.}, 33(2):7, 2011.

\bibitem{KirchnerKM-2007}
C.~Kirchner, R.~Kopetz, and P.~Moreau.
\newblock Anti-pattern matching.
\newblock In {\em ESOP 2007}, volume 4421 of {\em Lecture Notes in Computer
  Science}, pages 110--124, 2007.

\bibitem{ttt-2009}
M.~Korp, C.~Sternagel, H.~Zankl, and A.~Middeldorp.
\newblock Tyrolean {T}ermination {T}ool 2.
\newblock In {\em RTA '09}, volume 5595 of {\em LNCS}, pages 295--304.
  Springer-Verlag, 2009.

\bibitem{ICFP-2008-Krauss}
A.~Krauss.
\newblock {Pattern minimization problems over recursive data types}.
\newblock In {\em {ICFP 2008}}, pages 267--274. {ACM}, 2008.

\bibitem{KraussSTFG11}
A.~Krauss, C.~Sternagel, R.~Thiemann, C.~Fuhs, and J.~Giesl.
\newblock Termination of {I}sabelle functions via termination of rewriting.
\newblock In {\em {ITP} 2011}, pages 152--167, 2011.

\bibitem{JFP:977988}
L.~Maranget.
\newblock Warnings for pattern matching.
\newblock {\em J. Funct. Program.}, 17:387--421, 2007.

\bibitem{IMS-2015-Marin}
M.~Marin.
\newblock {Extending Mathematica with Anti-Patterns}.
\newblock In {\em {Proceedings of the 12th International Mathematice
  Symposium}}, 2015.

\bibitem{Schneider-KampTACG07}
P.~Schneider{-}Kamp, R.~Thiemann, E.~Annov, M.~Codish, and J.~Giesl.
\newblock Proving termination using recursive path orders and {SAT} solving.
\newblock In {\em FroCoS 2007}, pages 267--282, 2007.

\bibitem{Sekar-1992}
R.~C. Sekar, R.~Ramesh, and I.~V. Ramakrishnan.
\newblock Adaptive pattern matching.
\newblock In {\em ICALP '92}, pages 247--260. Springer-Verlag, 1992.

\bibitem{Terese2002}
Terese.
\newblock {\em Term Rewriting Systems}.
\newblock Cambridge University Press, 2003.
\newblock M. Bezem, J. W. Klop and R. de Vrijer, eds.

\end{thebibliography}

\appendix
\section{Proofs}
\label{ap:proofs}

\subsumptionVSsemantics*
\begin{proof}
  By induction on the structure of $p$. 
  If $p$ is a variable then, for
  any value $v$ we have ${v}\in\semg{{p}}$ and ${p} \matcha {v}$.
  Consider now that $p=c(p_1,\ldots,p_n)$ with $c\in\CC\cup\CCt$ and
  $p_1,\ldots,p_n$ patterns. 
  If ${v}\in\semg{p}$ then $v=c(v_1,\ldots,v_n)$ with
  $(v_1,\ldots,v_n)\in\semg{p_1}\times\ldots\times\semg{p_n}$ and by
  induction ${p_i} \matcha {v_i}$ for all $i\in[1,\ldots,n]$. Thus, by
  definition of the instance relation ${p} \matcha {v}$.
  If ${p} \matcha {v}$ then $v=c(v_1,\ldots,v_n)$ with $p_i \matcha
  v_i$ for all $i\in[1,\ldots,n]$ and by induction $v_i\in\semg{p_i}$
  for all $i\in[1,\ldots,n]$. Thus, by definition of ground semantics
  ${v}\in\semg{{p}}$.
\end{proof}

\subsumptionVSsemanticsEXT*
\begin{proof}
  By induction on the structure of $p$. 
  If $p$ is a variable then, for
  any value $v$ we have ${v}\in\semg{{p}}$ and ${p} \matcha {v}$.
  If $p$ is a plain pattern we proceed as for
  Proposition~\ref{th:subsumptionVSsemantics}.

  If $p={p_1}\plus{p_2}$ then $v\in\semg{p_1}\cup\semg{p_2}$ and thus
  $v\in\semg{p_1}$ or $v\in\semg{p_2}$. By induction ${p_1} \matcha
  {v}$ or ${p_2} \matcha {v}$ and thus, ${{p_1}\plus{p_2}} \matcha
  {v}$. 
  For the other direction we consider that ${{p_1}\plus{p_2}} \matcha
  {v}$ and then $v\in\semg{p_1} \vee v\in\semg{p_2}$. By induction,
  $v\in\semg{p_1}$ or $v\in\semg{p_2}$ and thus,
  $v\in\semg{p_1}\cup\semg{p_2}=\semg{p_1\plus p_2}$.

  For the case $p={p_1}\minus{p_2}$ we proceed similarly but use the
  corresponding set properties. 
  If $p={p_1}\minus{p_2}$ then $v\in\semg{p_1}\setminus\semg{p_2}$ and thus
  $v\in\semg{p_1}$ and $v\not\in\semg{p_2}$. By induction ${p_1} \matcha
  {v}$ and ${p_2} \nmatcha {v}$ and thus, ${{p_1}\minus{p_2}} \matcha
  {v}$. 
  For the other direction we consider that ${{p_1}\minus{p_2}} \matcha
  {v}$ and then $v\in\semg{p_1} \vee v\not\in\semg{p_2}$. By induction,
  $v\in\semg{p_1}$ and $v\not\in\semg{p_2}$ and thus,
  $v\in\semg{p_1}\setminus\semg{p_2}=\semg{p_1\minus p_2}$.

  The case where $p=\bot$  is obvious. 
\end{proof}

For the termination proof of the rewriting systems $\RrC$ and $\RrCat$
we have used two approaches. The first one consists in providing a
meta-encoding (the complete one for $\RrCat$ is provided in
Appendix~\ref{ap:metaEncoding}) of a complete approximation of the
rule schemas of $\RrC$ and $\RrCat$: to any step in the reduction
relation induced by $\RrCat$, and thus $\RrC$, correspond one or
several reduction steps in the approximation (which can also contain
other reduction paths) and consequently, the termination of the
over-approximation implies the termination of $\RrC$ and
$\RrCat$. Both {\TTT} and {\aprove} managed to find a proof for the
system implementing the meta-encoding. This method has the advantage
of providing an automatic proof.
We also provide below a hand proof of the property.

\convergence*
\begin{proof}

  For termination, we use the lexicographic order over a reduction
  ordering and the recursive path order (RPO) with
  status~\cite{Schneider-KampTACG07} defined below. Intuitively, these
  two orders correspond respectively to the two characteristics decreasing during the
  reduction with the rewriting system: the number of complements
  potentially concerning a variable and the height of the complements
  in the term. 

  The reduction ordering $\sizeorder$ is decreasing on all rules and
  strictly decreasing on $\rmMinusVM$. The RPO $\rpo$ is strictly
  decreasing on all rules but $\rmMinusVM$.

  For the ordering $\sizeorder$ we use a mapping $\sizep$ which
  intuitively maps terms to multisets of positions potentially
  concerned by a complement {\wrt} a variable ({\ie} $X\minus u$).
  We use the strict inclusion $\sizeorder$ as an order over such sets.\\[+3pt]
  $
  \begin{array}{l@{\hspace{10pt}}l@{\hspace{10pt}}l@{\hspace{10pt}}l}
    \sizep(X) = \emptyset,  \hfill  \forall X\in\XX\\
    \sizep(f(u_1,\ldots,u_n)) = \sizep({u_1}) \cup \ldots \cup \sizep({u_n}),  \hfill     \forall f\in\FF^{n} \\
    \sizep({u_1}\plus{u_2}) = \sizep({u_1}) \cup \sizep({u_2})  \\
    \sizep({u_1}\minus{u_2}) = \sizep({u_1}) \cup \sizep({u_2}) \cup (\varp(u_1) \cap \symp(u_2)) \\
    \sizep(\bot) = \emptyset \\
  \end{array}
  $\\[+3pt]
  with\\[+3pt]
  $
  \begin{array}{l@{\hspace{10pt}}l@{\hspace{10pt}}l@{\hspace{10pt}}l}
    \varp(X) = \allPos & 
    \forall X\in\XX
    \\
    \varp(f(v_1,\ldots,v_n)) = \cup_{i=1}^{n}\{i.p \mid p\in\varp(v_i) \} &
    \forall f\in\FF^{n}
    \\
    \varp({v_1}\plus{v_2}) = \varp({v_1}) \cup \varp({v_2}) & 
    \\
    \varp({v_1}\minus{v_2}) = \allPos & 
    \\
    \varp(\bot) = \emptyset & 
    \\
    \symp(X) = \emptyset &
    \forall X\in\XX
    \\
    \symp(f(v_1,\ldots,v_n)) = \cup_{i=1}^{n}\{i.p \mid p\in\symp(v_i) \} \cup \varepsilon &
    \forall f\in\FF^{n}
    \\
    \symp({v_1}\plus{v_2}) = \symp({v_1}) \cup \symp({v_2}) &
    \\
    \symp({v_1}\minus{v_2}) = \allPos &
    \\
    \symp(\bot) = \emptyset &
  \end{array}
  $\\[+3pt]
  with $\allPos$ the set of all positions.\\
  Note that $\sizep(v)=\emptyset$ for any {\addt} term $v$ ({\ie}
  containing no $\minus$).  

  To define the RPO order $\rpo$ we explicit the use of pattern
  variables of the form $\varVar$ using a special symbol $\patvar$ and
  consequently these variables are now of the form $\patvar(id)$ with
  $id$ the name of the variable.  The quasi-precedence used for this
  RPO order $>$ is then defined as follows:
  $\plus,\bot,f_n,\patvar\rpo\minus$ 
  for all $f_n\in\FF^{n}$ with
  statuses $\minus:lexicographic$ (permutation $[2,1]$),
  $\patvar:lexicographic ([1])$,\linebreak 
  $\plus:lexicographic ([1,2])$, 
  $\bot:lexicographic$,
  $f_n:lexicographic ([1,$ $\ldots,n])$.
  It is easy to check that for each rule ${l}\raM{r}$ in $\RrC$
  we have $l>r$.


For the shape of the normal forms {\wrt} this system we proceed by
contradiction.
We suppose that a term of the form $u=u'\minus u''$ with $u',u''$
additive terms is a normal form.
$u''$ 
cannot be a variable since $u$ would be a redex for $\rmMinusMV$, 
cannot be a $\bot$ since $u$ would be a redex for $\rmMinusMB$, 
cannot be a $\plus$ since $u$ would be a redex for $\rmMinusMP$.
If $u''$ is a term of the form $f(u_1,\ldots,u_n)$ then $u'$
cannot be a variable since $u$ would be a redex for $\rmMinusVM$, 
cannot be a $\bot$ since $u$ would be a redex for $\rmMinusBM$, 
cannot be a term whose head symbol is a $\plus$ since $u$ would be a redex for $\rmMinusPM$, 
cannot be of the form $h(u'_1,\ldots,u'_m)$ since $u$ would be a redex for
$\rmMinusFG$ or $\rmMinusFF$. Thus the normal forms contain no
complements.

Suppose now the normal form contains a sum just below a constructor,
{\ie} the term is of the form $f(u_1,\ldots,u'\plus u'',\ldots,u_n)$;
this is not possible since the term would be a redex for
$\rmDistAdd$. Thus the normal forms contains no sum below a
constructor.

We consider now a normal form $u$ containing a $\bot$ at a position other than
the root one. $u$ 
cannot be of the form $\bot\plus u'$ since it would be a redex for $\rmAddL$, 
cannot be of the form $u'\plus\bot$ since it would be a redex for $\rmAddR$, 
cannot be of the form $f(u_1,\ldots,\bot{},\ldots,u_n)$ since it would be a redex for $\rmEmptyF$, 
cannot be of the form $\bot\minus u'$ or $u'\minus\bot$ as seen
above. Thus, normal forms can contain $\bot$ only at the root position.


We show the local confluence of the system by proving that all
critical pairs induced by the rewrite rules of the system converge.
  We have the following critical pairs which all converge:\\
  $(\rmAddL)-(\rmAddR)$ (converge directly),\\
  $(\rmAddL)-(\rmDistAdd)$ and $(\rmAddR)-(\rmDistAdd)$ (converge with  $\rmEmptyF$ and $(\rmAddL)$/$(\rmAddR)$),\\
  $(\rmAddL)-(\rmMinusMP)$ and $(\rmAddR)-(\rmMinusMP)$ (converge with $\rmMinusMB$),\\
  $(\rmAddL)-(\rmMinusPM)$ (converge with $\rmMinusBM$ and $\rmAddL$),\\
  $(\rmAddR)-(\rmMinusPM)$ (converge with $\rmMinusBM$ and $\rmAddR$),\\
  $(\rmEmptyF)-(\rmDistAdd)$ (converges with  twice $\rmEmptyF$, $\rmAddL$),\\
  $(\rmEmptyF)-(\rmMinusBM)$ (converges with $\rmMinusMB$),\\
  $(\rmEmptyF)-(\rmMinusPM)$ (converges with $\rmMinusMB$ and twice  $\rmEmptyF$, $\rmMinusMB$),\\
  $(\rmEmptyF)-(\rmMinusFF)$ only left possible (converges with $\rmMinusMB$ and $\rmMinusMB$, $n$ times $(\rmEmptyF)$, $n$ times $(\rmAddL)$/$(\rmAddR)$)\\
  $(\rmEmptyF)-(\rmMinusFG)$ left (converges with $\rmMinusBM$ and $\rmEmptyF$),\\
  $(\rmEmptyF)-(\rmMinusFG)$ right (converges with $\rmMinusMB$),\\
  $(\rmDistAdd)-(\rmMinusVM)$ (converges with $\rmMinusMP$, $\rmMinusVM$ and $\rmDistAdd$, $\rmMinusMP$),\\
  $(\rmDistAdd)-(\rmMinusBM)$ (converges with $\rmMinusMP$, twice $\rmMinusBM$),\\
  $(\rmDistAdd)-(\rmMinusPM)$ (converges with $\rmMinusPM$ and twice  $\rmDistAdd$),\\
  $(\rmDistAdd)-(\rmMinusFF)$ left (converges with $\rmMinusPM$, $\rmMinusFF$ and $\rmMinusPM$, $\rmDistAdd$)\\
  $(\rmDistAdd)-(\rmMinusFF)$ right (converges with $\rmMinusMP$, twice $\rmMinusFF$ and $\rmMinusMP$)\\
  $(\rmDistAdd)-(\rmMinusFG)$ left (converges with $\rmMinusPM$ and $\rmDistAdd$),\\
  $(\rmDistAdd)-(\rmMinusFG)$ right (converges with $\rmMinusMP$, twice $\rmMinusPM$)\\
\end{proof}

\comEq*
\begin{proof}
  We prove that the ground semantics of the left-hand side and right-hand
  side of the rewrite rules of $\RrC$ are the same. 

  For rules $(\rmEmptyF)$ and $(\rmDistAdd)$ we use the fact that
  patterns are linear and apply the recursive definition for
  constructors: 
  $\semg{f(v_1,\ldots,v_n)}=\{c(u_1,\ldots,u_n)\mid
  (u_1,\ldots,u_n)\in\semg{v_1}\times\ldots\times\semg{v_n}\}$.
  Since the semantics of $\bot{}$ is an empty set then so is the
  semantics of $h(\meta{v_1},\ldots,\bot{},\ldots,$ $\meta{v_n})$ for any
  constructor $h$ and thus the property is verified for rule
  $(\rmEmptyF)$.
  For rule $(\rmDistAdd)$ we have
  $\semg{h(v_1,\ldots,\meta{v_i'}\plus\meta{v_i''},\ldots,v_n)}=
  \{c(u_1,\ldots,u_n)\mid (u_1,\ldots,u_n)\in\semg{v_1}\times\ldots\times\semg{\meta{v_i'}\plus\meta{v_i''}}\ldots\times\semg{v_n}\}=
  \{c(u_1,\ldots,u_n)\mid (u_1,\ldots,u_n)\in\semg{v_1}\times\ldots\times\semg{\meta{v_i'}}\cup\semg{\meta{v_i''}}\ldots\times\semg{v_n}\}=
  \{c(u_1,\ldots,u_n)\mid (u_1,\ldots,u_n)\in\semg{v_1}\times\ldots\times\semg{\meta{v_i'}}\ldots\times\semg{v_n}\}
  \cup\{c(u_1,\ldots,u_n)\mid (u_1,\ldots,u_n)\in\semg{v_1}\times\ldots\times\semg{\meta{v_i''}}\ldots\times\semg{v_n}\}=
  \semg{h(v_1,\ldots,\meta{v_i'},\ldots,v_n)}\cup\semg{h(v_1,\ldots,\meta{v_i''},\ldots,v_n)}$

  For the other rules we can apply straightforwardly the definition of
  ground semantics of extended terms; the proof is more elaborate only
  for the rules $(\rmMinusMP)$ and $(\rmMinusFF)$.
  For $(\rmMinusVM)$ we can notice that $\semg{\varVar}$ and
  $\semg{c_1(\varCt_1,\ldots,\varCt_{n_1})\plus\ldots\plus
    c_k(\varCt_1,\ldots,\varCt_{n_k})}$, with
  $\{c_1,\ldots,c_k\}=\CC$, are both equal to $\TC$.

  For rule $(\rmMinusMP)$, we have
  $\semg{\wm \minus (\vm_1 \plus \vm_2)}$=
  $\semg{\wm} \setminus \semg{(\vm_1 \plus \vm_2)}$=
  $\semg{\wm} \setminus (\semg{\vm_1} \cup \semg{\vm_2})$
  and
  $\semg{(\wm \minus \vm_1) \minus \vm_2}$=
  $\semg{(\wm \minus \vm_1)} \setminus \semg{\vm_2}$=
  $(\semg{\wm} \setminus \semg{\vm_1}) \setminus \semg{\vm_2}$=
  $\semg{\wm} \setminus (\semg{\vm_1} \cup \semg{\vm_2})$.

  For  $(\rmMinusFF)$, we have
  $V$=
  $\semg{f(\vm_1,\ldots,\vm_n) \minus f(\tm_1,\ldots,\tm_n)}$= 
  $\semg{f(\vm_1,\ldots,$ $\vm_n)} \setminus \semg{f(\tm_1,\ldots,\tm_n)}$
  and since patterns are linear we can use the recursive definition and we get 
  $U$=$\{f(\vm'_1,\ldots,\vm'_n)\mid(\vm'_1,\ldots,\vm'_n)\in\semg{\vm_1}\times\ldots\times\semg{\vm_n}\}$
  $\setminus$
  $\{f(\tm'_1,\ldots,\tm'_n)\mid(\tm'_1,\ldots,\tm'_n)\in\semg{\tm_1}\times\ldots\times\semg{\tm_n}\}$
  and similarly
  $W$=
  $\semg{f(\vm_1\minus \tm_1,\ldots,\vm_n) \plus \ldots \plus f(\vm_1,\ldots,\vm_n\minus \tm_n)}$=
  $\semg{f(\vm_1\minus \tm_1,\ldots,\vm_n)} \cup \ldots \cup \semg{f(\vm_1,\ldots,\vm_n\minus \tm_n)}$=
  $\{f(\vm'_1,\ldots,\vm'_n)\mid(\vm'_1,\ldots,\vm'_n)\in\semg{\vm_1\minus \tm_1}\times\ldots\times\semg{\vm_n}\}$
  $\cup \ldots \cup$
  $\{f(\vm'_1,\ldots,\vm'_n)\mid(\vm'_1,\ldots,\vm'_n)\in\semg{\vm_1}\times\ldots\times\semg{\vm_n\minus \tm_n}\}$.
  We have to show that $U=W$ and
  thus that $\forall s, s\in U \text{ iff } s\in W$.
  Take $s\in U$, then 
  $s\in\{f(\vm'_1,\ldots,\vm'_n)\mid(\vm'_1,\ldots,\vm'_n)\in\semg{\vm_1}\times\ldots\times\semg{\vm_n}\}$ 
  and 
  $s\not\in\{f(\tm'_1,\ldots,\tm'_n)\mid(\tm'_1,\ldots,\tm'_n)\in\semg{\tm_1}\times\ldots\times\semg{\tm_n}\}$  
  and thus, 
  $s=f(\wm_1,\ldots,\wm_n)$ s.t. $\forall i,
  \wm_i\in\semg{\vm_i}$ and $\exists j,
  \wm_j\not\in\semg{\tm_j}$. Consequently, we have
  $\wm_j\in\semg{\vm_j}\setminus\semg{\tm_j}$ and thus,
  $f(\wm_1,\ldots,\wm_n)\in\{f(\vm'_1,\ldots,\vm'_n)\mid(\vm'_1,\ldots,\vm'_n)\in\semg{\vm_1}\times\ldots\times\semg{\vm_j\minus \tm_j}\times\ldots\times\semg{\vm_n}\}$
  $\subseteq W$. 
  We can show similarly that if $s\in W$ then $s\in U$.


\end{proof}

\subsumption*
\begin{proof}
  If $p\minus (p_1\plus\cdots\plus p_n) \multievalM{\RrC} \bot$ then,
  according to Proposition~\ref{prop:comEq}, $\semg{p\minus
    (p_1\plus\cdots\plus p_n)}=\semg{\bot}=\emptyset$. Consequently, we have\linebreak
  $\semg{p}\setminus\bigcup_{i=1}^{n}\semg{p_i}=\emptyset$ and thus
  $\semg{p}\subseteq\bigcup_{i=1}^{n}\semg{p_i}$ so $p$ is
    subsumed by $\{ p_1,\ldots,p_n \}$.

  If $p \text{ is subsumed by } \{ p_1,\ldots,p_n \}$ then
  $\semg{p}\subseteq\bigcup_{i=1}^{n}\semg{p_i}$ and thus
  $\semg{p}\setminus\bigcup_{i=1}^{n}\semg{p_i}=\emptyset$ and
  consequently $\semg{p\minus (p_1\plus\cdots\plus p_n)}=\emptyset$.
  According to Lemma~\ref{lemma:convergence} the normal form of
  $p\minus (p_1\plus\cdots\plus p_n)$ {\wrt} to $\RrC$ is an {\addt}
  term with $\bot{}$ potentially present only at the root position.
  Since the semantics of an {\addt} term with $\bot{}$ potentially
  present only at the root position cannot be empty and since,
  according to to Proposition~\ref{prop:comEq}, $\RrC$ preserves the
  semantics then, the normal form of $p\minus (p_1\plus\cdots\plus
  p_n)$ is necessarily $\bot$ (whose semantics is empty).
\end{proof}

\apEq*
\begin{proof}
  We consider a rewriting {\wrt} $\RrAP$ which occurs at the root position and in this case if
  $p\evalM_{\RrAP}p'$ then there exists a substitution $\tau$
  such that $\tau(\ap\tm)=p$ and since at the meta-level we perform plain
  matching and substitution application this comes to
  $p=\ap\tau(\tm)$. We have thus,
  $p'=\tau(\varCt\minus\tm)=\varCt\minus\tau(\tm)$ and consequently,
  we have to prove that
  $\semg{\ap\tau(\tm)}=\semg{\varCt\minus\tau(\tm)}$.
  For this, it is enough to take $\omega=\varepsilon$ in
  $\semg{\rmp{t}{\omega}{\ap{t'}}}=\semg{\rmp{t}{\omega}{z}}\setminus\semg{\rmp{t}{\omega}{t'}}$
  (since there is no $\omega' < \omega$ s.t.  $t(\omega')=\ap$), and
  we obtain $\semg{\ap{t}}=\semg{z}\setminus\semg{t}=\semg{z\minus
    t}=\semg{\varCt\minus t}$.

  We have, by induction on the structure of terms, that $\forall s,s',
  \text{ if } \semg{s}=\semg{s'}$, then for any linear terms
  $\rmp{t}{\omega}{s},\rmp{t}{\omega}{s'}$ we have
  $\semg{\rmp{t}{\omega}{s}}=\semg{\rmp{t}{\omega}{s'}}$.
  We can thus conclude by definition of one step rewriting and using
  the equivalence for the rewriting at the root position.
\end{proof}


To establish the minimality result for the algorithm in
Figure~\ref{fig:minimize} we first state two properties of the
predicate \emph{``is subsumed by''} which are needed to ensure the
correctness of the algorithm.
%
\begin{lemma}[Valid subset]
  \label{lemma:validsubset}
  Given two sets of
  constructor patterns $S,S'$ such that $S'\subseteq S$, we have:
  $\validsubset{S'}{S}$ iff $\forall q\in S\setminus S'$,
  $q$ is subsumed by~$S'$.
\end{lemma}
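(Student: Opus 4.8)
The plan is to unfold the two notions involved and reduce the statement to a routine manipulation of unions. By definition, $\validsubset{S'}{S}$ means $\semg{S'}=\semg{S}$, and $q$ being subsumed by $S'$ means $\semg{q}\subseteq\semg{S'}$. Recall also that the semantics of a set of patterns is the union of the semantics of its elements, so $\semg{S}=\bigcup_{p\in S}\semg{p}$.

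First I would use the hypothesis $S'\subseteq S$ to split $S$ as the union $S'\cup(S\minus S')$, giving
\[\semg{S}=\semg{S'}\cup\bigcup_{q\in S\minus S'}\semg{q}.\]
In particular $\semg{S'}\subseteq\semg{S}$ holds automatically, so the equality $\semg{S'}=\semg{S}$ is equivalent to the single inclusion $\semg{S}\subseteq\semg{S'}$.

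Next I would observe that, in view of the decomposition above, $\semg{S}\subseteq\semg{S'}$ holds iff $\bigcup_{q\in S\minus S'}\semg{q}\subseteq\semg{S'}$, and a union of sets is contained in a given set precisely when each member of the union is. Hence this is equivalent to $\semg{q}\subseteq\semg{S'}$ for every $q\in S\minus S'$, which is exactly the condition that each such $q$ is subsumed by $S'$. Chaining these equivalences yields the claim.

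Since every step is a direct consequence of the definition of the ground semantics of a set of patterns together with the monotonicity of unions, there is no genuine obstacle here; the only point requiring a little care is keeping track of the direction of the inclusions and noticing that one inclusion, namely $\semg{S'}\subseteq\semg{S}$, is free from $S'\subseteq S$, so that the entire content of the lemma resides in the reverse inclusion.
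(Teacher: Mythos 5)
Your proof is correct and follows essentially the same route as the paper's: both reduce the claim to the decomposition $\semg{S}=\semg{S'}\cup\bigcup_{q\in S\setminus S'}\semg{q}$ and the fact that a union is contained in a set iff each member is. The only difference is stylistic — you present the backward direction as a direct chain of equivalences, whereas the paper argues it by contradiction — but the underlying argument is identical.
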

\begin{proof}
  $(\Rightarrow)$ When $\validsubset{S'}{S}$ we have $\bigcup_{p\in S'}\semg{p} = \bigcup_{p\in S}\semg{p}$.
  Let $q\in S\setminus S'$, we have $\semg{q}\subseteq\bigcup_{p\in S}\semg{p}$, and thus
  $\semg{q}\subseteq\bigcup_{p\in S'}\semg{p}$. So $q$ is subsumed by~$S'$.

  \noindent
  $(\Leftarrow)$
%
  We proceed by contradiction and we suppose that $\bigcup_{p\in
    S'}\semg{p} \neq \bigcup_{p\in S}\semg{p}$. In this case there
  exists $t\in\bigcup_{p\in S}\semg{p}$ such that
  $t\not\in\bigcup_{p\in S'}\semg{p}$ and thus, we have
  $t\in\bigcup_{p\in S\setminus S'}\semg{p}$. Consequently,
  $\exists q\in S\setminus S'$ such that $t\in\semg{q}$.  By
  hypothesis $q$ is subsumed by~$S'$: we have $\semg{q}\subseteq\bigcup_{p\in S'}\semg{p}$
  and thus, $t\in\bigcup_{p\in S'}\semg{p}$, which leads to a
  contradiction.  We conclude that $\bigcup_{p\in S'}\semg{p} =
  \bigcup_{p\in S}\semg{p}$.
\end{proof}

\begin{lemma}[Subsume] 
\label{lemma:subsume}   
  Given two
  patterns $p,p'\in\TCX$ and a set of constructor patterns~$S$, if $p$
  is subsumed by $\{p'\}\cup S$ and $p'$ is subsumed by~$S$ then $p$ is
  subsumed by~$S$.
\end{lemma}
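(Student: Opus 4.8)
The plan is to reduce the statement to a purely set-theoretic inclusion by unfolding the definition of subsumption through the ground semantics. Recall that ``$q$ is subsumed by $P$'' means exactly $\semg{q}\subseteq\semg{P}$, and that $\semg{P}=\bigcup_{r\in P}\semg{r}$, so in particular $\semg{\{p'\}\cup S}=\semg{p'}\cup\semg{S}$. Rewriting both hypotheses and the goal in these terms turns the lemma into an elementary fact about unions of sets, which I would establish directly.

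Concretely, I would first record the two hypotheses in semantic form: from ``$p$ is subsumed by $\{p'\}\cup S$'' we get $\semg{p}\subseteq\semg{p'}\cup\semg{S}$, and from ``$p'$ is subsumed by $S$'' we get $\semg{p'}\subseteq\semg{S}$. Next I would observe that the second inclusion yields $\semg{p'}\cup\semg{S}=\semg{S}$, since adjoining a subset to a set does not enlarge it. Substituting this identity into the first inclusion gives immediately $\semg{p}\subseteq\semg{S}$, which is precisely the statement that $p$ is subsumed by $S$.

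I do not anticipate a genuine obstacle here: once the subsumption predicate is phrased via $\semg{\cdot}$ the argument is a two-step chain of set inclusions, requiring no induction on the structure of the patterns and no appeal to the reduction systems. The only point deserving a line of care is the passage $\semg{p'}\subseteq\semg{S}\implies\semg{p'}\cup\semg{S}=\semg{S}$, which I would state explicitly so that the substitution step is transparent. This lemma is then available, together with Lemma~\ref{lemma:validsubset}, to drive the correctness and minimality arguments for the algorithm of Figure~\ref{fig:minimize}.
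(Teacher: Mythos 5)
Your proof is correct and follows essentially the same route as the paper: both unfold subsumption into the inclusions $\semg{p}\subseteq\semg{p'}\cup\semg{S}$ and $\semg{p'}\subseteq\semg{S}$ and conclude $\semg{p}\subseteq\semg{S}$ by elementary set reasoning. The only difference is cosmetic --- you make the intermediate identity $\semg{p'}\cup\semg{S}=\semg{S}$ explicit where the paper leaves the final implication as a one-line observation.
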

\begin{proof}
  We have $\semg{p}\subseteq\semg{p'}\cup\bigcup_{q\in S}\semg{q}$ and
  $\semg{p'}\subseteq\bigcup_{q\in S}\semg{q}$. This implies
  $\semg{p}\subseteq \bigcup_{q\in S}\semg{q}$.
\end{proof}

\minimize*
\begin{proof}
  We consider $\Smin=minimum(S)$.  We can first prove by induction on
  the size of $S$ that $\forall p\in S$, $p$ is subsumed by $\Smin$. The
  basic case is obvious. The interesting case is when a pattern $q\in
  S$ is not included in the kernel and thus in the final result. This
  could occur only if $q$ is subsumed by the kernel and in this case
  we can apply the induction hypothesis and Lemma~\ref{lemma:subsume}
  and obtain that $q$ is subsumed by $\Smin$.
  By construction, $\Smin$ is clearly included in $S$.  Consequently,
  by Lemma~\ref{lemma:validsubset} $\Smin$ is a valid subset of $S$.

  We can also prove by induction on the size of $S$ that for any $q$
  subsumed by $S$ we have
  $\card{minimum(S\cup\{q\})}\leq\card{minimum(S)}$. For this
  we proceed as before and we apply the induction hypothesis and
  Lemma~\ref{lemma:subsume}. On the other hand, by
  Lemma~\ref{lemma:validsubset}, if $S'$ is a valid subset of $S$ then
  all $q\in S\setminus S'$ are subsumed by~$S'$. We obtain that 
  $\card{minimum(S)}\leq\card{minimum(S')}$ and since 
  that $minimum(S)$ is a valid subset, it is also the smallest one.
\end{proof}



\convergenceAt*
\begin{proof}
  Recall that a meta-encoding of a complete approximation of the rule
  schema $\RrCat$ is provided in
  Appendix~\ref{ap:metaEncoding}. Automatic termination proof tools
  such as {\TTT} and {\aprove} have been used to prove that this
  meta-encoding is teminating and we can thus directly conclude to the
  termination of $\RrCat$.

  We can also extend the recursive path order (RPO) with status used
  in Lemma~\ref{lemma:convergence}.
  Once again the reduction ordering $\sizeorder$ is decreasing on all
  rules and strictly decreasing on $\rmMinusVMp$. The RPO $\rpo$ is
  strictly decreasing on all rules but $\rmMinusVMp$. The definitions
  are extended for as-patterns as follows:\\
  $
  \begin{array}{l@{\hspace{10pt}}l@{\hspace{10pt}}l@{\hspace{10pt}}l}
    \sizep(X\at{u}) = \sizep(u) \\
  \end{array}
  $\\
  with\\
  $
  \begin{array}{l@{\hspace{10pt}}l@{\hspace{10pt}}l@{\hspace{10pt}}l}
    \varp(X\at{v}) = \varp(v) & \\
    \symp(X\at{v}) = \symp({v}) &
  \end{array}
  $\\
  Note that we still have $\sizep(v)=\emptyset$ for any {\addt} term $v$ ({\ie}
  containing no $\minus$).  

  For the RPO order $\rpo$ we complete the quasi-precedence used in
  Lemma~\ref{lemma:convergence} to take into account aliases: 
  $\plus,\bot,f_n,\patvar,\at\rpo\minus$ for all $f_n\in\FF^{n}$ with
  the same statuses as before and with $\at:lexicographic ([1,2])$.
  It is easy to check that for each rule ${l}\raM{r}$ in $\RrCat$
  we have $l>r$.

  \medskip


  For the shape of the normal forms {\wrt} this system we proceed
  as in Lemma~\ref{lemma:convergence} and we suppose that a term of
  the form $u=u'\minus u''$ with $u',u''$ additive terms is a normal
  form.  For the same reasons as before, $u''$ cannot be a variable,
  $\bot$, a term whose head symbol is a $\plus$.  Moreover $u''$
  cannot be a term whose head symbol is an $\at$ since $u$ would be a
  redex for $\rmMinusMA$.
  If $u''$ is a term of the form $f(u_1,\ldots,u_n)$ then, as before,
  $u'$ cannot be a variable (since $u$ would be a redex for
  $\rmMinusVMp$ this time), a $\bot$, a term headed by a $\plus$, a
  term of the form $h(u'_1,\ldots,u'_m)$. Moreover $u'$ cannot be an
  aliased term (headed by $\at$) since $u$ would be a redex for
  $\rmMinusAM$.  Thus the normal forms contain no complements.

  We suppose that a term of the form $u=X\at u'$ with $u'$ a
  non-variable additive term is a normal form. Then $u'$ cannot be a
  $\bot$ since $u$ would be a redex for $\rmEmptyA$, cannot be a term
  whose head symbol is a $\plus$ since $u$ would be a redex for
  $\rmDistAt$. Thus normal forms could contain aliases only for
  constructor patterns.

  We consider now a normal form $u$ containing a $\bot$ at a position other than
  the root one. As before, $u$ 
  cannot be of the form $\bot\plus u'$,
  cannot be of the form $u'\plus\bot$,
  cannot be of the form $f(u_1,\ldots,\bot{},\ldots,$ $u_n)$,
  cannot be of the form $\bot\minus u'$ or $u'\minus\bot$ or $X \at \bot{}$ as seen above.
  Thus, normal form can contain $\bot$ only at the root position.
  
  \medskip

  For the confluence we first note that for any {\addt} pattern $v$,
  $\bot\minus v \multievalM{\RrCat}\bot$; the proof is easy by cases.
  We show then the local confluence of the system by proving that all
  critical pairs induced by the rewrite rules of the system converge.
  We have the same critical pairs as for Lemma~\ref{lemma:convergence}
  togheter with the following one and all converge:\\
  $(\rmAddL)-(\rmDistAt)$ and $(\rmAddR)-(\rmDistAt)$ (converge with $\rmEmptyA$ and $\rmEmptyA$, $(\rmAddL)$/$(\rmAddR)$),\\
  $(\rmEmptyA)-(\rmMinusAM)$  (converges with property of $\bot\minus v$ and  $\rmEmptyA$),\\
  $(\rmEmptyA)-(\rmMinusMA)$  (converges directly),\\
  $(\rmDistAdd)-(\rmMinusVMp)$ (converges with $\rmMinusMP$, $\rmMinusVMp$, $\rmMinusAM$ and $\rmDistAdd$, $\rmMinusMP$),\\
  $(\rmDistAt)-(\rmMinusAM)$ (converges with $\rmMinusPM$, and $\rmMinusPM$, $\rmDistAt$),\\
  $(\rmDistAt)-(\rmMinusMA)$ (converges with $\rmMinusMP$, twice $\rmMinusMA$ and $\rmMinusMP$),\\
  $(\rmMinusMV)-(\rmMinusAM)$ (converges with $\rmMinusMV$, $\rmEmptyA$),\\
  $(\rmMinusMB)-(\rmMinusAM)$ (converges with $\rmMinusMB$),\\
  $(\rmMinusMP)-(\rmMinusAM)$ (converges with twice $\rmMinusAM$, $\rmMinusMP$),\\
  $(\rmMinusAM)-(\rmMinusMA)$ (converges with twice $\rmMinusMA$, $\rmMinusAM$).

\end{proof}

\comEqAs*
\begin{proof}
  We proceed as in the proof of Proposition~\ref{prop:comEq} and we
  prove that the ground semantics of the left-hand side and right-hand
  side of the rewrite rules of $\RrCat$ are the same. The proof is the
  same for the rules already in $\RrC$.  For the other rules we use
  the fact that $\semg{\varVar}=\TC$ and thus, that for any pattern
  $p$, $\semg{p}\cap\semg{\varVar}=\semg{p}$.
\end{proof}


\mvarPreserve*
\begin{proof}
  By induction on the structure of patterns. If the reduction takes
  place at the root position then the property can be easily verified
  for each rewrite rule using the definition of matchable variables
  and basic laws of set theory. In fact, $\mvar{p}=\mvar{p'}$ for all
  rules except for $\rmMinusVMp$ for which
  $\mvar{p}\subseteq\mvar{p'}$.
  Moreover, $\var{p'}\subseteq\var{p}$ except for $\rmMinusVMp$ for which
  $\var{p'}\setminus\var{p}\subseteq\mvar{p'}$ and thus,
  $\fvar{p'}=\fvar{p}$ for this rule.
  We proceed similarly in the case where the reduction takes place at
  a position different from the root one. Thus, we check that the
  property is verified when $p$ is of the form $f(p_1,\ldots,p_n)$,
  ${p_1}\plus{p_2}$, ${p_1}\minus{p_2}$ and ${p_1}\at{p_2}$.
  If $f(p_1,\ldots,p_n)\evalM_{\RrCat}f(p_1',\ldots,p_n)$,
  since $p_1\evalM_{\RrCat}p_1'$ we have by induction
  $\mvar{p_1}\subseteq\mvar{p_1'}$ and consequently 
  $\mvar{p_1}\cup\ldots\cup\mvar{p_n}\subseteq\mvar{p_1'}\cup\ldots\cup\mvar{p_n}$
  which allows us to conclude for this case. 
  We can proceed similarly when $p={p_1}\at{p_2}$.
  The case ${p_1}\minus{p_2}\evalM_{\RrCat}{p_1}\minus{p_2'}$
  is trivial and for
  ${p_1}\minus{p_2}\evalM_{\RrCat}{p_1'}\minus{p_2}$ we can
  simply apply induction.
  For the case
  ${p_1}\plus{p_2}\evalM_{\RrCat}{p_1'}\plus{p_2}$ we can
  apply induction and use the fact that if
  $\mvar{p_1}\subseteq\mvar{p_1'}$ then
  $\mvar{p_1}\cap\mvar{p_2}\subseteq\mvar{p_1'}\cap\mvar{p_2}$.
\end{proof}

\begin{lemma}[Semantics preservation under substitution application]
  \label{lemma:semgStableBySubst}
  Given the patterns $p,p'$ such that $p\evalM_{\RrCat}p'$
  and a substitution $\sigma$ with
  $\dom{\sigma}\cap\fvar{\ovr{p}}=\emptyset$, we have
  $\semg{\sigma(p)}=\semg{\sigma(p')}$.
\end{lemma}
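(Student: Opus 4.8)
The plan is to proceed, as in the proof of Proposition~\ref{prop:mvarPreserve}, by induction on the structure of $p$, reducing everything to the case where the $\RrCat$-step takes place at the root. The one ingredient that drives the induction is the compositionality of the ground semantics: each defining clause of $\semg{\cdot}$ expresses the semantics of a pattern as a fixed set-theoretic operation ($\cup$, $\setminus$, $\cap$, or the constructor image) applied to the semantics of its immediate subterms. Hence if $p$ has the form $f(p_1,\ldots,p_n)$ with $f\in\CC$, or $p_1\plus p_2$, $p_1\minus p_2$, $q\at p_0$, and the reduction occurs strictly below the root, then $\semg{\sigma(p)}$ is that same operation applied to the $\sigma$-images of the subterms, so it suffices to invoke the induction hypothesis on the one subterm that was rewritten and leave the others untouched.

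To apply the induction hypothesis I first need that $\sigma$ still avoids the free variables of the rewritten subterm. This is guaranteed by a downward-closure property of free variables dual to Proposition~\ref{prop:mvarPreserve}: for each construct above, $\fvar{\cdot}$ of any immediate subterm is contained in $\fvar{p}$, and moreover every variable occurring in a subtracted position (the right argument of a $\minus$) belongs to $\fvar{p}$. Both facts follow directly from the definition of $\mvar{\cdot}$ together with linearity (the same bookkeeping as for Proposition~\ref{prop:mvarPreserve}), so that $\dom{\sigma}\cap\fvar{p}=\emptyset$ entails $\dom{\sigma}\cap\fvar{p_i}=\emptyset$ for the relevant subterm $p_i$, as required for the recursive call.

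It then remains to treat the base case, where $p$ is itself a redex and $p'$ its contractum; here I would re-run, under $\sigma$, the semantic computations already performed in Proposition~\ref{prop:comEqAs}. For every rule whose justification is a set-theoretic identity valid for arbitrary sets --- the neutrality and absorption of $\emptyset$ (\rmAddL, \rmAddR, \rmEmptyF, \rmEmptyA, \rmMinusMB, \rmMinusBM), the distributivity and associativity laws (\rmDistAdd, \rmDistAt, \rmMinusMP, \rmMinusPM, \rmMinusFF), the disjointness law \rmMinusFG, and the absorption \rmMinusAM (which is just $(A\cap B)\setminus C=A\cap(B\setminus C)$) --- the identity persists verbatim after replacing each meta-variable by its $\sigma$-image, since it holds for every interpretation of the operands; no hypothesis on $\sigma$ is needed, and in particular I do not even need $\sigma$ to preserve additivity, because only the resulting sets matter. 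The two rules whose correctness genuinely relies on $\semg{\varVar}=\TC$ for a variable sitting in a subtracted position are \rmMinusMV, where $\semg{\meta{v}\minus\varVar}=\semg{\meta{v}}\setminus\TC=\emptyset$, and \rmMinusMA, where $\semg{\meta{v}}\setminus(\TC\cap\semg{\meta{w}})=\semg{\meta{v}}\setminus\semg{\meta{w}}$; for these the downward-closure property places $\varVar$ in $\fvar{p}$, so $\sigma$ fixes it, the equality $\semg{\sigma(\varVar)}=\TC$ is retained, and the identities hold.

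Finally, for the modified rule \rmMinusVMp the target equality reduces to $\semg{s}\setminus\semg{g(\meta{t_1},\ldots,\meta{t_n})}=\semg{s}\cap(\TC\setminus\semg{g(\meta{t_1},\ldots,\meta{t_n})})$ with $s=\sigma(\varVar)$, which is valid for any pattern $s$ because $\semg{s}\subseteq\TC$ always holds; the only property used is $\semg{\sum_{c\in\CC}c(\varCt_1,\ldots,\varCt_m)}=\TC$, which persists since the $\varCt_i$ are fresh and may be assumed outside $\dom{\sigma}$. Collecting the base and inductive cases yields $\semg{\sigma(p)}=\semg{\sigma(p')}$. The main obstacle is exactly the base case: isolating which of the rules of $\RrCat$ have a justification sensitive to a further substitution --- only \rmMinusMV and \rmMinusMA --- and checking that the hypothesis $\dom{\sigma}\cap\fvar{p}=\emptyset$ protects precisely those, whereas the matchable variable of \rmMinusVMp may be substituted harmlessly thanks to the intersection introduced by the new alias, which is the very reason the old rule \rmMinusVM of $\RrC$ failed to preserve matchable variables.
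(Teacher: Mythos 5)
Your proposal is correct and follows essentially the same route as the paper's proof: a structural induction reducing to root-position reductions (using that free variables of subterms remain free, and that all variables in subtracted positions are free and hence fixed by $\sigma$), followed by a rule-by-rule verification in which the only cases genuinely sensitive to the substitution are \rmMinusMV{} and \rmMinusMA{} (protected because their variable is free) and \rmMinusVMp{} (handled via $\semg{\sigma(\varVar)}\subseteq\TC$). The paper simply writes out each semantic computation explicitly rather than grouping the rules by the underlying set identity as you do, but the content is the same.
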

\begin{proof}
  If the reduction takes place at the root position then the property
  can be verified for each rewrite rule.

  If $p=v_1\plus v_2$ then one of the rules $(\rmAddL)$ or
  $(\rmAddR)$ is used for the reduction at the root position. 
  If $p=\bot\plus{q}$ then $p'=q$ with
  $\sigma(p)=\bot\plus\sigma({q})$ and $\sigma(p')=\sigma({q})$.  We
  have
  $\semg{\sigma(p)}=\semg{\bot\plus\sigma({q})}=\semg{\bot}\cup\semg{\sigma({q})}=\emptyset\cup\semg{\sigma({q})}=\semg{\sigma({q})}=\semg{\sigma(p')}$.
  The case $p={q}\plus\bot$ is similar.

  If $p=f(v_1,\ldots,v_n)$ then one of the rules $(\rmEmptyF)$ or
  $(\rmDistAdd)$ is used for the reduction at the root position. 
  In the former case $p=f({v_1},\ldots,\bot{},\ldots,{v_n})$ and
  $\sigma(p)=f(\sigma(v_1),\ldots,\sigma(\bot{}),\ldots,\sigma(v_n))=f(\sigma(v_1),
  \ldots, \bot{}, \ldots,$ $\sigma(v_n))$.
  Since patterns are linear we can apply the recursive definition for
  constructors:
  $\semg{f(\sigma(v_1),\ldots,\bot{},$ $\ldots,$ $\sigma(v_n))}=\{f(u_1,\ldots,u_n)\mid
  (u_1,\ldots,u_n)\in\semg{\sigma(v_1)}\times\ldots\times\semg{\sigma(v_n)}\}$.
  Since the semantics of $\bot{}$ is an empty set then so is the
  semantics of $f(\sigma(v_1),\ldots,\bot{},\ldots,\sigma(v_n)))$.
  We also have $\sigma(p')=\sigma(\bot{})=\bot{}$ and since
  $\semg{\bot{}}$ is the empty set, the property is verified for rule
  $(\rmEmptyF)$.
  For the case where the rule $(\rmDistAdd)$ is applied we have
  $p=f({v_1},\ldots,{v_i'}\plus{v_i''},\ldots,{v_n})$ with
  $\sigma(p)=f(\sigma({v_1}),\ldots,\sigma({v_i'}\plus{v_i''}),\ldots,\sigma({v_n}))=f(\sigma({v_1}),\ldots,\sigma({v_i'})\plus\sigma({v_i''}),\ldots,\sigma({v_n}))$
  and $p'=f({v_1},\ldots,{v_i'},\ldots,{v_n})\plus
  f({v_1},\ldots,{v_i''},\ldots,{v_n})$ with
  $\sigma(p')=f(\sigma({v_1}),$ $\ldots,\sigma({v_i'}),\ldots,\sigma({v_n}))\plus
  f(\sigma({v_1}),\ldots,\sigma({v_i''}),\ldots,\sigma({v_n}))$.
  We have
  $\semg{\sigma(p)}=\semg{f(\sigma({v_1}),\ldots,\sigma({v_i'})\plus\sigma({v_i''}),\ldots,\sigma({v_n}))}=$
  $\{f(u_1,\ldots,$ $u_n)\mid (u_1,\ldots,u_n)\in\semg{\sigma(v_1)}\times\ldots\times\semg{\sigma({v_i'})\plus\sigma({v_i''})}\ldots\times\semg{\sigma(v_n)}\}=
  \{f(u_1,\ldots,u_n)\mid  (u_1,\ldots,u_n)\in\semg{\sigma(v_1)}\times\ldots\times\semg{\sigma({v_i'})}\cup\semg{\sigma({v_i''})}\times\ldots$   $\times\semg{\sigma(v_n)}\}=
  \{f(u_1,\ldots,u_n)\mid (u_1,\ldots,u_n)\in\semg{\sigma(v_1)}\times\ldots\times\semg{\sigma({v_i'})}\ldots\times\semg{\sigma(v_n)}\}
  \cup\{f(u_1,\ldots,u_n)\mid
  (u_1,\ldots,u_n)\in\semg{\sigma(v_1)}\times\ldots\times\semg{\sigma({v_i''})}\ldots\times\semg{\sigma(v_n)}\}=
  \semg{f(\sigma(v_1),\ldots,\sigma({v_i'}),\ldots,$ $\sigma(v_n))}\cup \semg{f(\sigma(v_1),\ldots,\sigma({v_i''}),\ldots,\sigma(v_n))}=\semg{\sigma(p')}$
  and thus the property is verified for rule
  $(\rmDistAdd)$ as well.

  If $p=x\at{p}$ with $x\in\XX$  then one of the rules $(\rmEmptyA)$ or
  $(\rmDistAt)$ is used for the reduction at the root position. 
  If $p=x\at{\bot}$ then the rule $(\rmEmptyA)$ is applied
  and $p'=\bot$.  We have
  $\semg{\sigma(p)}=
  \semg{\sigma(x)\at\sigma(\bot)}=
  \semg{\sigma(x)\at\bot}=
  \semg{\sigma(x)}\cap\semg{\bot}=
  \semg{\sigma(x)}\cap\emptyset=
  \emptyset=
  \emptyset=
  \semg{\bot}=
  \semg{\sigma(\bot)}=
  \semg{\sigma(p')}$.
  If $p=x\at({v_1}\plus{v_2})$ then the rule $(\rmDistAt)$ is applied
  and $p'=x\at{v_1}\plus{x}\at{v_2}$.  We have
  $\semg{\sigma(p)}=\semg{\sigma(x)\at\sigma({v_1}\plus{v_2})}=
  \semg{\sigma(x)\at(\sigma({v_1})\plus\sigma({v_2}))}=
  \semg{\sigma(x)}\cap(\semg{\sigma({v_1})}\cup\semg{\sigma({v_2})})=
  (\semg{\sigma(x)}\cap\semg{\sigma({v_1})})\cup(\semg{\sigma(x)}\cap\semg{\sigma({v_2})})=
  \semg{\sigma(x)\at{\sigma({v_1})}}\cup\semg{\sigma(x)\at{\sigma({v_2})}}=
  \semg{(\sigma(x)\at{\sigma({v_1})})\plus({\sigma(x)\at{\sigma({v_2})}})}=
  \semg{\sigma(p')}$.

  When $p=p_1\minus{p_2}$ one of the other rules applies. Since
  $\dom{\sigma}\subseteq\mvar{p}=\mvar{p_1}$ we have
  $\sigma(p)=\sigma(p_1)\minus{p_2}$. We apply the definition of
  ground semantics of extended terms and several identifies of the
  algebra of sets.
  
  If $p=v\minus{x}$ with $x\in\XX$ then the rule $(\rmMinusMV)$ is applied
  and $p'=\bot$.
  We have $\sigma(p)=\sigma(v)\minus{x}$ and
  $\semg{\sigma(p)}=\semg{\sigma(v)\minus{x}}=\semg{\sigma(v)}\minus\semg{{x}}=\semg{\sigma(v)}\minus\TC=\emptyset=\semg{\bot}=\semg{\sigma(p')}$.

  If $p=v\minus\bot$ then the rule $(\rmMinusMB)$ is applied and
  $p'=v$. We have
  $\semg{\sigma(p)}=\semg{\sigma(v)\minus\bot}=\semg{\sigma(v)}\setminus\semg{\bot}=\semg{\sigma(v)}\setminus\emptyset=\semg{\sigma(v)}=\semg{\sigma(p')}$.

  If $p=w\minus({v_1}\plus{v_2})$ then the rule $(\rmMinusMP)$ is applied
  and $p'=(w\minus{v_1})\minus{v_2}$. We have
  $\semg{\sigma(p)}=\semg{\sigma(w) \minus (v_1 \plus v_2)}=
  \semg{\sigma(w)} \setminus \semg{(v_1 \plus v_2)}=
  \semg{\sigma(w)} \setminus (\semg{v_1} \cup \semg{v_2})$
  and
  $\semg{\sigma(p')}=\semg{(\sigma(w) \minus v_1) \minus v_2}=
  \semg{(\sigma(w) \minus v_1)} \setminus \semg{v_2}=
  (\semg{\sigma(w)} \setminus \semg{v_1}) \setminus \semg{v_2}=
  \semg{\sigma(w)} \setminus (\semg{v_1} \cup \semg{v_2})$. Thus, 
  $\semg{\sigma(p)}=\semg{\sigma(p')}$.

  If $p={x}\minus{g(\meta{t_1},\ldots,\meta{t_n})}$ with $x\in\XX$ then the rule $(\rmMinusVMp)$ is
  applied and $p'=x\at(\sum\limits_{c\in\CC} c(\varCt_1,\ldots,\varCt_{m}) \minus g(\meta{t_1},\ldots,\meta{t_n}))$.
  Since $\dom{\sigma}\subseteq\mvar{p}=x$ and since
  $\varCt_1,\ldots,\varCt_{m}$ are fresh variables different from $x$
  we have
  $\sigma(p)=\sigma({x})\minus{g(\meta{t_1},\ldots,\meta{t_n})}$ and 
  $\sigma(p')=\sigma(x)\at(\sum\limits_{c\in\CC}c(\varCt_1,\ldots,\varCt_{m})\minus{g(\meta{t_1},\ldots,\meta{t_n})})$.
  We have
  $\semg{\sigma(p)}=\semg{\sigma({x})\minus{g(\meta{t_1},\ldots,\meta{t_n})}}=\semg{\sigma({x})}\setminus\semg{{g(\meta{t_1},\ldots,\meta{t_n})}}$
  and 
  $\semg{\sigma(p')}=\semg{\sigma(x)\at(\sum\limits_{c\in\CC}c(\varCt_1,\ldots,\varCt_{m})\minus{g(\meta{t_1},\ldots,\meta{t_n})})}=
  \semg{\sigma(x)}\cap\semg{\sum\limits_{c\in\CC}c(\varCt_1,\ldots,\varCt_{m})\minus{g(\meta{t_1},\ldots,\meta{t_n})}}=
  \semg{\sigma(x)}\cap(\semg{\sum\limits_{c\in\CC}c(\varCt_1,\ldots,\varCt_{m})}\setminus\semg{{g(\meta{t_1},\ldots,\meta{t_n})}}=
  \semg{\sigma(x)}\cap(\TC\setminus\semg{{g(\meta{t_1},\ldots,\meta{t_n})}}=
  (\semg{\sigma(x)}\cap\TC)\setminus(\semg{\sigma(x)}\cap\semg{{g(\meta{t_1},\ldots,\meta{t_n})}})=
  \semg{\sigma(x)}\setminus(\semg{\sigma(x)}\cap\semg{{g(\meta{t_1},\ldots,\meta{t_n})}})=
  (\semg{\sigma(x)}\setminus\semg{\sigma(x)})\cup(\semg{\sigma(x)}\setminus\semg{{g(\meta{t_1},\ldots,\meta{t_n})}})=
  \semg{\sigma({x})}\setminus\semg{{g(\meta{t_1},\ldots,\meta{t_n})}}$
  and thus,   $\semg{\sigma(p)}=\semg{\sigma(p')}$.

  If $p=\bot \minus f({v_1},\ldots,{v_n})$ then the rule $(\rmMinusBM)$ is applied
  and $p'=\bot$. We have
  $\semg{\sigma(p)}=\semg{\bot\minus{f({v_1},\ldots,{v_n})}}=
  \semg{\bot}\setminus\semg{f({v_1},\ldots,{v_n})}=
  \emptyset=
  \semg{\bot}=\semg{\sigma(p')}
  $

  If $p=({\vm} \plus {\wm}) \minus f({v_1},\ldots,{v_n})$ then the rule $(\rmMinusPM)$ is applied
  and $p'=({\vm} \minus f({v_1},\ldots,{v_n})) \plus ({\wm} \minus f({v_1},\ldots,{v_n}))$. We have
  $\semg{\sigma(p)}=\semg{\sigma({\vm}\plus{\wm})\minus{f({v_1},\ldots,{v_n})}}=
  \semg{(\sigma({\vm})\plus\sigma({\wm}))\minus{f({v_1},\ldots,{v_n})}}=
  \semg{\sigma({\vm})\plus\sigma({\wm})}\setminus\semg{f({v_1},\ldots,{v_n})}=
  (\semg{\sigma({\vm})}\cup\semg{\sigma({\wm})})\setminus\semg{f({v_1},\ldots,{v_n})}=
  \semg{\sigma({\vm})}\setminus\semg{f({v_1},\ldots,{v_n})}\cup\semg{\sigma({\wm})}\setminus\semg{f({v_1},\ldots,{v_n})}=
  \semg{\sigma({\vm})\minus{f({v_1},\ldots,{v_n})}}\cup\semg{\sigma({\wm})\minus{f({v_1},\ldots,{v_n})}}=
  \semg{(\sigma({\vm})\minus{f({v_1},\ldots,{v_n})})\plus{(\sigma({\wm})\minus{f({v_1},\ldots,{v_n})})}}=
  \semg{\sigma(p')}
  $

  If $p=f({v_1},\ldots,{v_n}) \minus f({t_1},\ldots,{t_n})$ then the
  rule $(\rmMinusFF)$ is applied and $p'= f({v_1}\minus
  {t_1},\ldots,{v_n}) \plus \cdots \plus f({v_1},\ldots,{v_n}\minus
  {t_n})$.
  We have
  $\semg{\sigma(p)}=
  \semg{\sigma(f(\vm_1,\ldots,\vm_n))\minus{f(\tm_1,\ldots,\tm_n)}}=
  \semg{\sigma(f(\vm_1,\ldots,\vm_n))}\setminus\semg{f(\tm_1,\ldots,\tm_n)}=
  \semg{f(\sigma(\vm_1),\ldots,\sigma(\vm_n))}\setminus\semg{f(\tm_1,\ldots,\tm_n)}$
  and since patterns are linear we can use the recursive definition and we get 
  $\semg{\sigma(p)}=
  \{f(\vm'_1,\ldots,\vm'_n)\mid(\vm'_1,\ldots,\vm'_n)\in\semg{\sigma(\vm_1)}\times\ldots\times\semg{\sigma(\vm_n)}\}
  \setminus
  \{f(\tm'_1,\ldots,\tm'_n)\mid(\tm'_1,\ldots,\tm'_n)\in\semg{\tm_1}\times\ldots\times\semg{\tm_n}\}$
  and similarly
  $\semg{\sigma(p')}=
  \semg{\sigma(f(\vm_1\minus \tm_1,\ldots,\vm_n))\plus\ldots\plus\sigma(f(\vm_1,\ldots,\vm_n\minus \tm_n))}=
  \semg{\sigma(f(\vm_1\minus \tm_1,\ldots,\vm_n))}\cup\ldots\cup\semg{\sigma(f(\vm_1,\ldots,\vm_n\minus \tm_n))}=
  \semg{f(\sigma(\vm_1)\minus \tm_1,\ldots,\sigma(\vm_n)))}\cup\ldots\cup\semg{f(\sigma(\vm_1),\ldots,\sigma(\vm_n)\minus \tm_n)}=
  \{f(\vm'_1,\ldots,\vm'_n)\mid(\vm'_1,\ldots,\vm'_n)\in\semg{\sigma(\vm_1)\minus\tm_1}\times\ldots\times\semg{\sigma(\vm_n)}\}
  \cup \ldots \cup
  \{f(\vm'_1,\ldots,\vm'_n)\mid(\vm'_1,\ldots,\vm'_n)\in\semg{\sigma(\vm_1)}\times\ldots\times\semg{\sigma(\vm_n)\minus \tm_n}\}$.
  We have to show that  $\forall s, s\in\semg{\sigma(p)}  \text{ iff } s\in\semg{\sigma(p')}$.
  Take $s\in\semg{\sigma(p)}$, then 
  $s\in\{f(\vm'_1,\ldots,\vm'_n)\mid(\vm'_1,\ldots,\vm'_n)\in\semg{\sigma(\vm_1)}\times\ldots\times\semg{\sigma(\vm_n)}\}$ 
  and 
  $s\not\in\{f(\tm'_1,\ldots,\tm'_n)\mid(\tm'_1,\ldots,\tm'_n)\in\semg{\tm_1}\times\ldots\times\semg{\tm_n}\}$  
  and thus, 
  $s=f(\wm_1,\ldots,\wm_n)$ s.t. $\forall i,
  \wm_i\in\semg{\sigma(\vm_i)}$ and $\exists j,
  \wm_j\not\in\semg{\tm_j}$. Consequently,
  $\wm_j\in\semg{\sigma(\vm_j)}\setminus\semg{\tm_j}$ and thus,
  $f(\wm_1,\ldots,\wm_n)\in\{f(\vm'_1,\ldots,\vm'_n)\mid(\vm'_1,\ldots,\vm'_n)\in\semg{\vm_1}\times\ldots\times\semg{\sigma(\vm_j)\minus\tm_j}\times\ldots\times\semg{\vm_n}\}$
  $\subseteq\semg{\sigma(p')}$. 
  We can show similarly that if $s\in\semg{\sigma(p')}$ then $s\in\semg{\sigma(p)}$.

  If $p=f({v_1},\ldots,{v_n}) \minus g({t_1},\ldots,{t_n})$ with
  $f\not=g$, then the rule $(\rmMinusFG)$ is applied and
  $p'=f({v_1},\ldots,{v_n})$.  We have that
  $\semg{\sigma(p)}=
  \semg{\sigma(f(\vm_1,\ldots,\vm_n))\minus{g(\tm_1,\ldots,\tm_n)}}=
  \semg{\sigma(f(\vm_1,\ldots,\vm_n))}\setminus\semg{g(\tm_1,\ldots,$ $\tm_n)}=
  \semg{f(\sigma(\vm_1),\ldots,\sigma(\vm_n))}\setminus\semg{g(\tm_1,\ldots,\tm_n)}$
  and since patterns are linear we can use the recursive definition and we get 
  $\semg{\sigma(p)}=
  \{f(\vm'_1,\ldots,\vm'_n)\mid(\vm'_1,\ldots,\vm'_n)\in\semg{\sigma(\vm_1)}\times\ldots\times\semg{\sigma(\vm_n)}\}
  \setminus
  \{g(\tm'_1,\ldots,$ $\tm'_n)\mid(\tm'_1,\ldots,\tm'_n)\in\semg{\tm_1}\times\ldots\times\semg{\tm_n}\}=
  \{f(\vm'_1,\ldots,\vm'_n)\mid(\vm'_1,\ldots,\vm'_n)\in\semg{\sigma(\vm_1)}\times\ldots\times\semg{\sigma(\vm_n)}\}=
  \semg{\sigma(p')}$.

  If $p=x\at{\vm}\minus{\wm}$ with $x\in\XX$ then the rule $(\rmMinusAM)$ is applied and
  $p'=x\at({\vm}\minus{\wm})$.  We have $\semg{\sigma(p)}=
  \semg{\sigma(x)\at\sigma({\vm})\minus{\wm}}=
  (\semg{\sigma(x)}\cap\semg{\sigma({\vm})})\setminus\semg{\wm}=
  \semg{\sigma(x)}\cap(\semg{\sigma({\vm})}\setminus\semg{\wm})=
  \semg{\sigma(x)\at(\sigma({\vm})\minus{\wm})}=
  \semg{\sigma(p')}
  $ (we used the fact that intersection with set difference is set difference with intersection).

  If $p={\vm}\minus{x}\at{\wm}$ with $x\in\XX$ then the rule $(\rmMinusMA)$ is applied and
  $p'={\vm}\minus{\wm}$.  We have $\semg{\sigma(p)}=
  \semg{\sigma({\vm})\minus{x}\at{\wm}}=
  \semg{\sigma({\vm})}\setminus(\semg{x}\cap\semg{\wm})=
  \semg{\sigma({\vm})}\setminus(\TC\cap\semg{\wm})=
  \semg{\sigma({\vm})}\setminus\semg{\wm}=
  \semg{\sigma({\vm})\minus{\wm}}=
  \semg{\sigma(p')}
  $.

  We proceed similarly in the case where the reduction takes place at
  a position different from the root one. Thus, we check that the
  property is verified when $p$ is of the form $f(p_1,\ldots,p_n)$,
  ${p_1}\plus{p_2}$, ${p_1}\minus{p_2}$ and ${p_1}\at{p_2}$.
  First, it is easy to check that
  $\fvar{p_i}\subseteq\fvar{p_1,\ldots,p_n}$ for all
  $i\in[1,\ldots,n]$ and
  $\fvar{p_1},\fvar{p_2}\subseteq\fvar{{p_1}\plus{p_2}},\fvar{{p_1}\minus{p_2}},\fvar{{p_1}\at{p_2}}$;
  consequently,\linebreak
 $\dom{\sigma}\cap\fvar{\ovr{p_i}}=\emptyset$, for any
  $i\in[1,\ldots,n]$.

  If $p=f(p_1,\ldots,p_i,\ldots,p_n)$ and
  $p'=f(p_1,\ldots,p_i',\ldots,p_n)$ with
  $p_i\evalM_{\Rr}p_i'$ for some $i\in[1,\ldots,n]$ we have,
  by induction, $\semg{\sigma(p_i)}=\semg{\sigma(p_i')}$.
  Then,
  $\semg{\sigma(p)}=
  \semg{f(\sigma(p_1),\ldots,\sigma(p_i),\ldots,\sigma(p_n))}=
  \{f(u_1,$ $\ldots,u_n)\mid(u_1,\ldots,u_n)\in\semg{\sigma(p_1)}\times\ldots\times\semg{\sigma({p_i})}\times\ldots\times\semg{\sigma(p_n)}\}=
  \{f(u_1,\ldots,u_n)\mid(u_1,\ldots,u_n)\in\semg{\sigma(p_1)}\times\ldots\times\semg{\sigma({p_i'})}\times\ldots\times\semg{\sigma(p_n)}\}=
  \semg{f(\sigma(p_1),\ldots,\sigma(p_i'),\ldots,\sigma(p_n))}=
  \semg{\sigma(p')}$.

  If $p={p_1}\plus{p_2}$ and
  $p'={p_1'}\plus{p_2}$ with
  $p_1\evalM_{\Rr}p_1'$ we have,
  by induction, $\semg{\sigma(p_1)}=\semg{\sigma(p_1')}$.
  Then,
  $\semg{\sigma(p)}=
  \semg{\sigma({p_1})\plus\sigma({p_2})}=
  \semg{\sigma({p_1})}\cup\semg{\sigma({p_2})}=
  \semg{\sigma({p_1'})}\cup\semg{\sigma({p_2})}=
  \semg{\sigma({p_1}')\plus\sigma({p_2})}=
  \semg{\sigma(p')}$.
  We can proceed similarly when
  $p_2\evalM_{\Rr}p_2'$ and when 
  If $p={p_1}\at{p_2}$.

  If $p={p_1}\minus{p_2}$ and
  $p'={p_1'}\at{p_2}$ with
  $p_1\evalM_{\Rr}p_1'$ we have,
  by induction, $\semg{\sigma(p_1)}=\semg{\sigma(p_1')}$.
  Then,
  $\semg{\sigma(p)}=
  \semg{\sigma({p_1})\at{p_2}}=
  \semg{\sigma({p_1})}\cap\semg{{p_2}}=
  \semg{\sigma({p_1'})}\cap\semg{{p_2}}=
  \semg{\sigma({p_1}')\at\sigma({p_2})}=
  \semg{\sigma(p')}$.
  The case where $p_2\evalM_{\Rr}p_2'$ is trivial.
\end{proof}

\compEnc*
\begin{proof}
  We consider the list of rules
  $\LL=[\foo(\ovr{p_1})\ra{t_1},\ldots,$ $\foo(\ovr{p_n})\ra{t_n}]$ and
  thus, we have $\TRcomp(\LL)=\oplus_{k=1}^{n} [\foo(\ovr{q_1^k})\ra t_k
  \otrsep \ldots \otrsep$ $\foo(\ovr{q_m^k})\ra t_k \mid
  \ovr{q_1^k}\plus\ldots\plus\ovr{q_m^k}=\ovr{p_k}\downarrow_{\RrCat}\neq\bot,
  \ovr{q_1^k},\ldots,\ovr{q_m^k} \text{ contain} \text{ no}$ $\text{ symbol }
  \plus]$.
  If $t\eval_{\LL}t'$ then there exist $i\in[1,\ldots,n]$,
  $\omega\in\PPos(t)$ and a substitution $\sigma$ such that $\stt t
  \omega = \foo(\ovr{v})$, $\dom{\sigma}=\mvar{\ovr{p_i}}$, $\ovr{v}
  \in \semg{\sigma(\ovr{p_i})}$, $\ovr{p_j} \nmatcha \ovr{v}, \forall j<i$ and
  $t'=\rmp t \omega {\sigma(t_i)}$.  Consequently,
  $\ovr{v}\not\in\semg{p_j}, \forall j<i$.
  According to Lemma~\ref{lemma:semgStableBySubst}
  $\semg{\tau(\ovr{p_k})}=\semg{\tau(\ovr{q_1^k}\plus\ldots\plus\ovr{q_m^k})}$
  for all $k\in[1,\ldots,n]$ and all substitution $\tau$
  s.t. $\dom{\tau}\cap\fvar{\ovr{p_k}}=\emptyset$ and, in particular,
  for $\tau$  the identity substitution. 
  Since $\ovr{v}\in\semg{\sigma(\ovr{p_i})}$
  then
  $\ovr{v}\in\semg{\sigma(\ovr{q_1^i}\plus\ldots\plus\ovr{q_m^i})}=$
  $\semg{\sigma(\ovr{q_1^i})\plus\ldots\plus\sigma(\ovr{q_m^i})}=$
  $\semg{\sigma(\ovr{q_1^i})}\cup\ldots\cup\semg{\sigma(\ovr{q_m^i})}$.
  Since  $\ovr{v}\not\in\semg{p_j}, \forall j<i$
  then
  $\ovr{v}\not\in\semg{\ovr{q_1^j}\plus\ldots\plus\ovr{q_m^j}}=\semg{\ovr{q_1^j}}\cup\ldots\cup\semg{\ovr{q_m^j}},
  \forall j<i$
  and thus 
  $\ovr{q_k^j}\nmatcha{v}, \forall{j<i},k\in[1,\ldots,m]$. 
  Consequently, one of the rules $\foo(\ovr{q_k^i})\ra t_i$,
  $k\in[1,\ldots,m]$ in $\TRcomp(\LL)$ apply and
  $t\eval_{\TRcomp(\LL)}\rmp t \omega {\sigma(t_i)}$.
  We proceed similarly when considering $t\eval_{\TRcomp(\LL)}t'$.
\end{proof}

\aliasEncoding*
\begin{proof}
  We consider the rule $e:\foo(\rmp{p}{\omega}{x\at q})\ra r$ in $\LL$, with
  $\omega$ such that $\forall\omega'\in\PPos(q),q(\omega')\not=\at$, is
  applied to reduce the term $t$ into $t'$. For simplicity we suppose
  $r$ contains exactly one occurence of $x$, the case where $x$
  doesn't occur or occurs several times in $r$ are handled in exactly
  the same way. 
  We show that the reduction is preserved at each (intermediate) step
  of the transformation: $t\eval_{e}t'$ iff
  $t\eval_{\TRat(e)}t'$ with
  $\TRat(e)=\TRat(\rmp{p}{\omega}{x\at{q}}\ra\rmp{r}{\psi}{x})=\rmp{p}{\omega}{q}\ra\rmp{r}{\psi}{q}$.

  Since $t\eval_{\LL} t'$ using the rule
  $\rmp{p}{\omega}{x\at{q}}\ra\rmp{r}{\psi}{x}$ then there exist
  $\omega\in\PPos(t)$ and a substitution $\sigma$ such that $\stt t
  \omega = \foo({v})$,
  $\dom{\sigma}=\mvar{\rmp{p}{\omega}{x\at{q}}}$,
  ${v}\in\semg{\sigma(\rmp{p}{\omega}{x\at{q}})}$ and $t'=\rmp{t}{\omega}{\sigma(\rmp{r}{\psi}{x})}$. 
  Since $\dom{\sigma}=\mvar{\rmp{p}{\omega}{x\at{q}}}$ and $p$ is a
  constructor pattern then $\semg{\sigma(\rmp{p}{\omega}{x\at{q}})}$
  is a singleton. Moreover
  ${\sigma(\rmp{p}{\omega}{x\at{q}})}={\rmp{\sigma(p)}{\omega}{\sigma(x\at{q})}}={\rmp{\sigma(p)}{\omega}{\sigma(x)\at{\sigma(q)}}}$.
  It is easy to check that since this latter pattern is ground we have
  $\semg{\rmp{\sigma(p)}{\omega}{\sigma(x)\at{\sigma(q)}}}=$
  $\semg{\rmp{\sigma(p)}{\omega}{\sigma(x)}}\cap\semg{\rmp{\sigma(p)}{\omega}{\sigma(q)}}$
  and consequently that we have
  $\semg{\rmp{\sigma(p)}{\omega}{\sigma(x)\at{\sigma(q)}}}=$
  $\semg{\rmp{\sigma(p)}{\omega}{\sigma(x)}}=$ 
  $\semg{\rmp{\sigma(p)}{\omega}{\sigma(q)}}$. Since all these
  semantics are singletons we have that
  $\rmp{\sigma(p)}{\omega}{\sigma(x)}=$
  $\rmp{\sigma(p)}{\omega}{\sigma(q)}$ and ${\sigma(x)}=$
  ${\sigma(q)}$.
  Since we have that $\semg{\sigma(\rmp{p}{\omega}{x\at{q}})}=$
  $\semg{\sigma(\rmp{p}{\omega}{q})}$ then
  ${v}\in\semg{\sigma(\rmp{p}{\omega}{q})}$ and thus $t$  reduces
  {\wrt} to the rule $\rmp{p}{\omega}{q}\ra\rmp{r}{\psi}{q}$ to 
  $\rmp{t}{\omega}{\sigma(\rmp{r}{\psi}{q})}=$ 
  $\rmp{t}{\omega}{\rmp{\sigma(r)}{\psi}{\sigma(q)})}=$
  $\rmp{t}{\omega}{\rmp{\sigma(r)}{\psi}{\sigma(x)})}=$. 
  $\rmp{t}{\omega}{\sigma(\rmp{r}{\psi}{x})}=t'$. 

  We can use the same reasoning for the other direction.
\end{proof}

\orderEnc*
\begin{proof}
  We consider the list of rules
  $\LL=[\foo(\ovr{p_1})\ra{t_1},\ldots,$ $\foo(\ovr{p_n})\ra{t_n}]$ and
  thus, we have $\TRord(\LL)=\cup_{k=1}^{n} 
  \{
  \foo(\ovr{q_1^k})\ra t_k \otrsep \ldots \otrsep$ $\foo(\ovr{q_m^k})\ra t_k \mid
  \ovr{q_1^k}\plus\ldots\plus\ovr{q_m^k}=\ovr{p_k}\minus(\ovr{p_1}\plus\ldots\plus \ovr{p_{k-1}})\downarrow_{\RrCat}\neq\bot,
  \ovr{q_1^k},\ldots,\ovr{q_m^k} \text{ contain no}\text{ symbol } \plus
  \}$.
  If $t\eval_{\LL}t'$ then there exist $i\in[1,\ldots,n]$,
  $\omega\in\PPos(t)$ and a substitution $\sigma$ such that $\stt t
  \omega = \foo(\ovr{v})$, $\dom{\sigma}=\mvar{\ovr{p_i}}$, $\ovr{v}
  \in \semg{\sigma(\ovr{p_i})}$, $\ovr{p_j} \nmatcha \ovr{v}, \forall j<i$
  and $t'=\rmp t \omega {\sigma(t_i)}$.  Consequently,
  $\ovr{v}\not\in\semg{p_j}, \forall j<i$ or equivalently
  $\ovr{v}\not\in\semg{\ovr{p_1}}\cup\ldots\cup\semg{\ovr{p_{i-1}}}$.

  According to Lemma~\ref{lemma:semgStableBySubst} we have
  $\semg{\tau(\ovr{p_k}\minus(\ovr{p_1}\plus\ldots\plus\ovr{p_{k-1}}))}=$
  $\semg{\tau(\ovr{q_1^k}\plus\ldots\plus\ovr{q_m^k})}$
  for all $k\in[1,\ldots,n]$ and all substitution $\tau$
  s.t. $\dom{\tau}\cap\fvar{\ovr{p_k}}=\emptyset$.
  We have
  $\dom{\sigma}=\mvar{\ovr{p_i}}$
  and thus
  $\dom{\sigma}\cap\fvar{\ovr{p_i}\minus(\ovr{p_1}\plus\ldots\plus\ovr{p_{i-1}})}=\emptyset$,
  and since
  $\semg{\sigma(\ovr{p_i}\minus(\ovr{p_1}\plus\ldots\plus\ovr{p_{i-1}}))}=$
  $\semg{\sigma(\ovr{p_i})\minus(\ovr{p_1}\plus\ldots\plus\ovr{p_{i-1}})}=$
  $\semg{\sigma(\ovr{p_i})}\setminus\semg{\ovr{p_1}\plus\ldots\plus\ovr{p_{i-1}}}=$
  $\semg{\sigma(\ovr{p_i})}\setminus(\semg{\ovr{p_1}}\cup\ldots\cup\semg{\ovr{p_{i-1}}})$
  and
  $\semg{\sigma(\ovr{q_1^i}\plus\ldots\plus\ovr{q_m^i})}=$
  $\semg{\sigma(\ovr{q_1^i})\plus\ldots\plus\sigma(\ovr{q_m^i})}=$
  $\semg{\sigma(\ovr{q_1^i})}\cup\ldots\cup\semg{\sigma(\ovr{q_m^i})}$
  then, 
  $\semg{\sigma(\ovr{p_i})}\setminus(\semg{\ovr{p_1}}\cup\ldots\cup\semg{\ovr{p_{i-1}}})=$
  $\semg{\sigma(\ovr{q_1^i})}\cup\ldots\cup\semg{\sigma(\ovr{q_m^i})}$.
  Thus
  $\ovr{v}\in\semg{\sigma(\ovr{p_i})}$ and
  $\ovr{v}\not\in\semg{\ovr{p_1}}\cup\ldots\cup\semg{\ovr{p_{i-1}}}$
  iff
  $\ovr{v}\in\semg{\sigma(\ovr{q_1^i})}\cup\ldots\cup\semg{\sigma(\ovr{q_m^i})}$.
  Consequently, one of the rules $\foo(\ovr{q_k^i})\ra t_i$,
  $k\in[1,\ldots,m]$ in $\TRcomp(\LL)$ apply and
  $t\eval_{\TRcomp(\LL)}\rmp t \omega {\sigma(t_i)}$.
  We proceed similarly when considering $t\eval_{\TRord(\LL)}t'$.
\end{proof}

\section{Meta encoding of the rewriting system $\RrCat$}
\label{ap:metaEncoding}

The meta encoding of the rule schemas in
Figure~\ref{fig:elimComplementAs} is given below in a syntax usable by
{\aprove}/{\TTT}. Both {\aprove} and {\TTT} can be used to prove the
termination of this rewriting system.

\begin{verbatim}
(VAR u u1 u2 v v1 v2 w f g lu lv n m i tail sig)
(RULES
  plus(bot,v) -> v

  plus(v,bot) -> v

  appl(f,lv) -> split(f,lv,nil)
  split(f,cons(u,lu),lv) -> split(f,lu,cons(u,lv))
  split(f,cons(bot,lu),lv) -> bot
  split(f,cons(plus(u1,u2),lu),lv) -> 
        plus(Appl(f,rest(lu,cons(u1,lv))),
             Appl(f,rest(lu,cons(u2,lv))))
  split(f,nil,lv) -> frozen(f,rest(nil,lv))
  rest(lu,nil) -> lu
  rest(lu,cons(u,lv)) -> rest(cons(u,lu),lv)

  at(var(n),bot) -> bot

  at(var(n),plus(u1,u2)) -> 
        plus(at(var(n),u1),at(var(n),u2))

  minus(v, var(n), sig) -> bot
  minus(v, bot, sig) -> v
  minus(w, plus(v1,v2), sig) -> 
        minus(minus(w,v1, sig),v2, sig)

  minus(var(m), appl(f,lv), sig) -> 
        at(var(m),minus(gensum(sig),appl(f,lv), sig))
  gensum(nilsig) -> bot
  gensum(conssig(f,n,tail)) -> 
        plus(appl(f,genvar(n)), gensum(tail))
  genvar(z) -> nil
  genvar(s(n)) -> cons(var(s(n)),genvar(n))

  minus(bot, appl(f,lv), sig) -> bot
  minus(plus(u,v), appl(f,lv), sig) -> 
        plus(minus(u,appl(f,lv), sig),
             minus(v,appl(f,lv), sig))
  minus(appl(f,lu), appl(g,lv), sig) -> appl(f,lu)

  minus(appl(f,lu), appl(f,lv), sig) -> 
        genm7(f,lu,lv,len(lu), sig)
  genm7(f,lu,lv,z, sig) -> bot
  genm7(f,lu,lv,suc(i), sig) -> 
        plus(genm7(f,lu,lv,i, sig), 
             appl(f,diff(lu,lv,suc(i), sig)))
  diff(nil,nil,i, sig) -> nil
  diff(cons(u,lu),cons(v,lv),s(s(i)), sig) -> 
        cons(u,diff(lu,lv,s(i), sig))
  diff(cons(u,lu),cons(v,lv),s(z), sig) -> 
        cons(minus(u,v, sig),lu)
  len(nil) -> z
  len(cons(u,lu)) -> s(len(lu))

  minus(at(var(n),v),w, sig) -> 
        at(var(n),minus(v,w, sig))
  minus(v,at(var(n),w), sig) -> minus(v,w, sig)
)
\end{verbatim}

\end{document}